\documentclass[a4paper,english,11pt]{scrartcl}
\usepackage{authblk}
\pdfoutput=1
\usepackage{tikz}
\usepackage[utf8]{inputenc}
\usepackage[T1]{fontenc}

\usepackage{mathtools,adjustbox}				% bessere Variante von amsmath (das von diesem Paket geladen wird). Zahlreiche Mathesachen
\usepackage{amssymb}				% zahlreiche mathematische Symbole
\usepackage{mathrsfs, stmaryrd}		% Zusätzliche Mathesymbole

\usepackage{amsthm}			% Theorem- und Beweisumgebungen

\usetikzlibrary{chains,shapes.multipart}
\usetikzlibrary{shapes,calc,patterns}
\usetikzlibrary{automata,positioning}

\usetikzlibrary{arrows, decorations.markings}
\definecolor{myred}{RGB}{220,43,25}
\definecolor{mygreen}{RGB}{0,146,64}
\definecolor{myblue}{RGB}{0,143,224}

% Tikz-Bilder nicht jedesmal neu rendern - benötigt Schalter -shell-escape
%\usetikzlibrary{external}
%\tikzexternalize[prefix=i/]

\usetikzlibrary{shapes}

\usepackage{pgfplots}
\pgfplotsset{compat=newest} 

\usepgfplotslibrary{fillbetween}

% TikZ for figures
%%%%%%%%%%%%%%%%%%%%%%%%%%%%%%%%%%%%%%%%%%%%%%%%%%%%%%%%%
\tikzstyle{vertex} = [shape=circle,draw=black]
\tikzstyle{namedVertex} = [shape=circle,draw=black]
\tikzstyle{edge} = [draw,->,thick]
\tikzstyle{labeledNodeS}=[circle, color=black!75!white, draw, inner sep = 0.1em, minimum size = 1.5em, scale=1.25]
\tikzstyle{normalEdge}=[very thick, >=stealth]

%%%%%%%% ALGORTIHMS %%%%%%%%%%%%%%%%%

%\usepackage{algorithm}
%\usepackage{algpseudocode}
%\algrenewcommand{\algorithmiccomment}[1]{\hskip3em$\slash\slash$ #1}
%\newcommand{\LineFor}[2]{\State\algorithmicfor\ {#1}\ \algorithmicdo\ {#2} \algorithmicend\ \algorithmicfor}
%
%\usepackage{listings}			% Anzeige von Sourcecode

% Pseudocode:
\usepackage[linesnumbered,ruled,nokwfunc,longend]{algorithm2e}
\DontPrintSemicolon

% Programmcode
\usepackage{listings}
% Set lagnuage and other options for Code snippets
\lstset{language=Python}
\lstset{tabsize=4}
\lstset{basicstyle=\small\ttfamily}
\lstset{frame=Tb, captionpos=b}
\lstset{xleftmargin=1em, xrightmargin=1em, aboveskip=2\medskipamount}
% Umlaute in Code:
\lstset{literate=%
	{Ã}{{\"O}}1
	{Ã}{{\"A}}1
	{Ã}{{\"U}}1
	{Ã¼}{{\"u}}1
	{Ã¤}{{\"a}}1
	{Ã¶}{{\"o}}1
}
\usepackage{color}
\definecolor{purplekeywords}{rgb}{0.5,0,0.33}
\definecolor{greencomments}{rgb}{0,0.5,0}
\definecolor{bluestrings}{rgb}{0,0,1}
\definecolor{types}{rgb}{0.17,0.57,0.68}
\lstset{
	commentstyle=\color{greencomments},
	morekeywords={},
	keywordstyle=\color{purplekeywords}\textbf,
	stringstyle=\color{bluestrings},
}

%%%%%% REFERENCES %%%%%%%%%%%%%%%%%%

% Automatische Referenzen mit Namen
\usepackage[bookmarks=false,colorlinks=true, linkcolor=blue, urlcolor=blue, citecolor=blue, breaklinks=true]{hyperref}
\usepackage{cleveref}			% Referenzen mit Name

\crefname{cons}{constraint}{constraints}
\Crefname{cons}{Constraint}{Constraints}
\creflabelformat{cons}{(#2#1#3)}
\crefname{claim}{claim}{claims}
\Crefname{claim}{Claim}{Claims}

%%%%%% Theorem-Umgebungen %%%%%%%%%%

\theoremstyle{definition}
\newtheorem{defn}{Definition}[section]

\theoremstyle{plain}

\newtheorem{prop}[defn]{Proposition}
\newtheorem{cor}[defn]{Corollary}
\newtheorem{lemma}[defn]{Lemma}
\newtheorem{theorem}[defn]{Theorem}

\newtheorem{claim}{Claim}
%\makeatletter\@addtoreset{claim}{defn}\makeatother
\newenvironment{proofClaim}[1][]{\ifthenelse{\equal{#1}{}}{\begin{proof}}{\begin{proof}[#1]}\renewcommand\qedsymbol{\ensuremath{\blacksquare}}}{\end{proof}}

% Restating a Claim (with a given number) in the appendix

% Restating a Lemma (with a given number) in the appendix

\theoremstyle{remark}
\newtheorem{remark}[defn]{Remark}
\newtheorem{obs}[defn]{Observation}

\usepackage{array}					% Verbesserte Implementierung der tabular und array-Umgebungen

\usepackage{braket}					% \Set{ ... | ... } und \set{ ... | ... }
\usepackage{bm}						% fette Symbole in Math-Umgebung (\bm)
\usepackage{nicefrac} 				% TODO: Schönere Brüche?

\allowdisplaybreaks					% erlaubt Seitenumbrüche in align*-Umgebung (aber nicht align, gather, ...)

% TODO: ticz
%\usetikzlibrary{babel} %Vermeidet Konflikte babel, v.a. für Notation\arrowvert[r, "Text"]

% Zahlenmengen:
\newcommand{\IN}{\mathbb{N}}	% Natürliche Zahlen
\newcommand{\INo}{\IN_0}		%	mit Null
\newcommand{\INs}{\IN^\ast}		%	ohne Null
	% Ganze Zahlen
\newcommand{\IQ}{\mathbb{Q}}	% Rationale Zahlen
\newcommand{\IR}{\mathbb{R}}
\newcommand{\R}{\mathbb{R}}	% Reelle Zahlen
	% Komplexe Zahlen
	% Körper

\newcommand{\BigO}{\mathcal{O}}

% TODO Kaligrafische Buchstaben?

% TODO...
		% Potenzmenge

\newcommand{\floor}[1]{\left\lfloor#1\right\rfloor}
\newcommand{\abs}[1]{\left|#1\right|}

%	\newcommand{\dist}[2]{\mathrm{dist}_{#2}(#1)}

% Supresses qed-symbol in current environment
\newcommand{\noqed}{\let\qed\relax}

%%%%%%% BIBIOGRAPHIE %%%%%%%%%%%%%%%

%\usepackage[backend=bibtex,style=alphabetic]{biblatex}
%%%%% TODO

% Nicht alle Einträge im Literaturverzeichnis darstellen:
%\AtEveryBibitem{\clearfield{doi}} 
%\AtEveryBibitem{\clearfield{issn}} 
%\AtEveryBibitem{\clearname{editor}} 

% URL in eigener Zeile und etwas eingerückt darstellen (vermeidet overfullboxes)
%\DeclareFieldFormat{formaturl}{\newline\phantom{mmm}#1}
%\renewbibmacro*{url}{%
%	\iffieldundef{url}{}{%
%		\printtext[formaturl]{%
%			\printfield{url}}%
%}}

%%%%%%%%%%% MISC %%%%%%%%%%%%

\usepackage{comment}			% Umgebungen für Textabschnitte die nur in bestimmten Versionen erscheinen
\usepackage{textcomp}			% Textsymbole

% sidewaystable
\usepackage{rotating}

\usepackage{setspace}

\usepackage[font={small,it}]{caption} % Kleinere Captions

% Transparente Farben
\usepackage{transparent}
\usepackage{color}
\usepackage{graphicx}

\usepackage[all]{xy}
\usepackage{lmodern}
\usepackage{tabto}

\usepackage[babel]{csquotes}

%%%%%%%%%%%%%%%%%%%%%%%%%%%%%%%
\usepackage{framed,enumitem}

\setlist[itemize]{leftmargin=1.4em}
\setlist{nosep}

\clubpenalty=10000
\widowpenalty=10000
\displaywidowpenalty=10000

%%%%%%%%%%%%%%%%%%%%%%%%%%%%%%%%%

% Additional commands:

% Notation for edge load
\newcommand{\edgeLoad}[1][]{%
	\ifthenelse{\equal{#1}{}}
	{F^{\Delta}}
	{F^{\Delta}_{#1}}
}

% Notation for termination time
\newcommand{\termTime}[1][]{%
	\ifthenelse{\equal{#1}{}}
	{\Theta}
	{\Theta_{\mathrm{#1}}}
}

% Big O notation:
\newcommand{\bigO}{\mathcal{O}}
\newcommand{\bigOm}{\Omega}

% Network
\newcommand{\network}{\mathcal{N}}

% 3SAT
\newcommand{\ThreeSAT}{\ensuremath{\mathtt{3SAT}}}

% Charakteristische Funktion:
\usepackage{dsfont} 
\newcommand{\CharF}[1][]{%
	\ifthenelse{\equal{#1}{}}
	{\mathds{1}}
	{\mathds{1}_{#1}}
}

% Outgoing edges from node v
\newcommand{\edgesLeaving}[1]{\delta^+_{#1}}
% Incoming edges to node v
\newcommand{\edgesEntering}[1]{\delta^-_{#1}}

% Left- and ride-side derivatives
\newcommand{\rDeriv}[1]{\partial_+ #1}
\newcommand{\lDeriv}[1]{\partial_- #1}

%%draft-options:
%\usepackage[ngerman, textsize=small, textwidth=4.5cm, color=darkgreen,disable]{todonotes}
%\definecolor{darkgreen}{rgb}{0.2,0.8,0.55} % Darf erst nach dem Einbinden von todonotes erstellt werden (alternativ - tikz einbinden...)
%\newcommand{\lukas}[1]{\todo[color=darkgreen]{Lukas: #1}}
%\newcommand{\lukasI}[1]{\todo[inline,color=darkgreen]{Lukas: #1}}
%\newcommand{\tobias}[1]{\todo[color=red]{Tobias: #1}}
%\makeatletter
%\renewcommand{\todo}[2][]{\tikzexternaldisable\@todo[#1]{#2}\tikzexternalenable}
%\makeatother

\usepackage[left=1.1in, right=1.1in, top=1.2in, bottom=1.5in]{geometry}

\usepackage[numbers]{natbib}
% Do not link author names ( https://tex.stackexchange.com/questions/67423/how-to-force-hyperref-to-reference-only-cite-and-not-citeauthor-or-citeyear )
\usepackage{etoolbox}
\makeatletter
\pretocmd{\NAT@citexnum}{\@ifnum{\NAT@ctype>\z@}{\let\NAT@hyper@\relax}{}}{}{}
\makeatother

\title{A Finite Time Combinatorial Algorithm for \\ Instantaneous Dynamic Equilibrium Flows}

\author{Lukas Graf}
\author{Tobias Harks\thanks{The research of the authors was funded by the Deutsche Forschungsgemeinschaft (DFG, German Research Foundation) - HA 8041/1-1 and HA 8041/4-1.}}
\affil{\small Augsburg University, Institute of Mathematics, 86135 Augsburg\\
\href{mailto:lukas.graf@math.uni-augsburg.de}{\{\texttt{lukas.graf,tobias.harks\}@math.uni-augsburg.de}}}

\begin{document}
\maketitle	
\begin{abstract}
	Instantaneous dynamic equilibrium (IDE) is a standard
	game-theoretic concept in dynamic traffic assignment 
	 in which individual flow particles 
	myopically select en route currently shortest paths towards their
	destination. We analyze IDE within the  Vickrey bottleneck model, where current travel times along a path consist of the physical travel times plus the sum of waiting times in all the queues along a path.
	Although IDE have been studied for decades, several fundamental questions regarding equilibrium
	computation and complexity are not well understood.
	In particular, all existence results and computational
	methods are based on fixed-point theorems and numerical discretization schemes and no exact finite
	time algorithm for equilibrium computation is known to date.
	As our main result we show that a natural extension algorithm
	needs only finitely many phases to converge
	leading to the first finite time combinatorial
	algorithm computing an IDE. We complement this result
	by several hardness results showing that computing
	IDE with natural properties is NP-hard.
\end{abstract}

%!TEX root = ../article-Algo.tex

\section{Introduction}

Flows over time or dynamic flows are an important mathematical concept in network flow problems with many real world applications such as dynamic traffic assignment, production systems and communication networks (e.g., the Internet). In such applications, flow particles that are sent over an edge require a certain amount of time to travel through each edge and when routing decisions are being made, the dynamic flow
propagation leads to later effects in other parts of the network.
A key characteristic of such applications, especially in traffic assignment, is that the network edges have a limited flow capacity which, when exceeded, leads to congestion. This phenomenon can be captured by the \emph{fluid queueing model} due to Vickrey~\cite{Vickrey69}. The model is based on a directed graph $G=(V,E)$, where every edge $e$ has an associated physical transit time $\tau_e\in \IR_+$ and a maximal rate capacity $\nu_e\in\IR_+$.  If flow enters an edge with higher rate than its capacity the excess particles start to form a queue at the edge's tail, where they wait until they can be forwarded onto the edge (cf. \Cref{fig:QueueingModel}). Thus, the total travel time experienced by a single particle traversing an edge $e$ is the sum of the time spent waiting in the queue of $e$ and the physical transit time $\tau_e$.

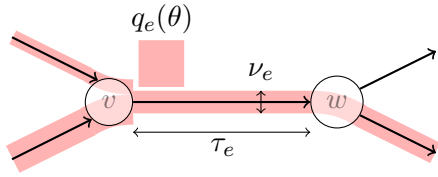
\begin{figure}\centering
	\begin{adjustbox}{max width=.5\textwidth}
		\begin{tikzpicture}
	\newcommand{\flowColor}{red!30}

	\node[namedVertex,opacity=0] (v) at (0,0) {$v$};
	\node[namedVertex,opacity=0] (w) at (3,0) {$w$};

	\draw[line width=.2cm,\flowColor] ($(v.150)+(.01,.2)$) -- +(-1,.5);
	\draw[line width=.2cm,\flowColor] ($(v.150)+(0,.2)$) to[out=330,in=180] ($(v.east)+(0,.2)$);
	\draw[line width=.4cm,\flowColor] ($(v.210)+(.01,-.1)$) -- +(-1,-.5);
	\draw[line width=.4cm,\flowColor] ($(v.210)+(0,-.1)$) to[out=30,in=180] ($(v.east)+(0,-.1)$);
	
	\draw[line width=.3cm,\flowColor] (v) -- (w);
	\draw[line width=.6cm,\flowColor] ($(v)+(.69,.2)$) -- +(0,.62);
	
	\draw[line width=.3cm,\flowColor] (w.west) to[out=0,in=150] (w.330);
	\draw[line width=.3cm,\flowColor] ($(w.330)+(-.01,0)$) -- +(1,-.5);
	
	\draw[edge,thick] (v) -- (w);
	
	\draw[edge,<-] ($(v.150)+(.13,.13)$) -- +(-1.13,.57);
	\draw[edge,<-] ($(v.210)+(.05,-.07)$) -- +(-1.05,-.53);
	
	\draw[edge] (w.30) -- +(1,.5);
	\draw[edge] (w.330) -- +(1,-.5);
	
	\node () at ($(v)+(.7,1.1)$){$q_e(\theta)$};
	
	\draw[<->] (2,-.15cm) -- +(0,.3cm);
	\node () at (2,.4) {$\nu_e$};
	
	\draw[<->] ($(v.east)+(0,-.4)$) -- ($(w.west)+(0,-.4)$);
	\node () at ($(v)!.5!(w) + (0,-.6)$) {$\tau_e$};
	
	\node[namedVertex,fill=white,fill opacity=.5] () at (v) {$v$};
	\node[namedVertex,fill=white,fill opacity=.5] () at (w) {$w$};
\end{tikzpicture}
	\end{adjustbox}
	\caption{An edge $e=vw$. As the inflow rate at node $v$ exceeds the edge's capacity, a queue forms at its tail.}\label{fig:QueueingModel}
\end{figure}

This physical flow model then needs to be enhanced with a \emph{behavioral model} prescribing the actions of flow particles.  There are two
main standard behavioral models in the  traffic assignment literature 
known as  \emph{dynamic equilibrium  (DE)} (cf.~Ran and Boyce~\cite[\S~V-VI]{Ran96}) and \emph{instantaneous dynamic equilibrium (IDE)} (\cite[\S~VII-IX]{Ran96}).
Under DE, flow particles have complete
information on the state of the network for all points in time
(including the  future evolution of all flow particles) and based on this information travel along a shortest path. The full information assumption is usually justified by assuming that the game
is played
repeatedly and a DE is then an attractor of a learning process. 
The behavioral model of IDE is based on the idea that drivers are informed in
real-time about the current traffic situation and, if beneficial,
reroute instantaneously no matter how good or bad that route
will be in hindsight. Thus, at every point in time and at every decision node, flow only enters those edges that lie on a currently shortest path towards the respective sink.
This concept assumes far less information (only the network-wide queue lengths which are continuously measured) and leads to a distributed
dynamic using only present information that is readily available via real-time information. IDE has been proposed already in the late 80's 
(cf.~Boyce, Ran and LeBlanc~\cite{BoyceRL95,RanBL93} and \citeauthor*{FrieszLTW89}~\cite{FrieszLTW89}).

A line of fairly recent works starting with Koch and Skutella~\cite{Koch11} and Cominetti, Correa and Larr\'{e}~\cite{CominettiCL15} derived
very elegant combinatorial characterizations of DE for the fluid queueing model of Vickrey. They derived a complementarity description of DE flows via so-called \emph{thin flows with resetting} which leads to an  $\alpha$-extension property stating that 
for any equilibrium up to time $\theta$, there exists  $\alpha>0$
so that the equilibrium can be extended to time $\theta+\alpha$. An extension that is maximal with respect to $\alpha$ is called a \emph{phase} in the construction of an  equilibrium and the existence of equilibria on the whole $\R_+$ then follows by a limit argument over the phases.
In the same spirit, Graf, Harks and Sering~\cite{GHS18} established 
a similar characterization for IDE flows and also derived an $\alpha$-extension property.

For both models (DE or IDE), it is an open question whether for constant inflow rates and a finite time horizon, a \emph{finite number of phases} suffices to construct  
an equilibrium, see \cite{CominettiCL15,Koch11} and~\cite{GHS18}.
This problem remains even unresolved for single-source single-sink series-parallel graphs as explicitly mentioned by Kaiser~\cite{Kaiser20}. 
Proving  finiteness of the number of phases would imply an exact finite time
algorithm. Such an algorithm is not known to date neither
for DE nor for IDE.\footnote{Algorithms for DE or IDE computation used in the transportation
science literature are \emph{numerical}, that is, only
approximate equilibrium flows are computed given a certain numerical precision,
see the related work for a more detailed comparison.}
More generally, the computational complexity of equilibrium computation
is widely open.

\subsection{Our Contribution and Proof Techniques}
In this paper, we study IDE flows and derive algorithmic and
computational complexity results.
%\paragraph{Finite Extension Algorithm.}
As our main result we settle the key question
regarding finiteness of the $\alpha$-extension algorithm.
\begin{framed}
Theorem~\ref{thm:TerminationOfGenConstructionAlg}:
For single-sink networks with piecewise constant inflow rates for a finite time horizon,
there is an $\alpha$-extension algorithm computing an IDE
after finitely many extension phases. This implies the first
finite time combinatorial exact algorithm computing IDE
within the Vickrey model.
\end{framed}
The proof of our result is based on the following ideas.
We first consider the case of \emph{acyclic}
networks and use a topological order of vertices 
in order to schedule the  extension phases in the algorithm.
The key argument for the finiteness of the number of extension phases
is that for a single node $v$ and any interval with linearly changing distance labels
of nodes closer to the sink and constant inflow rate into $v$ this flow can be redistributed 
to the outgoing edges in a finite number of phases of constant outflow rates from $v$.
We show this using the properties (derivatives) of suitable edge label functions for the outgoing edges (see the graph in  \Cref{fig:Functions-hi-dist}).
The overall finiteness of the algorithm follows by induction over the nodes and time.
We then generalize to arbitrary single-sink networks
by considering \emph{dynamically} changing topological orders
depending on the current set of active edges.
Finally, a closer inspection of the proofs also enables us to give an explicit upper bound on the number of extension steps in the order of
	\[\BigO\left(P\Big(2(\Delta+1)^{4^\Delta+1}\Big)^{2L\cdot \abs{E}\cdot \abs{V}\cdot T/\tau_{\min}^2}\right),\]
where $P$ is the number of constant phases of the network inflow rates, $\Delta$ the maximum out degree in the network, $T$ the termination time, $L$ an upper bound on the absolute value of the derivatives of the distance labels depending on the network inflow rates and the edge capacities of the given network and $\tau_{\min}$ the shortest physical transit time of all edges of the given network.

We then turn to the computational complexity of IDE flows. Our first result here is a lower bound on the output complexity of any algorithm. We construct an instance in which the unique IDE flow oscillates with a changing periodicity (see \Cref{fig:CompComplexity}).
\begin{framed}
Theorem~\ref{thm:Complexity}:
	There are instances for which the output complexity of an IDE flow is not polynomial in the encoding size of the instance, even if we are allowed to use periodicity to reduce the encoding size of the flow.
\end{framed} 
We also show that several natural decision problems about the existence of IDE flows with certain properties are NP-hard.

\begin{framed}
	Theorem~\ref{thm:NPHardness}: The following decision problems are all NP-hard:
	\begin{itemize}
		\item Given a specific edge: Is there an IDE using/not using this edge?
		\item Given some time horizon $T$: Is there an IDE that terminates before $T$?
		\item Given some $k\in \IN$: Is there an IDE with at most $k$ phases?
	\end{itemize}
\end{framed}

The proof is a reduction from \ThreeSAT, wherein for any given \ThreeSAT-formula we construct a network (see \Cref{fig:3SAT-Network-Example}) with the following properties: If the \ThreeSAT-formula is satisfiable there exists a quite simple IDE flow, where all flow particles travel on direct paths towards the sink. If, on the other hand, the \ThreeSAT-formula is unsatisfiable all IDE flows in the corresponding network lead to congestions diverting a certain amount of flow into a separate part of the network. Placing different gadgets in this part of networks then allows for the reduction to various decision problems involving IDE flows.

\subsection{Related Work}

The concept of flows over time was studied by Ford and Fulkerson~\cite{Ford62}. Shortly after, Vickrey~\cite{Vickrey69} introduced 
a game-theoretic variant using a deterministic queueing model. Since then, dynamic equilibria have been studied extensively in the transportation science literature, see Friesz et al.~\cite{FrieszLTW89}. New interest in this model was raised after Koch and Skutella~\cite{Koch11} gave a novel characterization of dynamic equilibria in terms of a family of static flows (thin flows)  which was further refined by
 Cominetti, Correa and Larr\'{e} in~\cite{CominettiCL15}. Using this new approach \citeauthor*{Sering2018}~\cite{Sering2018} considered dynamic equlibria in networks with multiple sources or multiple sinks, \citeauthor*{CCO19}~\cite{CCO19} derived a bound on the price of anarchy for dynamic equilibria and \citeauthor*{Sering2019}~\cite{Sering2019} incorporated spillbacks in the fluid queuing model. In a very recent work, Kaiser~\cite{Kaiser20} showed that the thin flows needed for the extension step in computing dynamic equilibria can be determined in polynomial time for series-parallel networks. Several of these papers (\cite{CominettiCL15,CCO19,Kaiser20,Sering2018}) also explicitly mention the problem of possible non-finiteness of the extension steps.
 
 In the traffic assignment literature, the concept of IDE
 was studied  by several papers such as  Ran and Boyce~\cite[\S~VII-IX]{Ran96}, Boyce, Ran and LeBlanc~\cite{BoyceRL95,RanBL93}, Friesz et al.~\cite{FrieszLTW89}. These works develop an optimal control-theoretic formulation and characterize instantaneous user equilibria by Pontryagin's optimality conditions.
For solving the control problem, Boyce, Ran and LeBlanc~\cite{BoyceRL95} proposed to discretize
the model resulting in finite dimensional NLP whose optimal solutions correspond
to approximative IDE. While this approach only
gives an approximative equilibrium, there are further
difficulties. 
The control-theoretic formulation is actually not compatible with the  deterministic queueing model of Vickrey.
In Boyce, Ran and LeBlanc~\cite{BoyceRL95}, a differential equation
per edge governing the cumulative edge flow (state variable) is used. The right-hand side of the
differential equation depends on the exit flow function which
is assumed to be differentiable and strictly positive for any
positive inflow. Both assumptions (positivity and differentiability)
are not satisfied for the Vickrey model. For example, flow entering
an empty edge needs a strictly positive time after which it leaves 
the edge again, thus, violating the strict positiveness of the exit
flow function. More importantly, differentiability of the exit flow function
is not guaranteed for the Vickrey queueing model.
Non-differentiability (or equivalently discontinuity w.r.t. the state variable) is a well-known obstacle in the convergence
analysis of a discretization of the Vickrey model, see for instance \citeauthor*{HAN2013}~\cite{HAN2013}.  It is a priori not clear how to obtain convergence
of a discretization scheme for an arbitrary flow over time (disregading
equilibrium properties) within the Vickrey model.
And while a recent computational study by \citeauthor*{Limit2020}~\cite{Limit2020} shows some positive results with regards to convergence for DE, \citeauthor{Otsubo08}~\cite{Otsubo08} report
``significant discrepancies'' between the continuous and a discretized
solution for the Vickrey model.  To overcome the
discontinuity issue, Han et al.~\cite{HAN2013} reformulated
the model using a PDE formulation. They
obtained a discretized model whose limit points
correspond to dynamic equilibria of the continuous model.
The algorithm itself, however, is numerical in the sense that a precision
is specified and within that precision an approximate
equilibrium is computed.
The overall discretization approach mentioned above stands in line with a class of numerical algorithms based on fixed point iterations computing approximate equilibrium flows within a certain numerical precision, see Friesz and Han~\cite{Friesz19} for a recent survey.

The long term behavior of dynamic equilibria with infinitely lasting constant inflow rate at a single source was studied by  Cominetti, Correa and Olver~\cite{CominettiCO20}. They introduced the concept of a steady state and showed that dynamic equilibria always reach a stable state provided that the network inflow rate is at most the capacity of a minimal $s$-$t$ cut. Later, \Citeauthor*{NashFlowsContUniqLongTerm} showed in~\cite{NashFlowsContUniqLongTerm} that dynamic equilibria reach such stable states even without the condition on the network inflow rate if one relaxes the definition of stable state to also encompass states wherein all queues grow linearly at a fixed rate forever.

Ismaili~\cite{ismaili2017,Ismaili18} considered a discrete version of DE and IDE,
respectively.
He investigated the computational complexity of computing best responses
for DE showing that the best-response optimization problem is not approximable, and that deciding the existence of a Nash equilibrium is complete for the second level of the polynomial hierarchy. In~\cite{Ismaili18} a sequential version of
a discrete routing game is studied and PSPACE hardness results 
for computing an optimal routing strategy are derived. For further
results regarding a discrete packet routing model,
we refer to  Cao et al.~\cite{CaoCCW17}, Scarsini et al.\cite{ScarsiniST18}, Harks et al.~\cite{HarksPSTK18}
and Hoefer et al.~\cite{HoeferMRT11}.

%!TEX root = ../article-Algo.tex

\section{Model and the Extension-Algorithm}\label{sec:Model}

Throughout this paper we always consider networks $\network = (G,(\nu_e)_{e \in E},(\tau_e)_{e \in E},(u_v)_{v \in V\setminus\set{t}}, t)$ given by a directed graph $G=(V,E)$, edge capacities $\nu_e \in \IQ_{>0}$, edge travel times $\tau_e \in \IQ_{>0}$, 
and a single sink node $t \in V$ which is reachable from anywhere in the graph. Every other node $v \in V\setminus\set{t}$ has a corresponding (network) inflow rate $u_v: \IR_{\geq 0} \to \IQ_{\geq 0}$ indicating for every time $\theta \in \IR_{\geq 0}$ the rate $u_v(\theta)$ at which the infinitesimal small agents enter the network at node $v$ and start traveling through the graph until they leave the network at the common sink node $t$. We will assume that these network inflow rates are right-constant step functions with bounded support and finitely many, rational jump points and denote by $P \in \INs$ the total number of jump points for all network inflow rates. 

A \emph{flow over time} in $\network$ is a tuple $f = (f^+,f^-)$ where $f^+,f^-: E \times \IR_{\geq 0} \to \IR_{\geq 0}$ are integrable functions. For any edge $e \in E$ and time $\theta \in \IR_{\geq 0}$ the value $f^+_e(\theta)$ describes the \emph{(edge) inflow rate} into $e$ at time $\theta$ and $f^-_e(\theta)$ is the \emph{(edge) outflow rate} from $e$ at time $\theta$. 
For any such flow over time $f$ we define the \emph{cumulative (edge) in- and outflow rates} $F^+$ and $F^-$ by
\begin{align*}
	F^+_e(\theta) \coloneqq \int_{0}^{\theta}f^+_e(\zeta)d\zeta \quad\text{ and }\quad F^-_e(\theta) \coloneqq \int_{0}^{\theta}f^-_e(\zeta)d\zeta, 	
\end{align*}
respectively. The queue length of edge $e$ at time $\theta$ is then defined as
\begin{align}\label{Def:QueueLength}
	q_e(\theta) \coloneqq F^+_e(\theta)-F^-_e(\theta+\tau_e). 
\end{align}

Such a flow $f$ is called a \emph{feasible flow} for the given set of inflow rates $u_v: \IR_{\geq 0} \to \IQ_{\geq 0}$, if it satisfies the following \cref{eq:FeasibleFlow-FlowConservation,eq:FeasibleFlow-FlowConservationSink,eq:FeasibleFlow-NoOutflowBeforeTau,eq:FeasibleFlow-QueueOpAtCap}.
The \emph{flow conservation constraints} are modeled for all nodes $v \neq t$ as
\begin{align}\label[cons]{eq:FeasibleFlow-FlowConservation} 
	\sum_{e \in \edgesLeaving{v}}f^+_e(\theta) - \sum_{e \in \edgesEntering{v}}f^-_e(\theta) = u_v(\theta) & \quad \text{ for all } \theta \in \IR_{\geq 0},
\end{align}
where $\edgesLeaving{v} := \set{vu \in E}$ and $\edgesEntering{v} := \set{uv \in E}$ are the sets of outgoing edges from $v$ and incoming edges into $v$, respectively. For the sink node $t$ we require
\begin{align}\label[cons]{eq:FeasibleFlow-FlowConservationSink} 
	\sum_{e \in \edgesLeaving{t}}f^+_e(\theta) - \sum_{e \in \edgesEntering{t}}f^-_e(\theta) \leq 0
\end{align}
and for all edges $e \in E$ we always assume 
\begin{align}\label[cons]{eq:FeasibleFlow-NoOutflowBeforeTau}
f_e^-(\theta) = 0 &\text{ for all } \theta < \tau_e.
\end{align}
Finally we assume that the queues operate at capacity which can be modeled by
\begin{align}	\label[cons]{eq:FeasibleFlow-QueueOpAtCap} 
f_e^-(\theta + \tau_e) = 
\begin{cases}
\nu_e, & \text{ if } q_e(\theta) > 0 \\
\min\set {f^+_e(\theta), \nu_e}, & \text{ if } q_e(\theta) \leq 0
\end{cases}  
& \quad \text{ for all }  e \in E, \theta \in \IR_{\geq 0}.
\end{align}

Following the definition in \cite{GHS18} we call a feasible flow an IDE flow if whenever a particle arrives at a node $v \neq t$, it can only ever enter an edge that is the first edge on a currently shortest $v$-$t$ path. In order to formally describe this property we first define the \emph{current} or \emph{instantaneous travel time} of an edge $e$ at $\theta$ by
\begin{align}
	c_e(\theta) \coloneqq \tau_e + \frac{q_e(\theta)}{\nu_e}.\label{Def:InstantaneousTravelTime}
\end{align}
We then define time dependent node labels $\ell_{v}(\theta)$ corresponding to current shortest path distances from $v$ to the sink $t$. For $v\in V$ and $\theta\in \R_{\geq 0}$, define
\begin{equation}
	\ell_{v}(\theta)\coloneqq
	\begin{cases} 
	0, & \text{ for } v=t\\
	\min\limits_{e=vw\in E} \{\ell_{w}(\theta)+c_{e}(\theta)\}, & \text{ else.}
	\end{cases}
\end{equation}
We say that an edge $e=vw$ is \emph{active} at time $\theta$, if $ \ell_{v}(\theta) = \ell_{w}(\theta)+c_{e}(\theta)$, denote the set of active edges by $E_\theta\subseteq E$ and call the subgraph $G[E_{\theta}]$ induced by these edges the \emph{active subgraph}.

\begin{defn}\label{Def:IDE}
	A feasible flow over time $f$ is an \emph{instantaneous dynamic equilibrium (IDE)}, if for all  $\theta\in \R_{\geq 0}$ and $e\in E$ it satisfies 
	\begin{align}\label[cons]{eq:DefIDE-OnlyUseSP}
	f_{e}^+(\theta)>0 \Rightarrow e\in E_\theta.
	\end{align}
\end{defn}

During the computation of an IDE we also need the concept of \emph{partial} flows/IDE that are only defined up to a certain point in time. First, a \emph{partial flow over time} is a tupel $(f^+,f^-)$ such that for every edge $e$ we have two integrable functions $f^+_e: [0,a_e) \to \IR_{\geq 0}$ and $f^-_e: [0,a_e+\tau_e) \to \IR_{\geq 0}$ for some non-negative number $a_e$, satisfying \cref{eq:FeasibleFlow-QueueOpAtCap,eq:FeasibleFlow-NoOutflowBeforeTau} for all $\theta < a_e$. Such a flow is a \emph{feasible (partial) flow up to $\hat{\theta}$ at node $v$} if 
\begin{itemize}
	\item the edge outflow rates for all edges leading towards $v$ are defined at least up to time $\hat{\theta}$, i.e. $a_e + \tau_e \geq \hat{\theta}$ for all $e \in \edgesEntering{v}$,
	\item the edge inflow rates for all edges leaving $v$ are defined up to time $\hat{\theta}$, i.e. $a_e = \hat{\theta}$ for all $e \in \edgesLeaving{v}$ and
	\item \cref{eq:FeasibleFlow-FlowConservation} or \cref{eq:FeasibleFlow-FlowConservationSink}, respectively, holds at $v$ for all $\theta < \hat{\theta}$.
\end{itemize}
A partial flow is a \emph{feasible (partial) flow up to time $\hat{\theta}$}, if it is a feasible partial flow up to time $\hat{\theta}$ at every node. We call such a flow a \emph{partial IDE up to time $\hat{\theta}$}, if additionally \cref{eq:DefIDE-OnlyUseSP} holds for all edges and all times before $\hat{\theta}$. If the given network is acyclic, we can even speak of a \emph{partial IDE up time $\hat{\theta}$ at some node $v$}, denoting a feasible flow up to time $\hat{\theta}$ at $v$, at least up to time $\hat{\theta}$ for all nodes lying on some path from $v$ to the sink $t$ and satisfying \cref{eq:DefIDE-OnlyUseSP} for all $e \in \edgesLeaving{v}$ and $\theta < \hat{\theta}$.

Note that, while the edge inflow rates of a feasible partial flow up to $\hat{\theta}$ are defined only on $[0,\hat{\theta})$, this already determines the queue length functions and, therefore, the instantaneous edge travel times on $[0,\hat{\theta}]$. In particular, for such a flow we can speak about active edges at time $\hat{\theta}$ even though the flow itself is not yet defined at that time.

In \cite[Section 3]{GHS18} the existence of IDE flows in single-sink networks is proven by the following almost constructive argument: A partial IDE up to some time $\hat{\theta}$ can always be extended for some additional proper\footnote{We call an interval $[a,b)$ \emph{proper} if $a < b$.}{} time interval on a node by node basis (starting with the nodes closest to the sink $t$). The existence of IDE for the whole $\IR_{\geq 0}$ then follows by a limit argument. This leads to a natural algorithm for computing IDE flows in single-sink networks, which we make explicit here as \Cref{alg:ConstructionOld}, wherein $b_v^-$ denotes the \emph{gross node inflow rate} at node $v$ defined by setting
\[b_v^-(\theta) \coloneqq \sum_{e \in \edgesEntering{v}}f^-_e(\theta) + u_v(\theta)\]
for all $v \in V \setminus \set{t}$ and $\theta \in [\hat{\theta},\hat{\theta}+\tau_{\min})$, where $\tau_{\min}\coloneqq \min\set{\tau_e | e \in E} > 0$. 

\begin{algorithm}\caption{IDE-Construction Algorithm from \cite{GHS18}}\label{alg:ConstructionOld}
	\KwIn{A single-sink network $\network$ with piecewise constant network inflow rates}
	
	\KwOut{An IDE flow $f$ in $\network$}
	
	Let $f$ be the zero flow and $\theta \leftarrow 0$
	
	\While{not all flow particles have reached the sink $t$}{
		\SetNoFillComment
		\tcc{$f$ is a partial IDE up to time $\theta$}
		
		Let $t=v_1 < v_2 < \dots < v_n$ be a topological order w.r.t. $G[E_{\theta}]$
		
		\For{$i = 2, \dots, n$}{
			Compute $b_{v_i}^-(\theta)$ and determine a constant distribution of this inflow to edges in $\edgesLeaving{v_i}$ such that the used edges remain active for some proper interval\label{alg:ConstructionOld:NodeWiseExtension}
		}
		Determine the largest $\alpha \geq 0$ such that all $b_v^-$ are constant on $(\theta,\theta+\alpha)$ and the set of active edges does not change
		
		Extend $f$ up to time $\theta+\alpha$ with constant edge inflow rates and set $\theta \leftarrow \theta+\alpha$
	}
\end{algorithm} 

For the extension at a single node $v$ in \cref{alg:ConstructionOld:NodeWiseExtension} we can use a solution to the following convex optimization problem, which can be determined in polynomial time using a simple water filling procedure (see \Cref{appendix} for more details):
\begin{align}
	\min\quad&\sum_{e=vw \in \edgesLeaving{v}\cap E_{\theta}} \int_0^{x_e}\frac{g_e(z)}{\nu_e}+\rDeriv{\ell_{w}}(\theta) dz \label{OPT:bvThetaK}\tag{OPT-$b_v^-(\theta)$}\\
	\text{s.t.}\quad&  \sum_{e \in \edgesLeaving{v}\cap E_{\theta}} x_e=b_v^-(\theta), \quad x_e\geq 0 \text{ for all } e \in \edgesLeaving{v}\cap E_{\theta},\notag
\end{align}
where $g_e$ denotes the right side derivative of the queue length function $q_e$ depending on the inflow rate into $e$, i.e. $g_e(z) \coloneqq  z-\nu_e,$ if $q_e(\theta) > 0$ and $g_e(z) \coloneqq  \max\set{z-\nu_e,0}$, otherwise. The right side derivatives $\rDeriv{\ell_{w}}(\theta)$ exist because we only need them for nodes $w$ closer to the sink with respect to the current topological order. And for those we already determined (constant) edge inflow rates for all outgoing edges for some additional proper time interval beginning with $\theta$. The integrand of the objective function is, thus, the right derivative of the shortest instantaneous travel time towards the sink when entering edge $vw$ at time $\theta$ and assuming a constant inflow rate of $z$ into this edge starting at time $\theta$. Using this observation one can show (cf. \cite[Lemma 3.1]{GHS18}) that any solution to \eqref{OPT:bvThetaK} corresponds to a flow distribution to active edges so that for every edge $e = vw \in \edgesLeaving{v}\cap E_{\theta}$ the following condition is satisfied
\begin{align}\label{eq:rDerivConditionForExtension}\begin{split}
		f^+_e(\theta)&>0\quad\implies\quad \rDeriv{\ell_v}(\theta)=\rDeriv{c_e}(\theta)+\rDeriv{\ell_{w}}(\theta) \\
		f^+_e(\theta)&=0\quad\implies\quad \rDeriv{\ell_v}(\theta)\leq \rDeriv{c_e}(\theta)+\rDeriv{\ell_{w}}(\theta).
\end{split}\end{align}
Because the network inflow rates as well as all already constructed edge inflow rates are piecewise constant and the node label functions as well as the queue length functions are continuous, \eqref{eq:rDerivConditionForExtension} ensures that the used edges will remain active for some proper time interval.

It is, however, not obvious whether a finite number of such extension phases suffices to construct an IDE flow for all of $\IR_{\geq 0}$. Since IDE flows always have a finite termination time in single-sink networks (\cite[Theorem 4.6]{GHS18}) it is at least enough to extend the flow for some finite time horizon (in \cite{GH20PoA} we even provide a way to explicitly compute such a time horizon). This leaves the possibility of continuously decreasing lengths of the extension phases as possible reason for \Cref{alg:ConstructionOld} not to terminate within finite time, e.g. some sequence of extension phases of lengths $\alpha_1, \alpha_2, \dots$ such that $\sum_{i=1}^{\infty}\alpha_i$ converges to some point strictly before the IDE's termination time (see \Cref{rem:SmallExtensionPhases} for an example where we can in fact achieve arbitrarily small extension phases). Thus, the question of whether IDE flows can actually be computed was left as an open question in \cite{GHS18}. A first partial answer was found in \cite{Kraus2019}, where finite termination was shown for graphs obtained by series composition of parallel edges. In the following section we give a full answer by showing that the $\alpha$-extension algorithm terminates for all single-sink networks.

%!TEX root = ../article-Algo.tex

\section{Finite IDE-Construction Algorithm}\label{sec:AlgTermination}

In this chapter we will show that IDE flows can be constructed in finite time using \Cref{alg:ConstructionOld} or slight variations thereof. We will first show this only for acyclic networks since there we can use a single constant order of the nodes for the whole construction. Building on that, we will then prove the general case by showing that we can always compute IDE flows while changing the node order only finitely many times.

\subsection{Acyclic Networks}\label{sec:AlgTerminationAcyclicNetworks}

For each extension step, \Cref{alg:ConstructionOld} takes a partial IDE and determines a network-wide constant extension of all edge in- and outflow rates. This flow distribution then continues until an event (change of gross node inflow rate or change of the set of active edges) \emph{anywhere} in the network requires a new flow split. In \cite{GHS18} such a maximal extension is called a \emph{phase} of the constructed IDE. After each phase, one then has to determine a new topological order with respect to the active subgraph at the beginning of the next phase.

For an acyclic network, we can instead use a single static topological order of the nodes with respect to the whole graph, which is then in particular a topological order with respect to any possible active subgraph. This allows us to rearrange the order of the extension steps: 
Considering the nodes according to the fixed topological order, at each node, we then already know the gross node inflow rate for the whole interval $[\theta,\theta+\tau_{\min})$ \emph{as well as} the flow distribution for all nodes closer to the sink over the same time interval. Thus, we have enough information to determine a (possibly infinite) sequence of extensions covering the whole interval $[\theta,\theta+\tau_{\min})$, where each extension is defined through constant flow distributions at this node. Within this sequence, each extension lasts until an event at the current node happens, which forces us to compute a new flow distribution. 
We call such a maximal extension using one constant flow distribution at a single node a \emph{local phase}.
The restructured version of the extension algorithm is formalized in \Cref{alg:Construction}. 

\begin{algorithm}\caption{IDE-Construction Algorithm for acyclic networks}\label{alg:Construction}
	\KwIn{An acyclic single-sink network $\network$ with piecewise constant network inflow rates}
	
	\KwOut{An IDE flow $f$ in $\network$}
	
	Choose $T \in \IQ$ large enough such that all IDE flows in $\network$ terminate before $T$
	
	Let $f$ be the zero flow, $\theta \leftarrow 0$ and $t = v_1 < \dots < v_n$ a topological order
	
	\For{$k = 0, \dots, \floor{T/\tau_{\min}}$}{
		\SetNoFillComment
		\tcc{$f$ is a partial IDE up to time $\theta = k\tau_{\min}$}
		
		\For{$i = 2, \dots, n$}{
			Compute the piecewise constant gross node inflow function $b_{v_i}^-$ for the interval $[\theta,\theta+\tau_{\min})$
			
			Distribute this inflow for the whole interval to active edges in $\edgesLeaving{v_i}$ using maximal local phases of constant flow distribution\label{alg:Construction:FlowDistribution}
		}
		$\theta \leftarrow \theta + \tau_{\min}$
	}
\end{algorithm} 

\begin{obs}
	For acyclic networks both variants of the general algorithm (\Cref{alg:ConstructionOld} and \Cref{alg:Construction}) construct the same IDE provided that they use the same tie-breaking rules. Thus, showing that one of them terminates in finite time, also proves the same for the other variant.
\end{obs}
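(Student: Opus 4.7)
The plan is to fix a single topological order $t = v_1 < v_2 < \cdots < v_n$ of the whole acyclic graph $G$ and run both algorithms using this one order. This is legal for Algorithm~\ref{alg:Construction} by design, and for Algorithm~\ref{alg:ConstructionOld} it is legal as well: every topological order of $G$ is automatically a topological order of every subgraph $G[E_\theta]$, so the same order can be reused in every global phase. With this choice in place, it suffices to show by induction over time that both algorithms produce the same partial IDE on $[0,T]$, where $T$ is any finite termination horizon.

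The key point is that, once the tie-breaking rule is fixed, the flow distribution chosen at a node $v_i$ at a time $\theta$ is uniquely determined by only two pieces of data: the gross node inflow rate $b_{v_i}^-(\theta)$ and the right-side derivatives $\rDeriv{\ell_{w}}(\theta)$ for the successors $w$ of $v_i$ in the current active subgraph, since both algorithms compute the distribution as the unique water-filling solution of~\eqref{OPT:bvThetaK}. By the chosen topological order, every such $w$ satisfies $w < v_i$ and has therefore already been processed at time $\theta$ in either algorithm; and the outflow rate $f_e^-(\theta)$ on any incoming edge $e = u v_i$ depends only on inflow values on $[0,\theta - \tau_e] \subseteq [0,\theta - \tau_{\min}]$, i.e., on strictly earlier times that are already fixed by the induction hypothesis. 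Hence both algorithms feed \eqref{OPT:bvThetaK} exactly the same input data at node $v_i$ at time $\theta$, and therefore return the same flow distribution.

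A double induction -- on the outer index $k\in\INo$ and, within each $k$, on the node index $i$ -- then yields that the two partial IDE flows agree on $[0,T]$. The main subtlety to spell out is that Algorithm~\ref{alg:ConstructionOld} ends a global phase as soon as \emph{any} node in the network has an event, whereas Algorithm~\ref{alg:Construction} stays at a single node and may insert several local phases within one $\tau_{\min}$-interval; but this only means that the two algorithms partition the \emph{same} underlying flow into different collections of maximal constant-inflow intervals. Given this equivalence of the constructed flows, the termination claim is immediate by a simple counting argument: if Algorithm~\ref{alg:ConstructionOld} uses $K$ global phases on $[0,T]$, then within each phase every node contributes at most one constant local phase, so Algorithm~\ref{alg:Construction} uses at most $K\cdot n$ local phases; conversely, every time at which a global phase of Algorithm~\ref{alg:ConstructionOld} ends is a time at which some local phase of Algorithm~\ref{alg:Construction} also ends, so if the latter is finite so is the former.
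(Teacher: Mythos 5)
Your argument follows essentially the same route the paper intends for this observation (which it states without a formal proof): fix one static topological order, note that it is a topological order for every active subgraph, and argue by double induction that at each node and time both variants feed the water-filling routine identical data (gross inflow, active outgoing edges, queue states, successor label slopes) and hence, under identical tie-breaking, choose identical distributions; the two runs then merely partition the same flow into different constant intervals. One small inaccuracy worth fixing is your final ``conversely'' clause: a global phase of \Cref{alg:ConstructionOld} can end because an \emph{unused} edge leaves the set of active edges, and such an event need not end any local phase of \Cref{alg:Construction}, so global phase ends are not in general a subset of local phase ends. The finiteness transfer from \Cref{alg:Construction} to \Cref{alg:ConstructionOld} still holds, but for a slightly different reason: finitely many local phases make all edge in- and outflow rates piecewise constant with finitely many breakpoints, hence all queue length and label functions are piecewise linear with finitely many breakpoints, so the set of active edges and the gross node inflow rates change only finitely often on $[0,T]$, which bounds the number of global phases.
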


Using the water filling procedure (\Cref{Alg:FindExtension}) we can compute an IDE compliant flow distribution with constant edge-inflow rates at a node $v_i$ for any interval wherein the inflow into node $v_i$ is constant, the labels on all the nodes $w$ with $v_iw \in \edgesLeaving{v_i}$ change linearly and the set of active edges leaving $v_i$ remains constant. Thus, it suffices to show that in \cref{alg:Construction:FlowDistribution} we can always cover the extension interval $[\theta,\theta+\tau_{\min})$ with a finite number of local phases. We will show this by induction over $k \in \INo$ and $i \in [n]$ using the following key lemma:

\begin{lemma}\label{lemma:SmallestExtension}
	Let $\network$ be a single-sink network on an acyclic graph with some fixed topological order on the nodes, $v$ some node in $\network$ and $\theta_1 < \theta_2 \leq \theta_1 + \tau_{\min}$ two points in time. If $f$ is a partial flow over time in $\network$ such that
	\begin{itemize}
		\item $f$ is a partial IDE up to time $\theta_2$ for all nodes closer to the sink $t$ than $v$ with respect to the fixed topological order,
		\item $f$ is a partial IDE up to time $\theta_1$ for all other nodes,
		\item $b_v^-$ is constant during $[\theta_1,\theta_2)$ and
		\item the labels at the nodes reachable via direct edges from $v$ are affine functions on $[\theta_1,\theta_2)$,
	\end{itemize}
	then we can extend $f$ to a partial IDE up to time $\theta_2$ at $v$ using a finite number of local phases.
\end{lemma}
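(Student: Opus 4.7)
Label the outgoing edges $e_1 = vw_1,\dots,e_D = vw_D$, and write $m_i := \rDeriv{\ell_{w_i}}$ and $b := b_v^-$, both of which are constant on $[\theta_1,\theta_2)$ by hypothesis. For each $i$ introduce the candidate label function
\[
  h_i(\theta) := \ell_{w_i}(\theta) + \tau_{e_i} + q_{e_i}(\theta)/\nu_{e_i},
\]
so that $\ell_v(\theta) = \min_i h_i(\theta)$ and $e_i \in E_\theta$ iff $h_i(\theta) = \ell_v(\theta)$. Within a local phase at $v$ the inflows $x_i$ are constant, hence each $q_{e_i}$ is piecewise linear with at most one breakpoint (the moment its queue empties), and each $h_i$ is piecewise affine with slope $m_i + (x_i - \nu_{e_i})/\nu_{e_i}$ as long as $q_{e_i} > 0$ or $x_i > \nu_{e_i}$, and slope $m_i$ once $q_{e_i} = 0$ and $x_i \leq \nu_{e_i}$. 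All active edges share the common slope $\lambda := \rDeriv{\ell_v}$, which is the ``water level'' of the water-filling solution to \eqref{OPT:bvThetaK}.

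\textbf{Only queue depletions end a phase.} The first step is to rule out every kind of phase-ending event except queue depletion on a draining active edge. For an inactive edge $e_j$, the first-order optimality conditions for \eqref{OPT:bvThetaK} at $x_j = 0$ force $m_j - 1 \geq \lambda$ when $q_{e_j} > 0$ and $m_j \geq \lambda$ when $q_{e_j} = 0$ (these are the integrand values in \eqref{OPT:bvThetaK} at $z = 0^+$). In both situations $\rDeriv{h_j} \geq \rDeriv{\ell_v}$, so the gap $h_j - \ell_v$ is non-decreasing on the phase, even after the queue on $e_j$ itself empties; so inactive edges cannot activate. An active edge $e_i$ could leave $E_\theta$ only if its slope jumps strictly above $\lambda$, which under constant $x_i$ can happen only when its queue depletes. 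Thus a local phase ends at $\theta_2$, or when the queue on some \emph{active, draining} edge $e_a$ (i.e.\ one with $\lambda < m_a$, equivalently $x_a < \nu_{e_a}$) empties.

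\textbf{Water-level monotonicity and finiteness.} The crux is to show that $\lambda$ strictly increases at every such depletion event. At a depletion on $e_a$, the marginal-cost curve in \eqref{OPT:bvThetaK} for $e_a$ changes from $z \mapsto m_a + (z-\nu_{e_a})/\nu_{e_a}$ to $z \mapsto m_a + \max(z-\nu_{e_a},0)/\nu_{e_a}$, while every other edge's curve is unchanged. At the old water level $\lambda < m_a$, the new $e_a$ would contribute $0$ instead of $\nu_{e_a}(1+\lambda-m_a) > 0$, so the total would fall below $b$; restoring the balance requires the new water level $\lambda' > \lambda$. Since $\lambda$ is constant inside a phase and strictly increases across each phase boundary, the map $\theta \mapsto \lambda(\theta)$ is non-decreasing in time. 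Once $\lambda \geq m_e$ for some edge $e$, that edge can never re-enter the draining regime: either $q_e > 0$ with $x_e \geq \nu_e$ (queue non-shrinking) or $q_e = 0$ with $x_e \leq \nu_e$ (queue pinned at $0$). Consequently each of the $D = |\edgesLeaving{v}|$ outgoing edges can trigger at most one depletion on $[\theta_1,\theta_2)$, so the number of local phases is bounded by $D + 1$, which is finite.

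\textbf{Main obstacle.} The delicate step is the monotonicity of $\lambda$ together with the a priori exclusion of every non-depletion transition mechanism. Both rest on the case analysis of $\rDeriv{h_i}$ across the possible regimes (queue positive/zero, inflow above/below capacity), which is precisely the graphical content of the functions displayed in \Cref{fig:Functions-hi-dist}; in particular one has to be careful about tie-breaking at the boundary values $\lambda = m_i$ and $\lambda = m_i - 1$, and about simultaneous depletions, which can be collapsed into a single transition without harming the argument. Everything else is then a mechanical consequence of the water-filling structure.
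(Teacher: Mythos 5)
There is a genuine gap, and it sits exactly where you place the ``crux'': the claim that inactive edges can never become active during $[\theta_1,\theta_2)$ is false. The first-order conditions of \eqref{OPT:bvThetaK} (equivalently \eqref{eq:rDerivConditionForExtension}) only constrain edges in $\edgesLeaving{v}\cap E_\theta$, i.e.\ edges that are \emph{active} at the start of the local phase; an inactive edge $e_j$ does not appear in the optimization at all, so nothing forces $m_j-1\geq\lambda$ or $m_j\geq\lambda$ for it. What is true for $e_j$ is only that its \emph{level} satisfies $h_j(\theta)>\ell_v(\theta)$; its slope is $m_j-1$ while its leftover queue drains and $m_j$ afterwards, and since $m_j=\rDeriv{\ell_{w_j}}$ may be small or negative, $h_j$ can decrease while $\ell_v$ increases (the active edges are building queues), so the gap $h_j-\ell_v$ can shrink to zero and the edge activates mid-interval. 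This is not a pathology: in the paper's own illustration (\Cref{fig:Functions-hi-dist}) three of the five local phases are triggered by edges becoming newly active, the same edge activates twice, and the slope of $\ell_v$ \emph{decreases} across a phase boundary. Hence both of your key assertions --- that only depletions end phases and that the water level $\lambda$ is non-decreasing --- fail, and with them the $D+1$ bound on the number of local phases.

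Once re-activations are admitted, the real difficulty is to exclude an infinite alternation of activations and deactivations, and this needs a different argument from water-level monotonicity. The paper handles it by showing that $h_i$ is convex on any interval where $vw_i$ is inactive (\Cref{claim:SmallestExtension:InactiveHareConvex}), that $\rDeriv{\ell_v}$ is bounded below by the smallest left derivative among touching edges (\Cref{claim:SmallestExtension:DerivativeChanges}), that $\ell_v'$ takes only finitely many values $\ell_{I,J}$ (\Cref{claim:SmallestExtension:OnlyFinitelyManySubphases}), and then that every recurrence of a given minimal slope value must be initiated by a \emph{distinct} edge that was inactive throughout the gap (\Cref{claim:SmallestExtension:LowestDerivativeIntervalsStartWithDifferentEdges}); an induction over the slope values plus the bound of $2p$ phases per constant-slope interval (\Cref{claim:SmallestExtension:ConstantDerivativeIntervalsHaveFinitelyManySubPhases}) then gives finiteness, with a bound of order $(\Delta+1)^{4^\Delta+1}$ rather than $\Delta+1$. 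Your analysis of what happens \emph{within} a phase and at a depletion of an edge that stays active is fine, but the argument as a whole does not go through without an ingredient of this kind.
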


\begin{figure}\centering
	\begin{minipage}[c]{0.55\textwidth}
		\begin{adjustbox}{max width=\textwidth}
			\begin{tikzpicture}[scale=2]
	\newcommand{\colOne}{green!80!blue!80}
	\newcommand{\colTwo}{blue!80}
	\newcommand{\colThree}{purple!80!blue!80}
	\newcommand{\colFour}{red!80}
	\newcommand{\colFive}{orange!80}
	\newcommand{\flowColor}{gray!50}

	\newcommand{\ellGraph}[5]{
		\draw[fill=white,white] (#1) rectangle +(1,.7);
		\draw[edge] ($(#1)+(.1,.2)$) -- +(0,.4);
		\draw[edge] ($(#1)+(.1,.2)$) -- +(.9,0);
		\node () at ($(#1)+(.15,.1)$) {\small$\theta_1$};
		\node () at ($(#1)+(.9,.1)$) {\small$\theta_2$};
		
		\draw[#2,thick] ($(#1)+(.1,#3)$) --node[above,#2]() {$\ell_{w_#5}$} ($(#1)+(.9,#4)$);
	}

%	\node[namedVertex,gray] (s) at (0,0) {$s$};
	
	\node (u1) at (2,1) {};
	\node (u2) at (1.5,0) {};
	\node (u3) at (2,-1) {};

	\node (v) at (2.5,0) {};	
	
	\node (w1) at (4,1.2) {};
	\node (w2) at (4,.4) {};
	\node (w3) at (4,-.4) {};
	\node (w4) at (4,-1.2) {};
	
	\node[namedVertex,gray] (t) at (6,0) {$t$};
		
	\draw[\flowColor,line width=.6cm] ($(v)+(0,1.2)$) -- (w1.center);
	\draw[\flowColor,line width=.6cm] ($(v)+(.5,1.4)$) -- +(0,.3);
	\draw[\flowColor,line width=.092cm] ($(v)+(0,.4)$) -- (w2.center);
	\draw[\flowColor,line width=.6cm] ($(v)+(.5,.5)$) -- +(0,.1);
	\draw[\flowColor,line width=.075cm] ($(v)+(0,-.4)$) -- (w3.center);
	\draw[\flowColor,line width=.6cm] ($(v)+(.5,-.3)$) -- +(0,.15);
	\draw[\flowColor,line width=.15cm] ($(v)+(0,-1.2)$) -- (w4.center);
	\draw[\flowColor,line width=.6cm] ($(v)+(.5,-1.1)$) -- +(0,.05);
	\draw[edge,\colFour] ($(v)+(0,-1.2)$) -- (w4);	
	
	\draw[edge,gray,dashed] (w1) to[out=0,in=90] (t);
	\draw[edge,gray,dashed] (w2) to[out=0,in=150] (t);
	\draw[edge,gray,dashed] (w3) to[out=0,in=-150] (t);
	\draw[edge,gray,dashed] (w4) to[out=0,in=-90] (t);
	
	\node[namedVertex,\colOne,fill=white] (w1u) at (w1) {$w_1$};
	\node[namedVertex,\colTwo,fill=white] (w2u) at (w2) {$w_2$};
	\node[namedVertex,\colThree,fill=white] (w3u) at (w3)  {$w_3$};
	\node[namedVertex,\colFour,fill=white] (w4u) at (w4) {$w_4$};
	
	\draw[ultra thick,edge,\colOne,dotted] ($(v)+(0,1.2)$) -- (w1u);
	\draw[ultra thick,edge,\colTwo,dashdotted] ($(v)+(0,.4)$) -- (w2u);
	\draw[ultra thick,edge,\colThree,dashed] ($(v)+(0,-.4)$) -- (w3u);
	\draw[ultra thick,edge,\colFour] ($(v)+(0,-1.2)$) -- (w4u);

	\draw[\flowColor,line width=.4cm] (u2.center) -- (v);
	\draw[edge] (u2.center) -- ($(v)+(-.25,0)$);

	\draw[fill=white] (2.5,0) ellipse (.25 and 1.5);
	\node() at (v) {$v$};

	\ellGraph{4.5,.9}{\colOne,dotted}{.4}{.55}{1}
	\ellGraph{4.5,.1}{\colTwo,dashdotted}{.35}{.6}{2}
	\ellGraph{4.5,-.7}{\colThree,dashed}{.3}{.4}{3}
	\ellGraph{4.5,-1.5}{\colFour}{.27}{.23}{4}
\end{tikzpicture}
		\end{adjustbox}
	\end{minipage}\hfill
	\begin{minipage}[c]{0.44\textwidth}
		\caption{The situation in \Cref{lemma:SmallestExtension}: We have an acyclic graph with some topological order on the nodes (here from left to right) and a partial IDE up to some time $\theta_2$ for all nodes closer to the sink $t$ than $v$ and up to some earlier time $\theta_1$ for $v$ and all nodes further away than $v$ from $t$. Additionally, over the interval $[\theta_1,\theta_2)$ the edges leading into $v$ have a constant outflow rate and the nodes $w_i$ all have affine label functions $\ell_{w_i}$. The edges $vw_i$ start with some current queue lengths $q_{vw_i}(\theta_1) \geq 0$.}\label{fig:situationOfLemmaSmallestExtension}
	\end{minipage}
\end{figure}

\begin{proof}
	We want to show that a finite number of maximal constant extensions of the flow at node $v$ using the water filling algorithm  is enough to extend the given flow for the whole interval $[\theta_1,\theta_2)$ at node $v$. So, let $f$ be the flow after an, a priori, infinite number of extension steps getting us to a partial IDE up to some $\hat{\theta} \in (\theta_1,\theta_2]$ at node $v$.
	
	Let $\edgesLeaving{v} = \set{vw_1, \dots, vw_p}$ be the set of outgoing edges from $v$. Then, by the lemma's assumption, the label functions $\ell_{w_i}: [\theta_1, \theta_2) \to \IR_{\geq 0}$ are affine functions and, since we extended $f$ at node $v$ up to $\hat{\theta}$, the queue length functions $q_{vw_i}$ are well defined on the interval $[\theta_1,\hat{\theta})$. Thus, for all $i \in [p]$ we can define functions
		\[h_i: [\theta_1, \hat{\theta}) \to \IR_{\geq 0}, \theta \mapsto \tau_{vw_i} + \frac{q_{vw_i}(\theta)}{\nu_{vw_i}} + \ell_{w_i}(\theta)\]
	such that $h_i(\theta)$ is the shortest  current travel time to the sink $t$ for a particle entering edge $vw_i$ at time $\theta$. Then, for any edge $vw_i \in \edgesLeaving{v}$ and any time $\theta \in [\theta_1,\hat{\theta})$ we have
	\begin{align}\label{eq:SmallestExtension:CharActiveEdge}
		vw_i \in E_{\theta} \iff h_i(\theta) = \min\set{h_j(\theta) | j \in [p]} = \ell_v(\theta).
	\end{align}
	
	\begin{figure}\centering
		\begin{adjustbox}{max width=.95\textwidth}
			\begin{tikzpicture}
	\newcommand{\colOne}{green!80!blue!80}
	\newcommand{\colTwo}{blue!80}
	\newcommand{\colThree}{purple!80!blue!80}
	\newcommand{\colFour}{red!80}
	\newcommand{\colFive}{orange!80}
	\newcommand{\flowColor}{gray!50}
	
	\newcommand{\eOne}{5.1}
	\newcommand{\eTwo}{3.4}
	\newcommand{\eThree}{1.7}
	\newcommand{\eFour}{0}

	\newcommand{\edges}[2]{		
		\node (v1) at (0,\eOne) {};
		\node (v2) at (0,\eTwo) {};
		\node (v3) at (0,\eThree) {};
		\node (v4) at (0,\eFour) {};

		\node[namedVertex,\colOne,fill=white] (w1) at (4,\eOne) {$w_1$};
		\node[namedVertex,\colTwo,fill=white] (w2) at (4,\eTwo) {$w_2$};
		\node[namedVertex,\colThree,fill=white] (w3) at (4,\eThree) {$w_3$};
		\node[namedVertex,\colFour,fill=white] (w4) at (4,\eFour) {$w_4$};
		
		\draw[edge,ultra thick,dotted,\colOne] (v1) -- (w1);
		\draw[edge,ultra thick,dashdotted,\colTwo] (v2) -- (w2);
		\draw[edge,ultra thick,dashed,\colThree] (v3) -- (w3);
		\draw[edge,ultra thick,\colFour] (v4) -- (w4);
		
		\draw[fill=white] (0,2.55) ellipse (.5 and 3);
		\node (v) at (0,2.55) {$v$};
		
		\node[rectangle] () at (2,-.9) {\ifthenelse{\equal{#2}{end}}{End}{Start} of local phase #1};
	}

	\begin{scope}[xshift=0cm,yshift=0cm]
		% e_1
		\draw[\flowColor,line width=.8cm] (0,\eOne) -- (3,\eOne);
		\draw[\flowColor,line width=.8cm] (1,\eOne+.5) -- (1,\eOne+1.5);
		% e_2
		\draw[\flowColor,line width=.13cm] (0,\eTwo) -- (4,\eTwo);
		\draw[\flowColor,line width=.8cm] (1,\eTwo+.2) -- (1,\eTwo+.45);
		% e_3
		\draw[\flowColor,line width=.1cm] (0,\eThree) -- (4,\eThree);
		\draw[\flowColor,line width=.8cm] (1,\eThree+.2) -- (1,\eThree+.5);
		% e_4
		\draw[\flowColor,line width=.2cm] (0,\eFour) -- (4,\eFour);
		\draw[\flowColor,line width=.8cm] (1,\eFour+.3) -- (1,\eFour+.4);
		
		\edges{1}{}
		
		\draw[\flowColor,line width=.4cm] ($(v)+(-1.5,0)$) -- ($(v)+(-1.1,0)$) to[out=0,in=180] (-.2,\eFour) -- (.45,\eFour);
	\end{scope}
	
	\begin{scope}[xshift=6.5cm,yshift=0cm]
		% e_1
		\draw[\flowColor,line width=.8cm] (0,\eOne) -- (4,\eOne);
		\draw[\flowColor,line width=.8cm] (1,\eOne+.5) -- (1,\eOne+1.1);
		% e_2
		\draw[\flowColor,line width=.13cm] (0,\eTwo) -- (4,\eTwo);
		\draw[\flowColor,line width=.8cm] (1,\eTwo+.2) -- (1,\eTwo+.35);
		% e_3
		\draw[\flowColor,line width=.1cm] (0,\eThree) -- (4,\eThree);
		\draw[\flowColor,line width=.8cm] (1,\eThree+.2) -- (1,\eThree+.3);
		% e_4
		\draw[\flowColor,line width=.2cm] (0,\eFour) -- (4,\eFour);
		\draw[\flowColor,line width=.8cm] (1,\eFour+.3) -- (1,\eFour+.7);
		
		\edges{2}{}
		
		\draw[\flowColor,line width=.15cm] ($(v)+(-1.5,.075)$) -- ($(v)+(-.8,.075)$) to[out=0,in=180] (-.2,\eThree) -- (.45,\eThree);
		\draw[\flowColor,line width=.25cm] ($(v)+(-1.5,-.125)$) -- ($(v)+(-1.1,-.125)$) to[out=0,in=180] (-.2,\eFour) -- (.45,\eFour);
	\end{scope}
	
	\begin{scope}[xshift=13cm,yshift=0cm]
		% e_1
		\draw[\flowColor,line width=.8cm] (0,\eOne) -- (4,\eOne);
		\draw[\flowColor,line width=.8cm] (1,\eOne+.5) -- (1,\eOne+.7);
		% e_2
		\draw[\flowColor,line width=.13cm] (0,\eTwo) -- (4,\eTwo);
		\draw[\flowColor,line width=.8cm] (1,\eTwo+.2) -- (1,\eTwo+.25);
		% e_3
		\draw[\flowColor,line width=.1cm] (0,\eThree) -- (4,\eThree);
		\draw[\flowColor,line width=.8cm] (1,\eThree+.2) -- (1,\eThree+.6);
		% e_4
		\draw[\flowColor,line width=.2cm] (0,\eFour) -- (4,\eFour);
		\draw[\flowColor,line width=.8cm] (1,\eFour.3) -- (1,\eFour+.9);
		
		\edges{3}{}
		
		\draw[\flowColor,line width=.35cm] ($(v)+(-1.5,0)$) -- ($(v)+(-1.1,0)$) to[out=0,in=180] (-.2,\eOne) -- (.45,\eOne);
		\draw[\flowColor,line width=.05cm] ($(v)+(-1.5,-.18)$) -- ($(v)+(-1.1,-.18)$) to[out=0,in=180] (-.2,\eFour) -- (.45,\eFour);
	\end{scope}
	
	\begin{scope}[xshift=0cm,yshift=-7.5cm]
		% e_1
		\draw[\flowColor,line width=.8cm] (0,\eOne) -- (4,\eOne);
		% e_2
		\draw[\flowColor,line width=.13cm] (0.8,\eTwo) -- (4,\eTwo);
		% e_3
		\draw[\flowColor,line width=.1cm] (0,\eThree) -- (4,\eThree);
		\draw[\flowColor,line width=.8cm] (1,\eThree+.2) -- (1,\eThree+.4);
		% e_4
		\draw[\flowColor,line width=.2cm] (0,\eFour) -- (4,\eFour);
		\draw[\flowColor,line width=.8cm] (1,\eFour+.3) -- (1,\eFour+.4);
		
		\edges{4}{}
		
		\draw[\flowColor,line width=.4cm] ($(v)+(-1.5,0)$) -- ($(v)+(-1.1,0)$) to[out=0,in=180] (-.2,\eFour) -- (.45,\eFour);
	\end{scope}

	\begin{scope}[xshift=6.5cm,yshift=-7.5cm]
		% e_1
		\draw[\flowColor,line width=.8cm] (1.3,\eOne) -- (4,\eOne);
		% e_2
		\draw[\flowColor,line width=.13cm] (1.8,\eTwo) -- (4,\eTwo);
		% e_3
		\draw[\flowColor,line width=.1cm] (0,\eThree) -- (4,\eThree);
		\draw[\flowColor,line width=.8cm] (1,\eThree+.2) -- (1,\eThree+.3);
		% e_4
		\draw[\flowColor,line width=.2cm] (0,\eFour) -- (4,\eFour);
		\draw[\flowColor,line width=.8cm] (1,\eFour+.3) -- (1,\eFour+.8);
		
		\edges{5}{}
		
		\draw[\flowColor,line width=.15cm] ($(v)+(-1.5,.075)$) -- ($(v)+(-.8,.075)$) to[out=0,in=180] (-.2,\eThree) -- (.45,\eThree);
		\draw[\flowColor,line width=.25cm] ($(v)+(-1.5,-.125)$) -- ($(v)+(-1.1,-.125)$) to[out=0,in=180] (-.2,\eFour) -- (.45,\eFour);	
	\end{scope}

	\begin{scope}[xshift=13cm,yshift=-7.5cm]
		% e_1
		\draw[\flowColor,line width=.8cm] (2,\eOne) -- (4,\eOne);
		% e_2
		\draw[\flowColor,line width=.13cm] (2.4,\eTwo) -- (4,\eTwo);
		% e_3
		\draw[\flowColor,line width=.1cm] (0,\eThree) -- (4,\eThree);
		\draw[\flowColor,line width=.8cm] (1,\eThree+.2) -- (1,\eThree+.4);
		% e_4
		\draw[\flowColor,line width=.2cm] (0,\eFour) -- (4,\eFour);
		\draw[\flowColor,line width=.8cm] (1,\eFour+.3) -- (1,\eFour+1);
		
		\edges{5}{end}
		
		\draw[\flowColor,line width=.15cm] ($(v)+(-1.5,.075)$) -- ($(v)+(-.8,.075)$) to[out=0,in=180] (-.2,\eThree) -- (.45,\eThree);
		\draw[\flowColor,line width=.25cm] ($(v)+(-1.5,-.125)$) -- ($(v)+(-1.1,-.125)$) to[out=0,in=180] (-.2,\eFour) -- (.45,\eFour);	
	\end{scope}
	
\end{tikzpicture}
		\end{adjustbox}
	
		\vspace{1.5em}
		
		\begin{adjustbox}{max width=.7\textwidth}
			\begin{tikzpicture}[scale=.8]
	\newcommand{\colOne}{green!80!blue!80}
	\newcommand{\colTwo}{blue!80}
	\newcommand{\colThree}{purple!80!blue!80}
	\newcommand{\colFour}{red!80}
	\newcommand{\colFive}{orange!80}
	\newcommand{\flowColor}{gray!60}
	
	\useasboundingbox (-.5,-1) rectangle (18.5,6.1);
	\clip (-.5,-1) rectangle (18.5,6.1);
	
	\draw[\flowColor,line width=6pt] (0,2) -- (4,3) -- (8,3.5) -- (12,3) -- (14,3.5) -- (18,4);
	\node[\flowColor] () at (16,3) {\Large $\mathbf{\ell_v}$};
	
	\node[gray] () at (2,-.5) {local phase 1};
	\draw[dashed,gray] (4,3) -- (4,-.5);
	\node[gray] () at (6,-.5) {local phase 2};
	\draw[dashed,gray] (8,3.5) -- (8,-.5);
	\node[gray] () at (10,-.5) {local phase 3};
	\draw[dashed,gray] (12,3) -- (12,-.5);
	\node[gray] () at (13,-.5) {l.ph. 4};
	\draw[dashed,gray] (14,3.5) -- (14,-.5);
	\node[gray] () at (15.5,-.5) {local phase 5};

	\draw[ultra thick,dotted,\colOne] (0,5.2) node[above right] {$h_1$} -- (8,3.53) -- (12,3.03) -- (18,5);	
	\draw[ultra thick,dashdotted,\colTwo] (0,4.5) node[below right] {$h_2$} -- (10,4) -- (18,8);
	\draw[ultra thick,dashed,\colThree] (0,3.03) node[above right] {$h_3$} -- (4,3.03) -- (8,3.53) -- (14,3.53) -- (18,4.03);
	\draw[ultra thick,\colFour] (0,1.97) node[below right] {$h_4$} -- (4,2.97) -- (8,3.47) -- (12,2.97) -- (14,3.47) -- (18,3.97);

	% Achsen:
	\draw[edge,ultra thick] (0,0) -- (0,6);
	\draw[edge,ultra thick] (0,0) -- (18,0);
	
	\node () at (0,-.5) {$\theta_1$};
	\node () at (18,-.5) {$\theta_2$};
\end{tikzpicture}
		\end{adjustbox}	
		\caption{A possible flow distribution from the node $v$ in five local phases for the situation depicted in \Cref{fig:situationOfLemmaSmallestExtension}. The first six pictures show the flow split for these five local phases. The graph at the bottom shows the corresponding functions $h_i$. The bold gray line marks the graph of the function $\ell_v$. The second, third and fifth local phase all start because an edge becomes newly active (edges $vw_3$, $vw_1$ and $vw_3$ again, respectively). The fourth local phase starts because the queue on the active edge $vw_1$ runs empty. By observation~\ref{obs:TriggerNewPhase} these are the only two possible events which can trigger the beginning of a new local phase. Edge $vw_2$ is inactive for the whole time interval and -- as stated in \Cref{claim:SmallestExtension:InactiveHareConvex} -- has a convex graph. Also, note the slope changes of the functions $h_i$ and $\ell_v$ in accordance with \Cref{claim:SmallestExtension:DerivativeChanges}.}\label{fig:Functions-hi-dist}
	\end{figure}

	We start by stating two important observations and then proceed by showing two key-properties of the functions $h_i$ and $\ell_v$, which are also visualized in \Cref{fig:Functions-hi-dist}:
	
	\begin{enumerate}[label=(\roman*)]
		\item The functions $h_i$ are continuous and piece-wise linear. In particular they are differentiable almost everywhere and their left and right side derivatives $\lDeriv{h_i}$ and $\rDeriv{h_i}$, respectively, exist everywhere. The same holds for the function $\ell_v$.
		\item A new local phase begins at a time $\theta \in [\theta_1, \hat{\theta})$ if and only if at least one of the following two events occurs at time $\theta$: An edge $vw_i$ becomes newly active or the queue of an active edge $vw_i$ runs empty.\label{obs:TriggerNewPhase}
	\end{enumerate}
	
	\begin{claim}\label{claim:SmallestExtension:InactiveHareConvex}
		If an edge $vw_i$ is inactive during some interval $(a,b) \subseteq [\theta_1,\hat{\theta}]$ the graph of $h_i$ is convex on this interval.
	\end{claim}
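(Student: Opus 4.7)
The plan is to exploit the fact that while $vw_i$ is inactive on $(a,b)$, the water filling step used in \cref{alg:Construction:FlowDistribution} never sends any flow onto $vw_i$, so $f^+_{vw_i}(\theta)=0$ throughout $(a,b)$. Starting from this observation, I would directly analyze how the queue length $q_{vw_i}$ evolves on $(a,b)$ and deduce convexity of $h_i$ from its definition
\[
h_i(\theta) \;=\; \tau_{vw_i} \;+\; \frac{q_{vw_i}(\theta)}{\nu_{vw_i}} \;+\; \ell_{w_i}(\theta).
\]

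First I would observe that, by \cref{eq:FeasibleFlow-QueueOpAtCap}, the right derivative of $q_{vw_i}$ at any time $\theta \in (a,b)$ equals $f^+_{vw_i}(\theta) - \nu_{vw_i} = -\nu_{vw_i}$ whenever the queue is positive, and $\max\{f^+_{vw_i}(\theta) - \nu_{vw_i},0\} = 0$ once the queue has run empty. Hence $q_{vw_i}$ is piecewise linear on $(a,b)$: it either stays at $0$ throughout, or it decreases linearly with slope $-\nu_{vw_i}$ until possibly reaching $0$ and then remains at $0$. In particular the right derivative of $q_{vw_i}$ is nondecreasing on $(a,b)$, so $q_{vw_i}$ is convex on that interval.

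Finally, since by assumption $\ell_{w_i}$ is affine on $[\theta_1,\theta_2)$ and thus on $(a,b)$, and since $\tau_{vw_i}/\nu_{vw_i}$ is just a positive constant prefactor, $h_i$ is a sum of a convex function and an affine function on $(a,b)$, hence convex. The only point requiring care is to justify that $f^+_{vw_i}(\theta)=0$ on $(a,b)$; this follows directly from the construction, because in every local phase the water filling procedure distributes the gross inflow $b_v^-$ exclusively among currently active edges, cf.\ \eqref{eq:rDerivConditionForExtension}. No subtle obstacle is anticipated; the claim is essentially a structural consequence of queue dynamics combined with the IDE-compliance of the extension step.
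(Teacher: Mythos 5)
Your proposal is correct and follows essentially the same route as the paper's proof: since the inactive edge receives no inflow, its queue is piecewise linear with at most two pieces (depleting at rate $-\nu_{vw_i}$, then constantly zero), hence convex, and adding the affine $\ell_{w_i}$ plus the constant $\tau_{vw_i}$ keeps $h_i$ convex. The only (harmless) imprecision is calling ``$\tau_{vw_i}/\nu_{vw_i}$'' a prefactor -- the relevant facts are that $\tau_{vw_i}$ is an additive constant and $1/\nu_{vw_i}$ a positive multiplicative factor on $q_{vw_i}$.
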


	\begin{claim}\label{claim:SmallestExtension:DerivativeChanges}
		For any time $\theta$ define $I(\theta) \coloneqq \set{i \in [p] | h_i(\theta) = \ell_v(\theta)}$. Then, we have
		\begin{align}\label{eq:LowerBoundOnRightDerivativeOfEllV1}
			\min\set{\lDeriv{h_i}(\theta) | i \in I(\theta)} \leq \rDeriv{\ell_v}(\theta).
		\end{align}
		If, additionally, no edge becomes newly active at time $\theta$, we also have
		\begin{align}\label{eq:LowerBoundOnRightDerivativeOfEllV2}
			\lDeriv{\ell_v}(\theta) \leq \rDeriv{\ell_v}(\theta).
		\end{align}
	\end{claim}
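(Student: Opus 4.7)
The plan is to reduce both inequalities to the analysis of local phase boundaries. Within any local phase at $v$, the flow distribution at $v$ and the queue states of all outgoing edges are constant, so each $h_i$ is affine with slope $g_i(f^+_i(\theta))/\nu_i + \rDeriv{\ell_{w_i}}(\theta)$, i.e.\ the integrand of \eqref{OPT:bvThetaK} at the current inflow rate. The KKT condition \eqref{eq:rDerivConditionForExtension} forces this slope to equal the common water-filling rate $r$ for every used edge $j$ of the phase, and since $h_j = \ell_v$ throughout the phase, $\ell_v$ is itself affine with this same slope. In particular, at any $\theta$ strictly inside a local phase both inequalities are trivial, so I may assume $\theta$ is a local phase boundary and set $r^- \coloneqq \lDeriv{\ell_v}(\theta)$ and $r^+ \coloneqq \rDeriv{\ell_v}(\theta)$: these are the water-filling rates of the phases ending and starting at $\theta$, respectively. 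Let $J^-$ and $J^+$ denote the corresponding sets of used edges.

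For inequality \eqref{eq:LowerBoundOnRightDerivativeOfEllV2}, assume no edge becomes newly active at $\theta$. By observation (ii) the only remaining trigger for a new phase is that the queues of some active edges run empty. For each such edge $k$, the function $g_k(z)$ gets replaced in the objective of \eqref{OPT:bvThetaK} by the pointwise larger $\max\{z-\nu_k,0\}$, while the set of active edges and the constraint $\sum x_e = b_v^-(\theta)$ stay the same. A standard water-filling argument shows that the common rate achieved by the optimum is monotone nondecreasing in the integrand, so $r^+ \geq r^-$, which is the desired inequality.

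For inequality \eqref{eq:LowerBoundOnRightDerivativeOfEllV1}, first observe that for every $i \in I(\theta)$ the non-negative function $h_i - \ell_v$ attains its value $0$ at $\theta$, so $\lDeriv{h_i}(\theta) \leq r^-$; for $j \in J^-$ (which is contained in $I(\theta)$ by continuity) the KKT identity from the previous phase even gives $\lDeriv{h_j}(\theta) = r^-$. If $r^+ \geq r^-$, any such $j$ already witnesses the inequality. If instead $r^+ < r^-$, then at least one newly active edge $i^*$ must lie in $J^+$---otherwise the new water filling would coincide with its restriction to the edges already active just before $\theta$, and the monotonicity argument of the previous paragraph would force $r^+ \geq r^-$. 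For this $i^*$ the inflow $f^+_{i^*}(\theta^-) = 0$ implies that $\lDeriv{h_{i^*}}(\theta)$ equals the integrand of \eqref{OPT:bvThetaK} for edge $vw_{i^*}$ at $z = 0$; monotonicity of this integrand together with the KKT identity \eqref{eq:rDerivConditionForExtension} for the used edge $i^*$ then yields $\lDeriv{h_{i^*}}(\theta) \leq r^+$, as required. The main subtlety is precisely this second sub-case: when $\ell_v$ has a downward kink at $\theta$ the obvious candidates $j \in J^-$ all fail because they satisfy $\lDeriv{h_j}(\theta) = r^- > r^+$, and one must exploit the structural fact that such a kink can arise only through a freshly activated edge actually being picked up by the new water filling.
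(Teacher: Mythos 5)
Your strategy---reduce everything to local-phase boundaries and compare the water-filling levels $r^-$ and $r^+$ of the old and new phase---is a different route from the paper's, but it has a genuine gap at its load-bearing step. You justify $r^+\geq r^-$ by saying that the only change in \eqref{OPT:bvThetaK} at $\theta$ is that for edges whose queue runs empty the integrand $g_k$ is replaced by the pointwise larger $\max\set{z-\nu_k,0}$, ``while the set of active edges and the constraint stay the same''. But $r^-$ is the level of the water-filling problem solved at the \emph{start} of the previous local phase, and between that time and $\theta$ the data of \eqref{OPT:bvThetaK} can also change in the opposite direction: an edge that received inflow above its capacity while having an empty queue at the start of the phase builds up a queue during the phase (this triggers no new phase), so at $\theta$ its integrand is $(z-\nu_e)/\nu_e+\rDeriv{\ell_{w_e}}(\theta)$, pointwise \emph{smaller} than the $\max$-version that entered the problem defining $r^-$; in addition, edges may have dropped out of the active set during the phase. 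Hence the cited ``monotone nondecreasing in the integrand'' argument does not apply as stated. The inequality $r^+\geq r^-$ is still true, but recovering it needs a bridging argument you do not supply --- for instance, that the old inflow rates remain a level-$r^-$ water-filling solution for the integrands determined by the queue states immediately before $\theta$ (the two regimes of $g_e$ agree above $\nu_e$, so used edges keep the value $r^-$ at their rates, and active edges cannot change regime in the interior of a phase without ending it), after which the pointwise comparison between $\theta^-$ and $\theta$ is legitimate. Since this monotonicity claim is your entire proof of \eqref{eq:LowerBoundOnRightDerivativeOfEllV2} and is also invoked in the decisive sub-case $r^+<r^-$ of \eqref{eq:LowerBoundOnRightDerivativeOfEllV1}, the proof is incomplete as written.

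For comparison, the paper avoids the optimization problem entirely: since $b_v^-$ is constant and after $\theta$ flow may only enter edges active immediately after $\theta$, some such edge $j$ receives weakly more inflow after $\theta$ than before, so $\lDeriv{h_j}(\theta)\leq\rDeriv{h_j}(\theta)=\rDeriv{\ell_v}(\theta)$, which gives \eqref{eq:LowerBoundOnRightDerivativeOfEllV1} at once; \eqref{eq:LowerBoundOnRightDerivativeOfEllV2} follows because without newly active edges every $i\in I(\theta)$ satisfies $\lDeriv{h_i}(\theta)=\lDeriv{\ell_v}(\theta)$. Your remaining steps are essentially sound, with minor blemishes: $h_i$ need not be affine on a whole local phase for \emph{inactive} edges (their queue may empty mid-phase); for the newly active edge $i^*$ the relation between $\lDeriv{h_{i^*}}(\theta)$ and the integrand at $z=0$ is only ``$\leq$'' (not ``$=$'') if the queue of $i^*$ empties exactly at $\theta$, which is all you need; and $J^-$ may be empty when $b_v^-=0$, in which case the witness for the case $r^+\geq r^-$ should be any edge attaining the minimum just before $\theta$ rather than a used edge.
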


	\begin{proofClaim}[Proof of \Cref{claim:SmallestExtension:InactiveHareConvex}]
		By the lemma's assumption $\ell_{w_i}$ is linear on the whole interval. For an inactive edge $vw_i$ its queue length function consists of at most two linear sections: One where the queue depletes at a constant rate of $-\nu_e$ and one where it remains constant $0$. Thus, $h_i$ is convex as sum of two convex functions for any interval, where $vw_i$ is inactive.
	\end{proofClaim}
	
	\begin{proofClaim}[Proof of \Cref{claim:SmallestExtension:DerivativeChanges}]
		To show \eqref{eq:LowerBoundOnRightDerivativeOfEllV1}, let $I'$ be the set of indices of edges active immediately after $\theta$, i.e. 
			\[I' \coloneqq \Set{i \in  I(\theta) | \rDeriv{h_i}(\theta) = \rDeriv{\ell_v}(\theta)}.\]
		Since the total outflow from node $v$ is constant during $[\theta_1,\hat{\theta})$ and flow may only enter edges $vw_i$ with $i \in I'$ after $\theta$, there exists some $j \in I'$, where the inflow rate into $vw_j$ after $\theta$ is the same or larger than before. But then we have $\rDeriv{h_j}(\theta) \geq \lDeriv{h_j}(\theta)$ and, thus,
			\[\min\set{\lDeriv{h_i}(\theta) | i \in I(\theta)} \leq \min\set{\lDeriv{h_i}(\theta) | i \in I'} \leq \lDeriv{h_j}(\theta) \leq \rDeriv{h_j}(\theta) = \rDeriv{\ell_v}(\theta).\]
		If, additionally, no edge becomes newly active at time $\theta$, all edges $vw_i$ with $i \in I'$ have been active directly before $\theta$ as well implying 
			\[\lDeriv{\ell_v}(\theta) = \min\set{\lDeriv{h_i}(\theta) | i \in I(\theta)} \overset{\eqref{eq:LowerBoundOnRightDerivativeOfEllV1}}{\leq} \rDeriv{\ell_v}(\theta). \qedhere\]	
	\end{proofClaim}

	We also need the following observation which is an immediate consequence of the way the water filling algorithm determines the flow distribution (see \Cref{appobs:FlowDistributionDependence}) combined with the lemma's assumption that all label functions $\ell_{w_i}$ have constant derivative during the interval $[\theta_1,\theta_2)$.

	\begin{claim}\label{claim:SmallestExtension:OnlyFinitelyManySubphases}
		There are uniquely defined numbers $\ell_{I,J}$ for all subsets $J \subseteq I \subseteq [p]$ such that $\ell_v'(\theta) = \ell_{I,J}$ within all local phases, where $\set{vw_i | i \in I}$ is the set of active edges in $\edgesLeaving{v}$ and $\set{vw_i | i \in J}$ is the subset of such active edges that also have a non-zero queue during this local phase.\renewcommand\qedsymbol{\ensuremath{\blacksquare}}\qed
	\end{claim}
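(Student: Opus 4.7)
The plan is to exploit the fact that within a single local phase at $v$, every input consumed by the water filling procedure is frozen, so the procedure's output---and in particular $\ell_v'(\theta)$---is constant throughout the phase and depends only on the combinatorial data $(I,J)$ together with the fixed problem parameters. Since there are only $\sum_{I \subseteq [p]} 2^{|I|} \leq 3^p$ such pairs, this bound also controls the number of distinct values that can arise.

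First I would verify the constancy of the relevant inputs. By the hypotheses of \Cref{lemma:SmallestExtension}, $b_v^-$ is constant on $[\theta_1,\theta_2)$ and each $\rDeriv{\ell_{w_i}}$ is constant there (because $\ell_{w_i}$ is affine); the edge capacities and transit times are fixed network parameters; and by the very definition of a local phase neither the active set $I$ of outgoing edges nor its subset $J$ of queued edges changes during the phase. Invoking \Cref{appobs:FlowDistributionDependence}, the flow split $(x_i)_{i\in I} = (f^+_{vw_i}(\theta))_{i\in I}$ produced by the water filling procedure is a deterministic function of exactly these inputs, hence is constant on the local phase and determined purely by $(I,J)$ and the fixed constants $b_v^-$, $(\nu_{vw_i})_{i\in I}$, $(\rDeriv{\ell_{w_i}})_{i\in I}$.

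Finally, for any $i \in I$ the active-edge identity $h_i(\theta) = \ell_v(\theta)$ holds throughout the local phase, so $\ell_v'(\theta) = h_i'(\theta) = q_{vw_i}'(\theta)/\nu_{vw_i} + \rDeriv{\ell_{w_i}}$. Splitting into the two cases for the queue derivative---$q_{vw_i}'(\theta) = x_i - \nu_{vw_i}$ if $i \in J$, and $q_{vw_i}'(\theta) = 0$ (with $0 \leq x_i \leq \nu_{vw_i}$) if $i \in I \setminus J$---we obtain in either case a constant value depending only on $(I,J)$ and the fixed parameters, which we label $\ell_{I,J}$. The only subtle point I anticipate is the possibility of multiple optimizers of \eqref{OPT:bvThetaK} realizing the same $\ell_v'$; but since the claim concerns $\ell_v'$ rather than the individual $x_i$, this is harmless, and in any event \Cref{appobs:FlowDistributionDependence} asserts that the water filling procedure itself selects a canonical distribution. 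The bulk of the work is thus absorbed into that earlier observation, and what remains is the brief bookkeeping sketched above.
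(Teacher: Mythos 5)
Your argument is correct and follows exactly the route the paper takes: the paper treats this claim as an immediate consequence of \Cref{appobs:FlowDistributionDependence} together with the lemma's hypothesis that the $\ell_{w_i}$ are affine on $[\theta_1,\theta_2)$, and you have simply spelled out the bookkeeping that makes this immediate. (One small imprecision, not a gap: $h_i(\theta)=\ell_v(\theta)$ need not persist for \emph{every} $i\in I$ throughout the phase, only for those receiving positive flow, but since at least one such $i$ exists your computation of $\ell_v'$ still goes through.)
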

	
	Using these properties we can now first show a claim which implies that the smallest $\ell_{I,J}$ can only be the derivative of $\ell_v$ for a finite number of intervals. Inductively the same then holds for all of the finitely many $\ell_{I,J}$. The proof of the lemma finally concludes by observing that an interval with constant derivative of $\ell_v$ can contain only finitely many local phases.
	
	\begin{claim}\label{claim:SmallestExtension:LowestDerivativeIntervalsStartWithDifferentEdges}
		Let $(a_1,b_1),(a_2,b_2) \subseteq [\theta_1, \hat{\theta})$ be two disjoint maximal non-empty intervals with constant $\ell'_v(\theta) \eqqcolon c$. If $b_1 < a_2$ and $\ell'_v(\theta) \geq c$ for all $\theta \in (b_1,a_2)$ where the derivative exists, then there exists an edge $vw_i$ such that
		\begin{enumerate}
			\item the first local phase of $(a_2,b_2)$ begins because $vw_i$ becomes newly active and
			\item this edge is not active for any time in the interval $[a_1,a_2)$. 
		\end{enumerate}
		In particular, the first local phase of $(a_1,b_2)$ is not triggered by $vw_i$ becoming active.
	\end{claim}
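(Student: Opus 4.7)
My plan is to prove the claim by contradiction in two steps, corresponding to the two conditions on $vw_i$. Throughout I set $\rDeriv{\ell_v}(a_2) = c$ (which holds since $(a_2,b_2)$ is a maximal interval of slope $c$), and I observe that $\lDeriv{\ell_v}(a_2) \geq c$ because the final linear piece of $\ell_v$ before $a_2$ has slope $\ell'_v(\theta)$ for some $\theta$ just before $a_2$, which by hypothesis is $\geq c$. I also note that equality $\lDeriv{\ell_v}(a_2) = c$ would merge the preceding slope-$c$ piece with $(a_2,b_2)$, contradicting the maximality of $(a_2,b_2)$.

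\textbf{Step 1 (some edge becomes newly active at $a_2$):} Suppose not. Since $b_v^-$ is constant on $[\theta_1,\theta_2)$, observation \ref{obs:TriggerNewPhase} forces the trigger to be a queue running empty, and so inequality \eqref{eq:LowerBoundOnRightDerivativeOfEllV2} in Claim~\ref{claim:SmallestExtension:DerivativeChanges} applies, giving $\lDeriv{\ell_v}(a_2) \leq c$. Combined with $\lDeriv{\ell_v}(a_2) \geq c$ this yields $\lDeriv{\ell_v}(a_2) = c$, contradicting maximality.

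\textbf{Step 2 (a newly active edge not active on $[a_1,a_2)$):} Suppose for contradiction that every edge $vw_i$ that becomes newly active at $a_2$ was active at some time in $[a_1,a_2)$. For each such $i$, let $\theta''_i \coloneqq \sup\set{\theta \in [a_1,a_2) \mid vw_i \text{ active at } \theta}$. Continuity of $h_i-\ell_v$ gives $h_i(\theta''_i) = \ell_v(\theta''_i)$, and since $vw_i$ is inactive just before $a_2$ we have $\theta''_i < a_2$ and $vw_i$ inactive on $(\theta''_i, a_2)$. Then Claim~\ref{claim:SmallestExtension:InactiveHareConvex} gives convexity of $h_i$ on $(\theta''_i, a_2)$. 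At $\theta''_i$, the edge leaves the active set, so by local linearity $\rDeriv{h_i}(\theta''_i) > \rDeriv{\ell_v}(\theta''_i) \geq c$ (the last inequality holds because $\theta''_i \in [a_1,a_2)$ and $\ell'_v \geq c$ on that set wherever defined). Convexity on $(\theta''_i,a_2)$ then yields
\begin{equation*}
	\lDeriv{h_i}(a_2) \;\geq\; \rDeriv{h_i}(\theta''_i) \;>\; c.
\end{equation*}
On the other hand, any edge $vw_j \in I(a_2)$ that was active just before $a_2$ satisfies $\lDeriv{h_j}(a_2) = \lDeriv{\ell_v}(a_2) \geq c$. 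Hence under our contradiction hypothesis $\min\set{\lDeriv{h_i}(a_2) \mid i \in I(a_2)} \geq c$, and inequality \eqref{eq:LowerBoundOnRightDerivativeOfEllV1} of Claim~\ref{claim:SmallestExtension:DerivativeChanges} forces this minimum to equal $c$. Since every newly active edge satisfies strict $> c$, the minimizer must be an already active edge $vw_j$, giving $\lDeriv{\ell_v}(a_2) = c$ and again contradicting maximality. This establishes the existence of the desired edge $vw_i$, and the ``in particular'' statement follows immediately because $vw_i$ not being active anywhere in $[a_1,a_2)$ means it is in particular not active at $a_1$, so the first local phase of $(a_1,b_2)$ (which begins at $a_1$) cannot be the one at which $vw_i$ becomes newly active.

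The main delicate point is verifying the strict inequality $\rDeriv{h_i}(\theta''_i) > \rDeriv{\ell_v}(\theta''_i)$: equality would force $h_i = \ell_v$ on an initial right-neighborhood of $\theta''_i$ by piecewise linearity, contradicting the definition of $\theta''_i$ as the supremum of active times. Once this is in hand, the combined use of convexity (Claim~\ref{claim:SmallestExtension:InactiveHareConvex}) to propagate the slope inequality from $\theta''_i$ to $a_2$, and of Claim~\ref{claim:SmallestExtension:DerivativeChanges} to compare the minimum left derivative with $c$, does all the work.
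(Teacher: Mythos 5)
Your proposal is correct: every step checks out, including the delicate strict inequality $\rDeriv{h_i}(\theta''_i) > \rDeriv{\ell_v}(\theta''_i)$, which is indeed forced by piecewise linearity together with $h_i \geq \ell_v$ and the definition of $\theta''_i$ as the last active time. You use the same ingredients as the paper -- maximality giving $\lDeriv{\ell_v}(a_2) > c$, \Cref{claim:SmallestExtension:DerivativeChanges}, and the convexity of $h_i$ on inactivity intervals from \Cref{claim:SmallestExtension:InactiveHareConvex} -- and the same decomposition into ``some edge becomes newly active at $a_2$'' plus ``such an edge was not active on $[a_1,a_2)$''. The one genuine difference is how the second part is closed: the paper picks the concrete witness supplied by \eqref{eq:LowerBoundOnRightDerivativeOfEllV1}, namely an edge with $h_i(a_2)=\ell_v(a_2)$ and $\lDeriv{h_i}(a_2)\leq c$, and then compares \emph{function values}, integrating $h_i' \leq c$ backwards from $a_2$ against $\ell_v' \geq c$ (strict near $a_2$) to conclude $\ell_v < h_i$ on $[\tilde{\theta},a_2)\cap[a_1,a_2)$ and hence that the last active time $\tilde{\theta}$ lies before $a_1$. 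You instead argue non-constructively at the level of one-sided \emph{derivatives}: assuming every newly active edge was active somewhere in $[a_1,a_2)$, you propagate the strict slope gap at the exit time $\theta''_i$ through convexity to get $\lDeriv{h_i}(a_2) > c$, and then invoke \eqref{eq:LowerBoundOnRightDerivativeOfEllV1} a second time at $a_2$ to reach a contradiction with maximality. Your variant avoids the integration/value-comparison step entirely at the cost of a double contradiction and of not exhibiting the witness edge explicitly; both arguments are of comparable length and rigor.
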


	\begin{proofClaim}[Proof of \Cref*{claim:SmallestExtension:LowestDerivativeIntervalsStartWithDifferentEdges}]
		Since we have $\rDeriv{\ell_v}(a_2) = c$, \Cref{claim:SmallestExtension:DerivativeChanges} implies that there exists some edge $vw_i$ with $h_i(a_2) = \ell_v(a_2)$ and $\lDeriv{h_i}(a_2) \leq c$. As $(a_2,b_2)$ was chosen to be maximal and $\ell'_v(\theta) \geq c$ holds almost everywhere between $b_1$ and $a_2$, we have $\lDeriv{\ell_v}(a_2) > c$. Thus, $vw_i$ was inactive before $a_2$.
		
		Now let $\tilde{\theta} < a_2$ be the last time before $a_2$, where $vw_i$ was active. By \Cref{claim:SmallestExtension:InactiveHareConvex} we know then that $h'_i(\theta) \leq c$ holds almost everywhere on $[\tilde{\theta},a_2]$. At the same time we have $\ell'_v(\theta) \geq c$ almost everywhere on $[a_1,a_2]$ and $\ell'_v(\theta) > c$ for at least some proper subinterval of $[b_1,a_2]$, since the intervals $(a_1,b_1)$ and $(a_2,b_2)$ were chosen to be maximal. Combining these two facts with $\ell_v(a_2) = h_i(a_2)$ implies $\ell_v(\theta) < h_i(\theta)$ for all $\theta \in [\tilde{\theta},a_2) \cap [a_1,a_2)$. As both functions are continuous we must have $\tilde{\theta} < a_1$. Thus, $vw_i$ is inactive for all of $[a_1,a_2)$.
	\end{proofClaim}

	This claim directly implies that the lowest derivative of $\ell_v$ during $[\theta_1,\hat{\theta}]$ only appears in a finite number of intervals, as each of these intervals has to start with a different edge becoming newly active. But then, iteratively applying this claim for the intervals between these intervals shows that any derivative of $\ell_v$ can only appear in a finite number of intervals. Since, by \Cref{claim:SmallestExtension:OnlyFinitelyManySubphases}, $\ell'_v$ can only attain a finite number of values, this implies that $[\theta_1,\hat{\theta})$ consists of a finite number of intervals with constant derivative of $\ell_v$.
	
	\begin{claim}\label{claim:SmallestExtension:ConstantDerivativeIntervalsHaveFinitelyManySubPhases}
		Let $(a,b) \subseteq [\theta_1,\hat{\theta})$ be an interval during which $\ell'_v$ is constant. Then $(a,b)$ contains at most $2p$ local phases, where $p$ denotes the out-degree of $v$.
	\end{claim}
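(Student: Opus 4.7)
The plan is to exploit observation (ii), which says that a new local phase can only be triggered by an edge becoming newly active or by the queue of an active edge running empty. Thus it suffices to show that within $(a,b)$ each of these two event types occurs at most $p$ times, which together bound the number of local phases by $2p$.

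For the first event type, I would argue that each outgoing edge $vw_i$ can become newly active at most once during $(a,b)$. Suppose for contradiction that $vw_i$ is active at some $\theta < \theta'$ in $(a,b)$ but inactive at some point strictly between. Then there is a maximal open subinterval $(s,t) \subseteq (a,b)$ on which $vw_i$ is inactive, and by continuity of $h_i$ and $\ell_v$ we have $h_i(s)=\ell_v(s)$ and $h_i(t)=\ell_v(t)$. Claim~\ref{claim:SmallestExtension:InactiveHareConvex} says $h_i$ is convex on $[s,t]$, and since $\ell'_v$ is constant on $(a,b)$, $\ell_v$ is affine on $[s,t]$. A convex function matching an affine function at both endpoints of an interval lies \emph{weakly below} that affine function on the interval, i.e.\ $h_i \le \ell_v$ on $[s,t]$; but inactivity on $(s,t)$ means $h_i > \ell_v$ there, a contradiction. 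Consequently, within $(a,b)$ each edge $vw_i$ is active on at most one (possibly empty) subinterval $[s_i,t_i]$, and so contributes at most one event of type (a).

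For the second event type, I use that on the unique active subinterval $[s_i,t_i]$ we have $h_i(\theta) = \ell_v(\theta)$. Differentiating the definition $h_i(\theta) = \tau_{vw_i} + q_{vw_i}(\theta)/\nu_{vw_i} + \ell_{w_i}(\theta)$ and using that $\ell_v$ has slope $c$ on $(a,b)$ while $\ell_{w_i}$ is affine by the lemma's assumption, one obtains
\[
q'_{vw_i}(\theta) = \nu_{vw_i}(c - \ell'_{w_i}) \qquad \text{for almost every } \theta \in [s_i,t_i],
\]
so the queue length is \emph{linear} throughout the entire active period (even across several local phases of different inflow rates inside $[s_i,t_i]$). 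If the slope is non-negative the queue never runs empty, and if it is negative the queue reaches zero at most once; in either case at most one event of type (b) is contributed by $vw_i$.

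Summing over the at most $p$ outgoing edges yields at most $p$ events of each type and hence at most $2p$ local phases inside $(a,b)$, proving the claim. The main obstacle is the first step, namely ruling out the alternation active–inactive–active within $(a,b)$; the key trick there is to turn Claim~\ref{claim:SmallestExtension:InactiveHareConvex} into an inequality in the opposite direction by using that $\ell_v$ is affine precisely because $\ell'_v$ is constant, which makes the convex $h_i$ sandwiched between two equal endpoint values and a linear lower bound and forces equality, contradicting inactivity.
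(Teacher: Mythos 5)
Your proposal is correct and follows the same overall route as the paper: using observation~\ref{obs:TriggerNewPhase}, you bound each of the two event types by $p$ within $(a,b)$. For the activation events your argument is essentially the paper's --- \Cref{claim:SmallestExtension:InactiveHareConvex} plus the affinity of $\ell_v$ on $(a,b)$ rules out an inactive--active return, you just package it as a chord argument on a maximal inactive subinterval rather than as ``once inactive, always inactive''. The genuine difference is in the queue-depletion events: the paper argues that when the queue of an active edge $vw_i$ runs empty and triggers a new local phase, then $\rDeriv{h_i}(\theta) > \lDeriv{h_i}(\theta) = \lDeriv{\ell_v}(\theta) = \rDeriv{\ell_v}(\theta)$, so the edge becomes inactive immediately afterwards and (by the first part) stays so, giving at most one such event per edge; you instead note that on the single active subinterval of each edge one has $h_i \equiv \ell_v$, hence $q_{vw_i}$ is affine there irrespective of how the inflow is split across local phases, and an affine nonnegative queue can run empty at most once. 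Both yield the same count of at most $p$ depletion events; your variant is a bit more self-contained since it avoids the one-sided derivative-jump step (which the paper justifies only tersely, with a somewhat odd citation of \Cref{claim:SmallestExtension:LowestDerivativeIntervalsStartWithDifferentEdges}), at the cost of first establishing that each edge's active set inside $(a,b)$ is an interval --- which your chord argument does correctly.
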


	\begin{proofClaim}[Proof of \Cref*{claim:SmallestExtension:ConstantDerivativeIntervalsHaveFinitelyManySubPhases}]
		By \Cref{claim:SmallestExtension:InactiveHareConvex} an edge that changes from active to inactive during the interval $(a,b)$ will remain inactive for the rest of this interval. Thus, at most $p$ local phases can start because an edge becomes newly active. By \Cref{claim:SmallestExtension:LowestDerivativeIntervalsStartWithDifferentEdges} if a local phase begins because the queue on an active edge $vw_i$ runs empty at time $\theta$, we have 
		$\rDeriv{h_i}(\theta) > \lDeriv{h_i}(\theta) = \lDeriv{\ell_v}(\theta) = \rDeriv{\ell_v}(\theta)$
		meaning that this edge will become inactive. Thus, at most $p$ local phases start because the queue of an active edge runs empty. Since by observation \ref{obs:TriggerNewPhase} these are the only ways to start a new local phase, we conclude that there can be no more than $2p$ local phases during $(a,b)$.
	\end{proofClaim}
	
	Combining \Cref{claim:SmallestExtension:LowestDerivativeIntervalsStartWithDifferentEdges,claim:SmallestExtension:ConstantDerivativeIntervalsHaveFinitelyManySubPhases} we see that $[\theta_1,\hat{\theta})$ only contains a finite number of local phases and, thus, we achieve $\hat{\theta} = \theta_2$ with finitely many extensions. 
\end{proof}

With this lemma the proof of the following theorem is straightforward.

\begin{theorem}\label{thm:TerminationOfExtensionAlg}
	For any acyclic single-sink network with piecewise constant network-inflow rates an IDE can be constructed in finite time using \Cref{alg:Construction}. 
\end{theorem}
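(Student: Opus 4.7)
The plan is to prove the theorem by a double induction on the outer loop index $k$ and the inner loop index $i$ of \Cref{alg:Construction}, with \Cref{lemma:SmallestExtension} as the essential tool. The termination time $T$ being finite (see \cite[Theorem 4.6]{GHS18}), it suffices to show that every one of the finitely many iterations $(k,i)$ terminates after finitely many local phases, since after the final iteration all mass has already reached $t$ and the remaining edge inflow rates can be set to zero.

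First I would settle the base case $k=0$: the zero flow is trivially a partial IDE up to time $0$. For the inductive step on $k$, assume that after the $(k-1)$-th outer iteration $f$ is a partial IDE up to time $\theta\coloneqq k\tau_{\min}$ and that up to this point only finitely many local phases have been produced. In particular, all edge inflow rates $f_e^+$ are piecewise constant on $[0,\theta)$ with finitely many breakpoints, which together with \cref{eq:FeasibleFlow-QueueOpAtCap} gives that each queue length function $q_e$ and each label function $\ell_w$ is continuous and piecewise affine on $[0,\theta)$ with finitely many breakpoints. Now I would perform a secondary induction along the fixed topological order $t=v_1<v_2<\dots<v_n$: the base $i=1$ is the sink (no extension needed). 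For the step, assume that for every $j<i$ the extension at $v_j$ on $[\theta,\theta+\tau_{\min})$ has already been produced by finitely many local phases with constant edge inflow rates.

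To apply \Cref{lemma:SmallestExtension} at $v_i$, I need to cut $[\theta,\theta+\tau_{\min})$ into finitely many subintervals $[\theta_1,\theta_2)$ satisfying the lemma's hypotheses. The gross node inflow $b_{v_i}^-$ is the sum of $u_{v_i}$ (piecewise constant with finitely many jumps by the model assumption) and the outflows $f_e^-(\cdot)$ for $e\in\edgesEntering{v_i}$. Because $G$ is acyclic and the topological order is fixed, every edge $e=uv_i$ has $u=v_j$ for some $j<i$, hence by the inner induction hypothesis $f_e^+$ is piecewise constant on $[\theta,\theta+\tau_{\min})$ with finitely many breakpoints; by \cref{eq:FeasibleFlow-QueueOpAtCap} the same is true of $f_e^-$ on $[\theta+\tau_e,\theta+\tau_{\min}+\tau_e)$, and shifting back we get that each summand of $b_{v_i}^-$, and hence $b_{v_i}^-$ itself, is piecewise constant on $[\theta,\theta+\tau_{\min})$ with finitely many jumps. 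Likewise, for each $w$ with $v_iw\in\edgesLeaving{v_i}$ we have $w=v_j$ for some $j<i$, so by the inner induction hypothesis all $q_e$ and hence $\ell_w$ are piecewise affine on $[\theta,\theta+\tau_{\min})$ with finitely many slope changes. Take the common refinement of all these finitely many breakpoints to obtain a finite partition $\theta=\theta^{(0)}<\theta^{(1)}<\dots<\theta^{(m)}=\theta+\tau_{\min}$ such that on every subinterval $[\theta^{(j)},\theta^{(j+1)})\subseteq[\theta,\theta+\tau_{\min})$ both $b_{v_i}^-$ is constant and every relevant $\ell_w$ is affine.

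On each such subinterval, all four hypotheses of \Cref{lemma:SmallestExtension} are met (note $\theta^{(j+1)}-\theta^{(j)}\le\tau_{\min}$), so the lemma yields a finite number of local phases extending $f$ at $v_i$ up to time $\theta^{(j+1)}$. Concatenating the $m$ finitely many blocks completes the inner inductive step for $v_i$ and, after $i=n$, the outer inductive step for $k$. Since the outer loop runs only $\lfloor T/\tau_{\min}\rfloor+1$ times and each iteration produces finitely many local phases at each of the $n$ nodes, the total number of local phases, and hence the total number of extension phases of \Cref{alg:Construction}, is finite. The only delicate point is the bookkeeping on piecewise-constancy/affinity propagating correctly from nodes closer to $t$ to $v_i$ via the finitely-many-breakpoints argument; once that is in place the lemma does all of the real work.
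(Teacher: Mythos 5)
Your overall strategy---double induction over the outer time index $k$ and the position $i$ in the fixed topological order, refining $[\theta,\theta+\tau_{\min})$ into finitely many subintervals on which $b_{v_i}^-$ is constant and the relevant labels are affine, then invoking \Cref{lemma:SmallestExtension} on each subinterval---is exactly the paper's proof. However, the step you yourself flag as the delicate bookkeeping contains a genuine error. In the order $t=v_1<\dots<v_n$ edges point towards the sink, so for an edge $uv_i\in\edgesEntering{v_i}$ the tail satisfies $u=v_j$ with $j>i$, not $j<i$: these nodes come \emph{after} $v_i$ and have not yet been processed in the current outer iteration, so the inner induction hypothesis tells you nothing about $f^+_{uv_i}$ on $[\theta,\theta+\tau_{\min})$---indeed those inflow rates are not even defined there when the algorithm reaches $v_i$. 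Your time shift also goes the wrong way: from $f^+_e$ on $[\theta,\theta+\tau_{\min})$ you would only learn about $f^-_e$ on $[\theta+\tau_e,\theta+\tau_{\min}+\tau_e)$, which is useless for determining $b_{v_i}^-$ on $[\theta,\theta+\tau_{\min})$.

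The repair is precisely the reason the outer loop advances in steps of length $\tau_{\min}$: since $\tau_e\geq\tau_{\min}$ for every $e\in\edgesEntering{v_i}$, \cref{eq:FeasibleFlow-QueueOpAtCap} determines $f^-_e$ on $[\theta,\theta+\tau_{\min})$ from $f^+_e$ and $q_e$ on $[\theta-\tau_e,\theta+\tau_{\min}-\tau_e)\subseteq[0,\theta]$, i.e.\ from the partial IDE already constructed in earlier outer iterations; by the \emph{outer} induction hypothesis that flow has finitely many phases, so each such $f^-_e$, and hence $b_{v_i}^-$, is piecewise constant with finitely many jumps on $[\theta,\theta+\tau_{\min})$. (Your treatment of the labels $\ell_w$ for $v_iw\in\edgesLeaving{v_i}$ is fine: those heads do have index $j<i$, so the inner hypothesis applies there.) With this correction the remainder of your argument---common refinement, \Cref{lemma:SmallestExtension} on each piece, finitely many iterations of both loops---coincides with the paper's proof, which additionally records that the flow distributed at $v_i$ influences nothing but the queues on $\edgesLeaving{v_i}$ and therefore does not disturb the partition already fixed by nodes closer to $t$; note also that line~1 of \Cref{alg:Construction} needs a \emph{computable} bound $T$, which the paper takes from \cite{GH20PoA} rather than from the bare finiteness statement in \cite{GHS18}.
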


\begin{proof}
	First, note that by \cite[Theorem 1]{GH20PoA} for any given single-sink network $\network$ there exists an (easily computable) time $T$ such that all IDE in $\network$ terminate before $T$. This makes the first line of \Cref{alg:Construction} possible. Thus, it remains to show that in \cref{alg:Construction:FlowDistribution} a finite number of local phases always suffices. We show this by induction over $\theta$ and $i \in [n]$, i.e. we can assume that the currently constructed flow $f$ is a partial IDE up to time $\theta$ for all nodes $v_j, j \geq i$ and up to time $\theta+\tau_{\min}$ for all nodes $v_j, j < i$ with only a finite number of (local) phases. In particular, this means that we can partition the interval $[\theta,\theta+\tau_{\min})$ into a finite number of proper subintervals such that within each such subinterval there is a constant gross node inflow rate into node $v_i$ and the labels at all the vertices $w$ with $v_iw \in \edgesLeaving{v_i}$ change linearly. Then, by \Cref{lemma:SmallestExtension}, we can distribute the flow at node $v_i$ to the outgoing edges using a finite number of local phases for each of these subintervals. Note that, aside from the queue lengths on the edges leaving $v_i$, the so distributed flow has no influence on the flow distribution in later subintervals and, in particular, does not influence the partition into subintervals or the flow distribution at nodes closer to $t$ than $v_i$. Thus, we can distribute the outflow from $v_i$ for the whole interval $[\theta,\theta+\tau_{\min})$ using only a finite number of local phases.
\end{proof}

Closer inspection of the proofs above also allows us to derive a rough but explicit bound on the number of phases the constructed IDE flow can have.

\begin{prop}\label{prop:AlgorithmAcyclicExplicitBound}
	For any acyclic single-sink network with piecewise constant network-inflow rates the number of phases of any IDE flow constructed by \Cref{alg:Construction} is bounded by
		\[\BigO\left(P\Big(2(\Delta+1)^{4^\Delta+1}\Big)^{\abs{V}T/\tau_{\min}}\right),\]
	 where $\Delta \coloneqq \max\set{\abs{\edgesLeaving{v}} | v \in V}$ the maximum out-degree in the given network and $P$ is the number of intervals with constant network inflow rates.
\end{prop}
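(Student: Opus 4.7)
The plan is to combine a per-node bound from a refined count inside the proof of Lemma~\ref{lemma:SmallestExtension} with an induction along the topological order and across the $\tau_{\min}$-intervals of the outer loop.

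\textbf{Step 1: Bounding one invocation of Lemma~\ref{lemma:SmallestExtension}.}
First I would pin down an explicit constant $K$ such that any single call to the extension subroutine at a node $v$ with out-degree $p\le\Delta$, over an interval where $b_v^-$ is constant and all $\ell_{w_i}$ are affine, produces at most $K$ local phases. By \Cref{claim:SmallestExtension:OnlyFinitelyManySubphases}, $\ell_v'$ takes values only in $\{\ell_{I,J}:J\subseteq I\subseteq[p]\}$, of which there are at most $3^p\le 4^\Delta$. Enumerating these values in increasing order $c_1<\dots<c_K$ and letting $n_i$ denote the number of maximal intervals with $\ell_v'=c_i$, \Cref{claim:SmallestExtension:LowestDerivativeIntervalsStartWithDifferentEdges} applied inductively gives $n_i\le p(p+1)^{i-1}$, so
\[
\sum_{i=1}^{K}n_i\;\le\;(p+1)^{K}\;\le\;(\Delta+1)^{4^{\Delta}}.
\]
Multiplying by the bound $2p\le 2(\Delta+1)$ of \Cref{claim:SmallestExtension:ConstantDerivativeIntervalsHaveFinitelyManySubPhases} yields at most $2(\Delta+1)^{4^\Delta+1}$ local phases per invocation; call this quantity $K$.

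\textbf{Step 2: Propagation along one topological sweep.}
Next I would define $M_k$ as the maximum, over all nodes $v$, of the number of affine pieces of $\ell_v$ restricted to the $k$-th outer interval $[k\tau_{\min},(k+1)\tau_{\min})$. Processing node $v_i$ in topological order, the current $\tau_{\min}$-interval is subdivided by the breakpoints of $b_{v_i}^-$ and of the at most $\Delta$ label functions $\ell_{w}$ for $v_iw\in\edgesLeaving{v_i}$; each of those functions is already built from at most $M_k^{(i-1)}$ pieces (labels of nodes closer to $t$ in the topological order, processed earlier in this sweep), and $b_{v_i}^-$ inherits its pieces from the previously constructed outflows and from the global inflow. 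Applying Step~1 on every subinterval gives at most $K$ local phases per subinterval, so $M_k^{(i)}\le K\cdot M_k^{(i-1)}$ up to multiplicative constants depending on $P$ and $\Delta$. Iterating this over the at most $\abs{V}$ nodes in the sweep produces $M_k\le M_{k-1}\cdot K^{\abs V}$ (the factor $K^{\abs V}$ absorbing all sub-factors).

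\textbf{Step 3: Propagation across outer iterations and conclusion.}
Finally I would chain the recurrence across the $\lfloor T/\tau_{\min}\rfloor+1$ outer iterations. Since the outflow of any edge $e$ during interval $k+1$ is (up to the queue dynamics) a time-shifted slice of the inflow during earlier intervals, the number of breakpoints of $b_v^-$ on interval $k+1$ is controlled by $M_k$. Together with the base case $M_0=\bigO(P)$ (induced by the at most $P$ jumps of the network inflow rates) this gives
\[
M_{T/\tau_{\min}}\;\le\;\bigO\!\left(P\cdot K^{\abs{V}\cdot T/\tau_{\min}}\right),
\]
and since the total number of phases of the constructed IDE flow is at most $\abs{V}\cdot(T/\tau_{\min})\cdot M_{T/\tau_{\min}}$, which is absorbed into the $\bigO$, the desired bound follows.

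\textbf{Main obstacle.}
The delicate part is Step~1, namely turning the qualitative argument in the proof of Lemma~\ref{lemma:SmallestExtension} (that each derivative value of $\ell_v$ can appear only finitely often) into the explicit recurrence $n_i\le p(p+1)^{i-1}$. This requires carefully applying \Cref{claim:SmallestExtension:LowestDerivativeIntervalsStartWithDifferentEdges} not just to the globally smallest derivative value but iteratively to the smallest value on each sub-interval between two consecutive intervals of a previously-counted value; one has to verify that the hypothesis ``$\ell_v'(\theta)\ge c$ on the separating interval'' is preserved by this nested construction. The bookkeeping in Steps~2--3 is then comparatively routine, but one has to be careful that the number of pieces of $b_v^-$ and of the queue length functions $q_e$ used in subsequent iterations are all controlled by the same quantity $M_k$ so that no extra blow-up creeps in beyond the stated factor $K^{\abs V}$ per outer iteration.
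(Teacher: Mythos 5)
Your proposal follows essentially the same route as the paper's proof: a per-invocation bound of $2(\Delta+1)^{4^\Delta+1}$ local phases obtained from \Cref{claim:SmallestExtension:OnlyFinitelyManySubphases}, the iterated application of \Cref{claim:SmallestExtension:LowestDerivativeIntervalsStartWithDifferentEdges} and \Cref{claim:SmallestExtension:ConstantDerivativeIntervalsHaveFinitelyManySubPhases}, multiplied once per node along the topological sweep and once per $\tau_{\min}$-step up to $T$, and combined with the $P$ inflow breakpoints. Your explicit recurrence $n_i\le p(p+1)^{i-1}$ and the extra polynomial factor $\abs{V}T/\tau_{\min}$ you absorb at the end are only minor bookkeeping variations at the same level of rigor as the paper's argument, so the proof is correct and matches the paper's.
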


\begin{proof}
	First, we look at an interval $[\theta_1,\theta_2)$ and a single node $v$ as in \Cref{lemma:SmallestExtension}. Here we can use \Cref{claim:SmallestExtension:LowestDerivativeIntervalsStartWithDifferentEdges} to  bound the number of intervals of constant derivative of $\ell_v$ by 
		\[\left(\abs{\edgesLeaving{v}}+1\right)^{\abs{\set{(I,J) | J \subseteq I \subseteq [\abs{\edgesLeaving{v}}]}}} \leq \left(\abs{\edgesLeaving{v}}+1\right)^{4^{\abs{\edgesLeaving{v}}}},\]
	each of them containing at most $2\abs{\edgesLeaving{v}}$ local phases by \Cref{claim:SmallestExtension:ConstantDerivativeIntervalsHaveFinitelyManySubPhases}. Together this shows that any such interval will be subdivided into at most  $2(\Delta+1)^{4^{\Delta}+1}$ local phases. Thus, whenever we execute \cref{alg:Construction:FlowDistribution} of \Cref{alg:Construction} every currently existing (local) phase may be subdivided further into at most $2(\abs{\Delta}+1)^{4^\Delta+1}$ local phases. Consequently, for every pass of the outer for-loop the number of local phases can be multiplied by at most $\prod_{v \in V}\Big(2(\abs{\Delta}+1)^{4^\Delta+1}\Big)$ in total during the extension over the interval $[\theta,\theta+\tau_{\min})$. Combining this with the at most $P$ phases triggered by changing network inflow rates results in the bound of 
		\[\BigO\left(P\Big(2(\Delta+1)^{4^\Delta+1}\Big)^{\abs{V}T/\tau_{\min}}\right).\qedhere\]
\end{proof}

\subsection{General Single-Sink Networks}

We now want to extend this result to general single-sink networks, i.e. we want to show that \Cref{alg:ConstructionOld} terminates within finite time not only for acyclic graphs, but for all graphs. We first note that the requirement for input-graphs of \Cref{alg:Construction} to be acyclic is somewhat too strong. It is actually enough to have some (static) order on the nodes such that it is always a topological order with respect to the active subgraph. That is, for a general single-sink network we can still apply \Cref{alg:Construction} to determine an IDE-extension with finitely many phases for any interval during which we have such a static node ordering. Thus, \Cref{alg:ConstructionOld} will also use finitely many extension phases for each interval with such a static ordering. This observation gives rise to \Cref{alg:ConstructionGen}, another slight variant of \Cref{alg:ConstructionOld}.

\begin{algorithm}\caption{IDE-Construction Algorithm for general single-sink networks}\label{alg:ConstructionGen}
	\KwIn{A single-sink network $\network$ with piecewise constant network inflow rates}
	
	\KwOut{An IDE flow $f$ in $\network$}
	
	Choose $T$ large enough such that all IDE flows in $\network$ terminate before $T$
	
	Let $f$ be the zero flow, $\theta \leftarrow 0$ and $\tilde{E} \leftarrow E_0$
	
	Determine a topological order $t=v_1 < v_2 < \dots < v_n$ w.r.t. the edges in $\tilde{E}$
	
	\While{$\theta < T$}{
		\SetNoFillComment
		\tcc{$f$ is a partial IDE up to time $\theta$}
		
		\For{$i = 2, \dots, n$}{
			Compute $b_{v_i}^-(\theta)$ and determine a constant distribution of this inflow to edges in $\edgesLeaving{v_i}$ such that the used edges remain active for some proper interval
		}
		Determine the largest $\alpha \geq 0$ such that all $b_v^-$ are constant on $(\theta,\theta+\alpha)$ and the set of active edges does not change
		
		Extend $f$ up to time $\theta+\alpha$ with constant edge inflow rates and set $\theta \leftarrow \theta+\alpha$
		
		\If{$E_{\theta} \setminus \tilde{E} \neq \emptyset$}{
			Define $\tilde{E} \leftarrow \tilde{E} \cup E_{\theta}$.\label{alg:ConstructionGen:AddNewlyActiveEdges}
			
			\While{there exists a cycle $C$ in $\tilde{E}$}{
				Remove an edge $e=xy$ with the largest value $\ell_y(\theta) - \ell_x(\theta)$ of all edges in $C$\label{alg:ConstructionGen:EdgeRemoval}
			}
			
			Determine a topological order $t=v_1 < v_2 < \dots < v_n$ w.r.t. the edges in $\tilde{E}$\label{alg:ConstructionGen:TopOrder}	
		}
	}
\end{algorithm}

We will prove that this algorithm does indeed construct an IDE for arbitrary single-sink networks within finite time by first showing that this algorithm is a special case of the original algorithm. Thus, it is correct and uses only a finite number of phases for any interval in which the topological order does not change. We then conclude the proof by showing that it is indeed enough to change the topological order a  finite number of times for any given time horizon.

\begin{lemma}\label{lemma:AlgGenIsCorrect}
	\Cref{alg:ConstructionGen} is a special case of \Cref{alg:ConstructionOld}. In particular it is correct.
\end{lemma}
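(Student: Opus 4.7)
The plan is to view \Cref{alg:ConstructionGen} as a particular execution of \Cref{alg:ConstructionOld} in which the topological order at each phase is determined by the rule in \cref{alg:ConstructionGen:TopOrder}. Apart from this, every other step of \Cref{alg:ConstructionGen} -- computing $b_{v_i}^-(\theta)$, distributing the inflow to the outgoing edges, determining the maximal extension length $\alpha$, and extending $f$ -- literally coincides with the corresponding step of \Cref{alg:ConstructionOld}. Hence the only thing to verify is that the ordering $v_1 < v_2 < \dots < v_n$ maintained by \Cref{alg:ConstructionGen} is, at every point in time during its use, a topological order of $G[E_\theta]$.

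The key step is to show that after each update of $\tilde{E}$ in \cref{alg:ConstructionGen:AddNewlyActiveEdges,alg:ConstructionGen:EdgeRemoval}, the resulting set $\tilde{E}$ is acyclic and satisfies $E_\theta \subseteq \tilde{E}$. For this I would use the simple telescoping identity
\[\sum_{e=xy \in C}\bigl(\ell_y(\theta) - \ell_x(\theta)\bigr) = 0 \quad\text{for every cycle } C \subseteq \tilde{E},\]
together with the fact that every currently active edge $e = xy \in E_\theta$ satisfies $\ell_y(\theta) - \ell_x(\theta) = -c_e(\theta) \leq -\tau_e < 0$. The identity forces at least one edge on each cycle $C$ to have $\ell_y(\theta) - \ell_x(\theta) \geq 0$, so the edge chosen for removal in \cref{alg:ConstructionGen:EdgeRemoval} -- the one with the \emph{largest} such value -- is strictly not active. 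Hence no currently active edge is ever removed, so $E_\theta \subseteq \tilde{E}$ holds after the update, and since at least one edge is removed in every iteration, the inner cycle-removal loop terminates with an acyclic $\tilde{E} \supseteq E_\theta$.

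Next I would argue that the inclusion $E_\theta \subseteq \tilde{E}$ is preserved between two consecutive updates. By construction, each extension phase has length $\alpha$ chosen maximally so that $E_\theta$ does not change during the extension. Therefore, any change of $E_\theta$ can only occur at a phase boundary, and it falls into exactly one of two cases: either (a) a new edge becomes active, which immediately triggers the condition $E_\theta \setminus \tilde{E} \neq \emptyset$ and thus a new update, or (b) only edges leave $E_\theta$, in which case the inclusion $E_\theta \subseteq \tilde{E}$ is trivially maintained. Consequently the current $\tilde{E}$ always contains $E_\theta$, and the topological order maintained by the algorithm is always a valid topological order of $G[E_\theta]$.

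Combining these two observations, \Cref{alg:ConstructionGen} precisely simulates one valid execution of \Cref{alg:ConstructionOld} (with a specific rule for choosing the topological order). Correctness then follows from the correctness of \Cref{alg:ConstructionOld} proven in \cite{GHS18}. The main subtlety -- and the only really nontrivial point in the argument -- is the cycle-removal rule in \cref{alg:ConstructionGen:EdgeRemoval}: one has to ensure that a \emph{local} greedy choice based on $\ell_y(\theta) - \ell_x(\theta)$ globally preserves \emph{all} currently active edges, which is exactly what the cycle-sum identity above guarantees.
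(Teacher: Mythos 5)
Your proposal is correct and follows essentially the same route as the paper: both arguments hinge on the observation that the label differences $\ell_y(\theta)-\ell_x(\theta)$ telescope around any cycle in $\tilde{E}$, forcing the edge of largest difference (the one removed) to have difference $\geq 0$ and hence to be inactive, since active edges have difference $-c_e(\theta)\leq -\tau_{\min}<0$; this preserves $E_\theta\subseteq\tilde{E}$, so the maintained order is always a topological order of the active subgraph and the algorithm is a run of the original extension algorithm. The paper phrases the cycle argument marginally differently (it notes the cycle must contain a currently active edge and sums over the remaining edges to obtain a strictly positive bound $\geq\tau_{\min}$, whereas you obtain only $\geq 0$), but this is the same computation and your weaker inequality suffices for this lemma.
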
 

\begin{proof}
	As in \Cref{alg:Construction} the existence of an upper bound $T$ on the termination time of all IDE flows for a given single-sink network is guaranteed by \cite[Theorem 1]{GH20PoA}. Next, note that $\tilde{E}$ is clearly always acyclic (except in \cref{alg:ConstructionGen:AddNewlyActiveEdges,alg:ConstructionGen:EdgeRemoval}) which guarantees that we can always find a topological order with respect to $\tilde{E}$. We now only need to show that such an ordering is also a topological order with respect to the active edges, i.e. that for any time $\theta$ we have $E_{\theta} \subseteq \tilde{E}$. For this we will use the following observation
	
	\begin{claim}\label{claim:AlgGenNeverRemovesActiveEdges}
		Any edge $xy$ removed from $\tilde{E}$ in \cref{alg:ConstructionGen:EdgeRemoval} of \Cref{alg:ConstructionGen} satisfies $\ell_{x}(\theta) < \ell_{y}(\theta)$.
	\end{claim}

	\begin{proofClaim}
		Let $C \subseteq \tilde{E}$ be a cycle containing the removed edge $xy$. Since $\tilde{E}$ was acyclic before we added the newly active edges in \cref{alg:ConstructionGen:AddNewlyActiveEdges}, this cycle also has to contain some currently active edge $vw$. This gives us
			\begin{align*}
				\sum_{e=uz \in C\setminus\{vw\}}(\ell_z(\theta)-\ell_u(\theta))	&= \sum_{e=uz \in C}(\ell_z(\theta)-\ell_u(\theta)) - (\ell_w(\theta) - \ell_v(\theta))  \\
					&= 0 - \ell_w(\theta) + \Big(\ell_w(\theta) + \tau_{vw} + \frac{q_{vw}(\theta)}{\nu_{vw}}\Big) \\
					&= \tau_{vw} + \frac{q_{vw}(\theta)}{\nu_{vw}} \geq \tau_{\min}. 
			\end{align*}
		Thus, $C$ contains at least one edge $uz$ with $\ell_z(\theta)-\ell_u(\theta) > 0$ and, by the way it was chosen, this then holds in particular for edge $xy$.
	\end{proofClaim}
	
	This claim immediately implies that in \cref{alg:ConstructionGen:EdgeRemoval} we only remove inactive edges and that, afterwards, we still have $E_{\theta} \subseteq \tilde{E}$.
\end{proof}

\begin{lemma}\label{lemma:AlgGenETildeChangesNotToOften}
	For any single-sink network there exists some constant $C > 0$ such that for any time interval of length $C$ the set $\tilde{E}$ changes at most $\abs{E}$ times during this interval in \Cref{alg:ConstructionGen}.
\end{lemma}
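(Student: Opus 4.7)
The plan is to show that $C$ can be chosen so small that no edge can both leave and re-enter $\tilde{E}$ within an interval of length $C$. Each modification of $\tilde{E}$ is initiated by the condition $E_\theta \setminus \tilde{E} \neq \emptyset$ in the if-block starting at \cref{alg:ConstructionGen:AddNewlyActiveEdges}, and each such trigger adds at least one edge to $\tilde{E}$. Consequently, if each edge can be added to $\tilde{E}$ at most once within the interval, then the total number of $\tilde{E}$-changes in the interval is at most $\abs{E}$, which is exactly what we want.

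To make this precise, I would suppose for contradiction that the same edge $e = xy$ enters $E_\sigma \setminus \tilde{E}$ at two distinct trigger times $\sigma_i < \sigma_j$ inside the interval. Since $e$ is added to $\tilde{E}$ at time $\sigma_i$ and is absent from $\tilde{E}$ just before $\sigma_j$, there must be an intermediate time $\sigma_l \in (\sigma_i,\sigma_j)$ at which $e$ is removed in the cycle-breaking step (\cref{alg:ConstructionGen:EdgeRemoval}). From \Cref{claim:AlgGenNeverRemovesActiveEdges} we then obtain $\ell_x(\sigma_l) < \ell_y(\sigma_l)$, while at time $\sigma_j$ the edge $xy$ is active and hence satisfies
\[
	\ell_x(\sigma_j) - \ell_y(\sigma_j) \;=\; c_{xy}(\sigma_j) \;\geq\; \tau_{xy} \;\geq\; \tau_{\min}.
\]
Hence the function $\theta \mapsto \ell_x(\theta) - \ell_y(\theta)$ must grow by strictly more than $\tau_{\min}$ between $\sigma_l$ and $\sigma_j$.

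The main technical step — and the expected obstacle — is to supply a uniform bound $L$ on $|\rDeriv{\ell_v}|$ that depends only on the network $\network$. This can be done by noting that the edge inflow rates are bounded a priori by the total network inflow rates plus the edge capacities, so the queue-length derivatives $\rDeriv{q_e}$, and with them the derivatives of the travel times $c_e$, are uniformly bounded; since each label $\ell_v$ is a finite pointwise minimum of sums of the form $c_{e_1}+\dots+c_{e_k}$ along simple paths (of length at most $\abs{V}$), its one-sided derivatives inherit a uniform bound $L$. Granted this, the displayed growth of $\ell_x - \ell_y$ forces $\sigma_j - \sigma_l \geq \tau_{\min}/(2L)$. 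Choosing $C \coloneqq \tau_{\min}/(2L)$ (or any smaller positive constant) contradicts $\sigma_j - \sigma_l < C$, so no edge witnesses two $\tilde{E}$-changes in any interval of length $C$, and the desired bound of $\abs{E}$ on the number of changes follows.
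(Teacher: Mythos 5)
Your proposal is correct and follows essentially the same route as the paper: it uses the bound on label slopes (the paper's \Cref{claim:BoundOnLabelSlopes}) together with \Cref{claim:AlgGenNeverRemovesActiveEdges} to show that the difference $\ell_x-\ell_y$ must change by at least $\tau_{\min}$ between a removal of $xy$ from $\tilde{E}$ and its (re-)addition, giving $C=\tau_{\min}/(2L)$. The only cosmetic difference is that you compare a removal time with the subsequent re-addition, while the paper compares the last addition with the subsequent removal; the argument is otherwise identical.
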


\begin{proof}
	The proof of this \namecref{lemma:AlgGenETildeChangesNotToOften} mainly rest on the following claim stating that for any fixed network we can bound the slope of the node labels of any feasible flow in this network by some constant.
	
	\begin{claim}\label{claim:BoundOnLabelSlopes}
		For any given network there exists some constant $L > 0$ such that for all feasible flows, all nodes $v$ and all times $\theta$ we have $\abs{\ell_v'(\theta)} \leq L$.
	\end{claim}

	\begin{proofClaim}
		First note that for any node $v$ we can bound the maximal inflow rate into this node by some constant $L_v$ as follows:	
			\[\sum_{e \in \edgesEntering{v}}f^-_e(\theta) + u_v(\theta) \overset{\text{\eqref{eq:FeasibleFlow-QueueOpAtCap}}}{\leq} \sum_{e \in \edgesEntering{v}}\nu_e + \max\set{u_v(\theta) | \theta \in \IR_{\geq 0}} \eqqcolon L_v.\]
		Using flow conservation \eqref{eq:FeasibleFlow-FlowConservation} this, in turn, allows us to bound the inflow rates into all edges $e \in \edgesLeaving{v}$ and, thus, the rate at which the queue length and the current travel time on these edges can change:
			\[-1 \leq c'_e(\theta) \overset{\text{\eqref{Def:QueueLength},\eqref{Def:InstantaneousTravelTime}}}{\leq} \frac{f^+_e(\theta)}{\nu_e} \leq \frac{L_v}{\nu_e} \eqqcolon L_e.\]
		Since this rate of change is also lower bounded by $-1$  setting $L \coloneqq \sum_{e \in E}\max\set{1,L_e}$ proves the claim, as for all nodes $v$ and times $\theta$ we then have
			\[\abs{\ell'_v(\theta)} \leq \sum_{e \in E}\abs{c'_e(\theta)} \leq \sum_{e \in E}L_e = L.\qedhere\]		
	\end{proofClaim}

	Now, from \Cref{claim:AlgGenNeverRemovesActiveEdges} we know that, whenever we remove an edge $xy$ from $\tilde{E}$ at time $\theta$ we must have $\ell_{x}(\theta) < \ell_{y}(\theta)$. But at the time where we last added this edge to $\tilde{E}$, say at time $\theta' < \theta$, it must have been active (since we only ever add active edges to $\tilde{E}$) and, thus, we had $\ell_x(\theta') = \ell_y(\theta') + c_{xy}(\theta') \geq \ell_y(\theta') + \tau_{\min}$. Therefore, the difference between the labels at $x$ and $y$ has changed by at least by $\tau_{\min}$ between $\theta'$ and $\theta$. \Cref{claim:BoundOnLabelSlopes} then directly implies $\theta-\theta' \geq \frac{\tau_{\min}}{2L}$. So, for any time interval of length at most $\frac{\tau_{\min}}{2L}$ each edge can be added at most once to $\tilde{E}$. Since $\tilde{E}$ only ever changes when we add at least one new edge to it, setting $C \coloneqq \frac{\tau_{\min}}{2L}$ proves the \namecref{lemma:AlgGenETildeChangesNotToOften}.
\end{proof}

\begin{theorem}\label{thm:TerminationOfGenConstructionAlg}
	For any single-sink network with piecewise constant network-inflow rates an IDE can be constructed in finite time using \Cref{alg:ConstructionGen}.
\end{theorem}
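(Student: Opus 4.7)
The plan is to combine the two preceding lemmas with the finite-phase result for acyclic networks. By Lemma~\ref{lemma:AlgGenIsCorrect} the algorithm is a valid instance of \Cref{alg:ConstructionOld}, so once we establish that it halts, the output will automatically be an IDE; hence the entire task reduces to bounding the total number of extension phases.

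First, I would split the finite time horizon $[0,T)$ into the maximal subintervals during which the set $\tilde{E}$ (and hence the chosen topological order $v_1<\dots<v_n$) does not change. On each such subinterval the algorithm produces extensions exactly as \Cref{alg:Construction} would, using a fixed node ordering that is by construction a topological order for every active subgraph encountered during the subinterval (this is precisely what Lemma~\ref{lemma:AlgGenIsCorrect} establishes via \Cref{claim:AlgGenNeverRemovesActiveEdges}). Therefore \Cref{thm:TerminationOfExtensionAlg} applies verbatim to each subinterval, yielding only finitely many extension phases per subinterval.

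Next, I would invoke Lemma~\ref{lemma:AlgGenETildeChangesNotToOften}: there exists a constant $C>0$ depending only on $\network$ such that in any time interval of length $C$ the set $\tilde{E}$ is modified at most $\abs{E}$ times. Covering $[0,T)$ by at most $\lceil T/C\rceil$ intervals of length $C$, the total number of modifications of $\tilde{E}$ on $[0,T)$ is at most $\abs{E}\lceil T/C\rceil$, which is finite. Consequently the partition of $[0,T)$ into subintervals of constant $\tilde{E}$ is finite, and the sum of the finitely many phases contributed by each subinterval remains finite.

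The only nontrivial point in this argument is the interplay between adding new active edges (line~\ref{alg:ConstructionGen:AddNewlyActiveEdges}) and removing edges to restore acyclicity (line~\ref{alg:ConstructionGen:EdgeRemoval}); this is exactly where \Cref{claim:AlgGenNeverRemovesActiveEdges} and the bound on label slopes (\Cref{claim:BoundOnLabelSlopes}) were needed in Lemma~\ref{lemma:AlgGenETildeChangesNotToOften}, and no further work should be required here. Combining the three bounds (finite $T$, finitely many $\tilde{E}$-changes per unit time, finitely many phases between $\tilde{E}$-changes) yields termination in finitely many extension steps, completing the proof.
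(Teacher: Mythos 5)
Your proposal is correct and follows essentially the same route as the paper's proof: reduce correctness to Lemma~\ref{lemma:AlgGenIsCorrect}, apply \Cref{thm:TerminationOfExtensionAlg} on each maximal interval with static $\tilde{E}$ (where a fixed order topological for all active subgraphs exists), and use Lemma~\ref{lemma:AlgGenETildeChangesNotToOften} together with the finite horizon $T$ to bound the number of such intervals. No substantive difference from the paper's argument.
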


\begin{proof}
	By \Cref{lemma:AlgGenIsCorrect} \Cref{alg:ConstructionGen} is a special case of \Cref{alg:ConstructionOld}. Thus, for any interval with static $\tilde{E}$ it produces the same flow as \Cref{alg:Construction}. In particular, by \Cref{thm:TerminationOfExtensionAlg}, for any such interval the constructed flow consists of finitely many phases. Finally, \Cref{lemma:AlgGenETildeChangesNotToOften} shows that the whole relevant interval $[0,T]$ can be partitioned into a finite number of intervals with static set $\tilde{E}$. Consequently, \Cref{alg:ConstructionGen} constructs an IDE  with finitely many phases and, thus, terminates within finite time.
\end{proof}

As in the acyclic case we can again also extract an explicit upper bound on the number of phases.

\begin{prop}\label{prop:AlgorithmGeneralExplicitBound}
	For any single-sink network with piecewise constant network inflow rates the number of phases of any IDE flow constructed by \Cref{alg:ConstructionGen} is bounded by
		\[\BigO\left(P\Big(2(\Delta+1)^{4^\Delta+1}\Big)^{2L\cdot \abs{E}\cdot \abs{V}\cdot T / \tau_{\min}^2}\right),\]
	where, again, $\Delta \coloneqq \max\set{\abs{\edgesLeaving{v}} | v \in V}$ is the maximum out-degree in the given network, $P$ is the number of intervals with constant network inflow rates and $L$ the bound on the slopes of the label functions from \Cref{claim:BoundOnLabelSlopes}.
\end{prop}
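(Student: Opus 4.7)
The plan is to mirror the argument used in the proof of \Cref{prop:AlgorithmAcyclicExplicitBound} (the acyclic case) while accounting for the additional source of phase subdivision coming from changes of the working edge set $\tilde E$ in \Cref{alg:ConstructionGen}. The key new ingredient is \Cref{lemma:AlgGenETildeChangesNotToOften}, which bounds how often $\tilde E$ can change during the construction.

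First, I would apply \Cref{lemma:AlgGenETildeChangesNotToOften} with $C = \tau_{\min}/(2L)$ to bound the total number of changes of $\tilde E$ during $[0,T]$ by $N \coloneqq \abs{E}\cdot T/C = 2L\abs{E}T/\tau_{\min}$. This partitions $[0,T]$ into at most $N+1$ maximal sub-intervals on which the topological order recomputed in \Cref{alg:ConstructionGen:TopOrder} of \Cref{alg:ConstructionGen} stays fixed. Inside each such sub-interval, by \Cref{lemma:AlgGenIsCorrect}, \Cref{alg:ConstructionGen} executes exactly the steps of \Cref{alg:Construction} applied to the acyclic auxiliary graph $(V,\tilde E)$, so the counting from the proof of \Cref{prop:AlgorithmAcyclicExplicitBound} applies verbatim: each pass of the outer for-loop can multiply the current number of local phases by at most $\big(2(\Delta+1)^{4^\Delta+1}\big)^{\abs{V}}$.

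Second, I would combine these per-sub-interval bounds into a global one. Within a sub-interval of length $\ell_j$ the acquired multiplier is at most $\big(2(\Delta+1)^{4^\Delta+1}\big)^{\abs{V}\ell_j/\tau_{\min}}$, and every transition to a new sub-interval forces a fresh outer for-loop pass that compounds the multiplier multiplicatively with the phases already built up. Summed over the $N+1$ sub-intervals, the exponent grows from the acyclic value $\abs{V}T/\tau_{\min}$ to roughly $\abs{V}\cdot N/\tau_{\min} = 2L\abs{E}\abs{V}T/\tau_{\min}^2$. Including the factor $P$ from piecewise constant network inflow rates then yields the claimed bound.

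The main obstacle lies in the bookkeeping across sub-interval boundaries: one has to verify that phase subdivision really does compound multiplicatively rather than additively, so that the final exponent is $\abs{V}\cdot N/\tau_{\min}$ and not merely $\abs{V}\cdot \max_j \ell_j/\tau_{\min} \cdot (N+1)$. It is at this point that \Cref{claim:BoundOnLabelSlopes} becomes crucial: the slope bound $L$ depends only on the network data and not on $T$, which is what keeps the exponent linear in $T$ and prevents a $T^2$ from creeping in through the interaction of $N$ and the per-sub-interval $T/\tau_{\min}$ factor.
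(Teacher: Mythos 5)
Your proposal follows essentially the same route as the paper's own proof: it uses \Cref{lemma:AlgGenIsCorrect} to reduce each interval with fixed node order to the acyclic analysis of \Cref{prop:AlgorithmAcyclicExplicitBound}, and \Cref{lemma:AlgGenETildeChangesNotToOften} (with $C=\tau_{\min}/(2L)$) to bound the number of order changes by $2L\abs{E}T/\tau_{\min}$, which amounts to the paper's substitution of $T$ by $2L\abs{E}T/\tau_{\min}$ in the acyclic bound. The slightly informal bookkeeping in your final summation is no coarser than the step the paper itself takes, so the argument is correct at the same level of rigor.
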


\begin{proof}
	For any time interval with fixed node order \Cref{alg:ConstructionGen} is equivalent to \Cref{alg:Construction} and, thus, the bound from \Cref{prop:AlgorithmAcyclicExplicitBound} applies. Also note, that in \Cref{alg:Construction} we could change the node order after every time step of length $\tau_{\min}$ without any impact on correctness or the bound on the number of phases (as long as we always choose an order which is a topological order with respect to the active edges). As, by \Cref{lemma:AlgGenETildeChangesNotToOften}, the node order in \Cref{alg:ConstructionGen} changes at most $2L\cdot\abs{E}/\tau_{\min}$ times during any unit time interval, replacing $T$ by $2L\cdot\abs{E}\cdot T/\tau_{\min}$ in the bound for \Cref{alg:Construction} yields a valid bound for the number of phases of \Cref{alg:ConstructionGen}.
\end{proof}

\begin{remark}
	If presented with rational input data (i.e. rational capacities, node inflow rate, current queue lengths, current distance labels and slopes of distance labels of neighbouring nodes) the water filling procedure \Cref{Alg:FindExtension} again produces a rational solution to \eqref{OPT:bvThetaK} (i.e. rational edge inflow rates) which then, in turn, results in a rational maximal extension length $\alpha$. Thus, \Cref{alg:ConstructionGen} can be implemented as an exact combinatorial algorithm.
\end{remark}

Since DE and IDE coincide for parallel link networks and for DE paths can always be replaced by single edges, the above theorem also implies the following result for DE. Note, however, that, while to the best of our knowledge this result has never explicitly been stated elsewhere, it seems very likely that it could also be shown in a more direct way for this very simple graph class.

\begin{cor}
	On parallel paths networks Dynamic Equilibria can be constructed in finite time using the natural extension algorithm.
\end{cor}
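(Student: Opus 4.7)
The plan is to reduce the parallel paths case to the parallel links case and then invoke Theorem~\ref{thm:TerminationOfGenConstructionAlg}. Concretely, let $\network$ be a parallel paths network with source $s$, sink $t$ and internally vertex-disjoint $s$-$t$ paths $P_1,\dots,P_k$. First I would argue that for the purpose of computing Dynamic Equilibria each path $P_i$ can be collapsed to a single edge $e_i$ with physical transit time $\tau_{e_i} \coloneqq \sum_{e \in P_i}\tau_e$ and rate capacity $\nu_{e_i} \coloneqq \min_{e \in P_i}\nu_e$. The standard reasoning here is that in a DE on an internally path-like subnetwork no rerouting decisions can be made at internal nodes (they have out-degree one), so the dynamics of a path is equivalent to that of a single Vickrey edge with the minimum capacity forming the bottleneck and the summed transit time as the free-flow delay. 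This reduces $\network$ to a parallel links network $\network'$ with the same $s$ and $t$ and edges $e_1,\dots,e_k$; moreover, from any DE in $\network'$ one can recover a DE in $\network$ by pushing the flow through the corresponding path in the obvious way.

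Next I would use the well-known fact, referenced in the preceding sentence of the paper, that on parallel links networks the two equilibrium notions DE and IDE coincide. Intuitively, since every path consists of exactly one edge, the forecast of future queue evolution that distinguishes DE from IDE makes no difference: the currently shortest path and the shortest path in hindsight use the same edge, so the IDE condition~\eqref{eq:DefIDE-OnlyUseSP} already enforces dynamic equilibrium, and vice versa.

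Having reduced to computing an IDE on the parallel links network $\network'$, the natural extension algorithm for DE coincides with the $\alpha$-extension algorithm for IDE (\Cref{alg:ConstructionGen}) applied to $\network'$. Since $\network'$ is a single-sink network with piecewise constant network inflow rates (the inflow at $s$ being inherited from $\network$), \Cref{thm:TerminationOfGenConstructionAlg} yields that the $\alpha$-extension algorithm computes an IDE, which by the coincidence above is a DE, in finitely many phases. Unfolding the collapsed edges back into paths gives a DE in $\network$ constructed in finitely many phases.

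The only genuinely non-routine point is justifying the path-collapse step rigorously: one must check that extension phases of the algorithm on $\network$ correspond exactly to extension phases of the algorithm on $\network'$ (up to a bounded delay of $\tau_{e_i}$ per collapsed path), and that the queue on the bottleneck edge of each $P_i$ behaves as the queue on the representative edge $e_i$. This is essentially bookkeeping, but it is the step where one has to be careful that no new non-finiteness could sneak in from the internal structure of the paths; the absence of choice at internal nodes is what rules this out.
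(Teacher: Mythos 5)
Your proposal takes essentially the same route as the paper: collapse each path to a single Vickrey edge (summed transit times, minimum capacity), invoke the coincidence of DE and IDE on parallel-link networks, and then apply \Cref{thm:TerminationOfGenConstructionAlg}. The paper states this in a single sentence immediately before the corollary and, like you, leaves the path-collapse step as a cited fact rather than proving it, so your elaboration matches both the approach and the level of rigor.
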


%!TEX root = ../article-Algo.tex

\section{Computational Complexity of IDE-Flows}

While \Cref{thm:TerminationOfGenConstructionAlg} shows that IDE flows can be constructed in finite time, the bound provided in \Cref{prop:AlgorithmGeneralExplicitBound} is clearly superpolynomial. We now want to show that in some sense this is to be expected. Namely, we first look at the output complexity of any such algorithm, i.e. how complex the structure of IDE flows can be. Then we show that many natural decision problems involving IDE are actually NP-hard.

\subsection{Output Complexity and Steady State}\label{sec:SteadyStateCC}

In this section we call an open interval $(a,b) \subseteq \IR_{\geq 0}$ a \emph{phase} of a feasible flow $f$, if it is a maximal interval with constant in- and outflow rates for all edges. Then it seems reasonable to expect of any algorithm computing feasible flows that its output has to contain in some way a list of the flow's phases and corresponding in- and outflow rates. In particular, the number of phases of a flow is a lower bound for the runtime of any algorithm determining that flow. This observation allows us to give an exponential lower bound for the output complexity and therefore also for the worst case runtime of any algorithm determining IDE flows. This remains true even if we only look at acyclic graphs and allow for our algorithm to recognize simple periodic behaviour and abbreviate the output accordingly.

\begin{theorem}\label{thm:Complexity}
	The worst case output complexity of calculating IDE flows is not polynomial in the encoding size of the instance, even if we are allowed to use periodicity to reduce the encoding size of the determined flow. This is true even for series parallel graphs.
\end{theorem}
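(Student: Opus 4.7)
The plan is to exhibit an explicit family $(\network_n)_{n\in\IN}$ of series-parallel instances whose encoding size grows as a polynomial in $n$ while the unique IDE flow in $\network_n$ necessarily has a number of phases that is super-polynomial in $n$, and moreover whose sequence of phase lengths is not eventually periodic (so allowing a ``compress by one period'' shorthand in the output does not save anything).

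First, I would build a \emph{base oscillator gadget}: two parallel $s$\,-\,$t$ paths of almost equal physical length but different capacities, supplied from $s$ by a constant network inflow rate that slightly exceeds the capacity of either path on its own. By tuning $\tau_e$ and $\nu_e$ appropriately one can force an IDE to alternate: flow enters one path until a queue has built up enough to make the two current travel times cross, then switches to the other, whose (smaller) residual queue now depletes to zero, triggering another switch, and so on. A careful choice of parameters makes the successive phase lengths of this alternation strictly decreasing (as the mean queue grows), so the pattern is already \emph{not} purely periodic. Computing $\rDeriv{c_e}$ from \eqref{Def:QueueLength}\,/\,\eqref{Def:InstantaneousTravelTime} gives an exact recursion for the phase lengths which I would use to verify the switching.

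Next, I would nest this gadget inside itself in a series-parallel way to create multiple time scales. The idea is that at level $k$ the ``inflow'' that feeds the level-$k$ gadget is itself the (piecewise constant) outflow of a level-$(k{-}1)$ gadget: each switch at level $k{-}1$ changes the constant inflow rate seen by level $k$ and therefore resets the local oscillation regime of level $k$. Composing $n$ such levels produces an IDE whose number of phases grows like a product of the per-level phase counts, hence exponentially in $n$, while the encoding size (number of edges, jump points of the single source inflow function, bit-length of the rationals $\tau_e,\nu_e$) stays polynomial in $n$. This gives the super-polynomial output complexity claim.

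The hardest step is ruling out \emph{periodic compression}: one must show that no finite window of phases repeats forever. The plan is to use a monotonicity invariant built into the construction — at least one queue length in the instance is strictly and monotonically increasing across the whole time horizon before termination (this can be enforced by picking the gadget parameters so that on average each level injects a little more flow than it drains until some late cut-off). Since the node label function $\ell_v$ on the sink side of such a growing queue then takes infinitely many distinct values, the sequence of $(\rDeriv{\ell_v}(\theta))$ is non-eventually-periodic, and by \eqref{eq:rDerivConditionForExtension} and \Cref{claim:SmallestExtension:OnlyFinitelyManySubphases} the sequence of constant edge-inflow vectors produced in successive phases must also be non-eventually-periodic. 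Hence no ``phase list + period'' encoding of sub-exponential length can describe the flow, proving the theorem; the construction is series-parallel by design, giving the final sentence.
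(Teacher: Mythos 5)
Your overall strategy (a polynomially-encoded family with super-polynomially many, non-periodic phases) is legitimate in principle, but as proposed it has three concrete gaps. First, the base oscillator is underspecified at the one point that matters: for two parallel $s$-$t$ paths fed directly from $s$, an IDE does \emph{not} oscillate --- once both labels coincide, the water-filling condition \eqref{eq:rDerivConditionForExtension} simply splits the inflow so that both current travel times grow at the same rate (indeed, on parallel links IDE and DE coincide). Oscillation requires the bottleneck queue to build up at a node \emph{downstream} of the decision node, so that the myopic choice at $s$ reacts with a delay; this is exactly the role of the edge $wx$ in the network of \Cref{fig:CompComplexity}, and your gadget would have to be redesigned along those lines. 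Second, the nesting step is not compositional in the way you assume: in an IDE the label $\ell_v$ at an upstream node depends on all queues downstream, so the inner gadget's oscillation feeds back into the outer gadget's routing decisions, and the claim that the phase counts of the levels multiply (rather than interfere or even cancel) is precisely the hard dynamical statement you would have to prove; nothing in the proposal controls this feedback, nor do you argue that the resulting IDE is unique (or that \emph{every} IDE of the instance has many phases), which a worst-case output lower bound needs.

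Third, and most seriously, the anti-periodicity argument is a non sequitur: a queue length that is strictly monotonically increasing, or a label $\ell_v$ taking infinitely many distinct values, is perfectly compatible with an eventually periodic list of (inflow-rate vector, phase length) pairs --- e.g.\ a linearly growing queue is produced by a single constant phase, and a periodic block with positive net queue growth repeats verbatim while the queue drifts upward. So the inference from ``infinitely many label values'' to ``non-periodic derivative sequence'' fails, and \Cref{claim:SmallestExtension:OnlyFinitelyManySubphases} does not help here. The paper avoids all three difficulties with a much more economical construction: a single fixed five-node series-parallel network (\Cref{fig:CompComplexity}) with constant inflow $2$ on $[0,U]$, whose encoding size is $\BigO(\log U)$, whose unique IDE has $\bigOm(U)$ phases, and whose phase boundaries occur at times of the form $4k+2^{-k}$ etc.\ (\Cref{tab:CompComplexity}); the geometrically drifting offsets $2^{-k}$ make the phase-length sequence genuinely aperiodic, which is the property your monotonicity invariant does not deliver. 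If you want to salvage your route, you would need an invariant of that kind --- phase lengths or switching offsets that change from cycle to cycle --- rather than a growing queue.
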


\begin{proof}
	For any given $U \in \INs$ consider the network pictured in \Cref{fig:CompComplexity} with a constant inflow rate of $2$ at $s$ over the interval $[0,U]$. This network can clearly be encoded in $\bigO(\log U)$ space. The unique (up to changes on a set of measure zero) IDE is displayed up to time $\theta = 6.5$ in \Cref{fig:CompComplexity} and described for all times in \Cref{tab:CompComplexity}. As this pattern is clearly non-periodic and continues up to time $\theta=U$, it exhibits $\bigOm(U)$ distinct phases. This proves the theorem.
	\begin{figure}[ht]\centering
		\begin{adjustbox}{max width=\textwidth}
			\begin{tikzpicture}[scale=1.2]
			
	\newcommand{\exampleGraph}[2]{		
		\node[labeledNodeS] (s) at (0, 4) {$s$};
		\draw (s) ++(4, 0) node[labeledNodeS] (v) {$v$};
		\draw (s) ++(0, -4) node[labeledNodeS] (w) {$w$};
		\draw (w) ++(2, 0) node[labeledNodeS, fill=white] (x) {$x$};
		\draw (v) ++(0, -4) node[labeledNodeS] (t) {$t$};
		
		\draw[normalEdge,->] (s) -- node[below] {\ifthenelse{\equal{#2}{labels}}{\Large$(\tau_{sv},\nu_{sv})=(1,2)$}{}} (v);
		\draw[normalEdge,->] (s) -- node[right] {\ifthenelse{\equal{#2}{labels}}{\Large$(1,2)$}{}} (w);
		\draw[normalEdge,->] (v)  -- node[left] {\ifthenelse{\equal{#2}{labels}}{\Large$(1,1)$}{}} (t);
		\draw[normalEdge,->] (w) -- node[below] {\ifthenelse{\equal{#2}{labels}}{\Large$(1,1)$}{}} (x);
		\draw[normalEdge,->] (x) -- node[below] {\ifthenelse{\equal{#2}{labels}}{\Large$(1,1)$}{}} (t);
		
		\node at (2,5) [rectangle,draw] (theta0) {\Large$\theta=#1$:};

		\draw (s) +(0,1) node[rectangle, draw, minimum width=1.5cm, minimum height=1cm,fill=\colComRed] (su) {\Large $u_1\equiv 2$}; 
		\draw [->, line width=2pt] (su) -- (s);
	}
	
	\newcommand{\colComRed}{red!30}
	\newcommand{\colComBlue}{blue!60}
	
	% theta = 0 #########################################################################
	\begin{scope}
	
		\exampleGraph{0}{labels}
		
		\end{scope}

	% theta = 1 #########################################################################
	\begin{scope}[xshift=7cm]
		
		\node[rectangle, color=\colComRed,draw, minimum width=4cm,minimum height=.4cm, fill=\colComRed] at (2,4) {};
		
		\exampleGraph{1}{}		
	\end{scope}	

	% theta = 2 #########################################################################
	\begin{scope}[xshift=14cm]
		
		\node[rectangle, color=\colComRed,draw, minimum width=4cm,minimum height=.4cm, fill=\colComRed] at (2,4) {};
		\node[rectangle, color=\colComRed,draw, minimum width=4cm,minimum height=.2cm, fill=\colComRed,rotate around={90:(0,0)}] at (4,2) {};
		\node[rectangle, color=\colComRed, draw, minimum width=.8cm,minimum height=1cm, fill=\colComRed,rotate around={90:(0,0)}] at (4.5,3.2) {};
		\node at (5,2.5) [] () {\Large $q_{vt}(2)=1$};
		
		\exampleGraph{2}{}		
	\end{scope}	
	
	% theta = 3 #########################################################################
	\begin{scope}[xshift=21cm]

		\node[rectangle, color=\colComRed,draw, minimum width=4cm,minimum height=.2cm, fill=\colComRed,rotate around={90:(0,0)}] at (4,2) {};
		\node[rectangle, color=\colComRed, draw, minimum width=.8cm,minimum height=2cm, fill=\colComRed,rotate around={90:(0,0)}] at (4.95,3.2) {};
		\node at (5,2.5) [] () {\Large $q_{vt}(3)=2$};
		
		\node[rectangle, color=\colComRed,draw, minimum width=4cm,minimum height=.4cm, fill=\colComRed,rotate around={90:(0,0)}] at (0,2) {};
		
		\exampleGraph{3}{}		
	\end{scope}		
	
	% theta = 3.5 #########################################################################
	\begin{scope}[yshift=-6.5cm]
		
		\node[rectangle, color=\colComRed,draw, minimum width=4cm,minimum height=.2cm, fill=\colComRed,rotate around={90:(0,0)}] at (4,2) {};
		\node[rectangle, color=\colComRed, draw, minimum width=.8cm,minimum height=1.5cm, fill=\colComRed,rotate around={90:(0,0)}] at (4.75,3.2) {};
		\node at (5.4,2.5) [] () {\Large $q_{vt}(3.5)=1.5$};
		
		\node[rectangle, color=\colComRed,draw, minimum width=4cm,minimum height=.4cm, fill=\colComRed,rotate around={90:(0,0)}] at (0,2) {};
		\node[rectangle, color=\colComRed,draw, minimum width=1cm,minimum height=.2cm, fill=\colComRed] at (.8,0) {};
		\node[rectangle, color=\colComRed, draw, minimum width=.8cm,minimum height=.5cm, fill=\colComRed,rotate around={180:(0,0)}] at (0.8,-0.3) {};
		\node at (2.5,-.6) [] () {\Large $q_{wx}(3.5)=0.5$};		
				
		\exampleGraph{3.5}{}		
	\end{scope}	
	
	% theta = 4.5 #########################################################################
	\begin{scope}[yshift=-6.5cm,xshift=7cm]

		\node[rectangle, color=\colComRed,draw, minimum width=4cm,minimum height=.4cm, fill=\colComRed] at (2,4) {};
		\node[rectangle, color=\colComRed,draw, minimum width=4cm,minimum height=.2cm, fill=\colComRed,rotate around={90:(0,0)}] at (4,2) {};
		\node[rectangle, color=\colComRed, draw, minimum width=.8cm,minimum height=.5cm, fill=\colComRed,rotate around={90:(0,0)}] at (4.3,3.2) {};
		\node at (5.4,2.5) [] () {\Large $q_{vt}(4.5)=0.5$};
		
		\node[rectangle, color=\colComRed,draw, minimum width=3cm,minimum height=.2cm, fill=\colComRed] at (1.6,0) {};
		\node[rectangle, color=\colComRed, draw, minimum width=.8cm,minimum height=1.5cm, fill=\colComRed,rotate around={180:(0,0)}] at (0.8,-.7) {};
		\node at (2.5,-.6) [] () {\Large $q_{wx}(4.5)=1.5$};	
		
		\exampleGraph{4.5}{}		
	\end{scope}		
	
	% theta = 5.5 #########################################################################
	\begin{scope}[yshift=-6.5cm,xshift=14cm]
		
		\node[rectangle, color=\colComRed,draw, minimum width=4cm,minimum height=.4cm, fill=\colComRed] at (2,4) {};
		\node[rectangle, color=\colComRed,draw, minimum width=4cm,minimum height=.2cm, fill=\colComRed,rotate around={90:(0,0)}] at (4,2) {};
		\node[rectangle, color=\colComRed, draw, minimum width=.8cm,minimum height=1.5cm, fill=\colComRed,rotate around={90:(0,0)}] at (4.75,3.2) {};
		\node at (5.4,2.5) [] () {\Large $q_{vt}(5.5)=1.5$};
		
		\node[rectangle, color=\colComRed,draw, minimum width=4cm,minimum height=.2cm, fill=\colComRed] at (2,0) {};
		\node[rectangle, color=\colComRed, draw, minimum width=.8cm,minimum height=.5cm, fill=\colComRed,rotate around={180:(0,0)}] at (0.8,-0.3) {};
		\node at (2.5,-.6) [] () {\Large $q_{wx}(5.5)=0.5$};	
				
		\exampleGraph{5.5}{}		
	\end{scope}		

	% theta = 6.5 #########################################################################
	\begin{scope}[yshift=-6.5cm,xshift=21cm]
		
		\node[rectangle, color=\colComRed,draw, minimum width=4cm,minimum height=.2cm, fill=\colComRed,rotate around={90:(0,0)}] at (4,2) {};
		\node[rectangle, color=\colComRed, draw, minimum width=.8cm,minimum height=2.5cm, fill=\colComRed,rotate around={90:(0,0)}] at (5.1,3.2) {};
		\node at (5.4,2.5) [] () {\Large $q_{vt}(6.5)=2.5$};

		\node[rectangle, color=\colComRed,draw, minimum width=4cm,minimum height=.4cm, fill=\colComRed,rotate around={90:(0,0)}] at (0,2) {};		
		\node[rectangle, color=\colComRed,draw, minimum width=3cm,minimum height=.2cm, fill=\colComRed] at (2.4,0) {};
		
		\exampleGraph{6.5}{}		
	\end{scope}	
\end{tikzpicture}
		\end{adjustbox}
		\caption{A network (top left picture) where constant inflow rate of $2$ over $[0,U]$ leads to an IDE with $\bigOm(U)$ different phases. The following pictures show the first states of the network, which are described in general in \Cref{tab:CompComplexity}.}\label{fig:CompComplexity}
	\end{figure}
	\begin{table}[h]\centering
		\begin{tabular}{l|c|c|cc|cc|c|c}
			$\theta=$ 		& $f^+_{vt}(\theta)$ & $f^+_{wx}(\theta)$ & $q_{vt}(\theta)$ 	&& $q_{wx}(\theta)$ 		&& $f^+_{sv}(\theta)$ 	& $f^+_{sw}(\theta)$ \\\hline
			$4k+2^{-k}-1$ 	& $0$				 & $2$				  & $2-2^{-k}$ & $\searrow$	& $1-2^{-k}$ & $\nearrow$ 	& $2$					& $0$ \\
			$4k+2^{-k}$ 	& $2$				 & $0$				  & $1-2^{-k}$ & $\nearrow$	& $2-2^{-k}$ & $\searrow$ 	& $2$					& $0$ \\
			$4k+2^{-k}+1$ 	& $2$				 & $0$				  & $2-2^{-k}$ & $\nearrow$	& $1-2^{-k}$ & $\searrow$ 	& $0$					& $2$ \\
			$4k+2$ 			& $2$				 & $0$				  & $2-2^{-k}$ & $\nearrow$	& $0$ & $\rightarrow$ 	& $0$					& $2$ \\
			$4k+2^{-k}+2$ 	& $0$				 & $2$				  & $3-2^{-k}$ & $\searrow$	& $0$ & $\nearrow$ 			& $0$					& $2$ \\
		\end{tabular}
		\caption{Phases of the (unique) IDE in the instance of \Cref{fig:CompComplexity}. For all $k \in \INo$ the table includes the (constant) inflow rates into edges on the intervals $(4k+2^{-k}-1,4k+2^{-k})$, $(4k+2^{-k},4k+2^{-k}+1)$,$(4k+2^{-k}+1,4k+2)$, $(4k+2,4k+2^{-k}+2)$ and $(4k+2^{-k}+2,4k+2^{-(k+1)}+3)$ as well as the queue lengths on the edges $vt$ and $wx$ at the beginning of these intervals and the rate of change for the queue lengths over the following interval ($\nearrow$ stands for an increase at rate $1$, $\searrow$ for a decrease at rate $-1$ and $\rightarrow$ for no change).}\label{tab:CompComplexity}
	\end{table}
\end{proof}

\begin{remark}
	In \cite[Section 5.2]{CominettiCO20} \citeauthor*{CominettiCO20} sketch a family of instances of size $\bigO(d^2)$ where a dynamic equilibrium flow exhibits an exponential number of phases (of order $\bigOm(2^d)$) before it reaches a stable state. 
\end{remark}

The network constructed in the above proof can also be used to gain some insights into the long term behavior of IDE flows, i.e. how such flows behave if the inflow rates continue forever. In order to analyze this long term behavior of dynamic equilibrium flows Cominetti~et~al. define in \cite[Section 3]{CominettiCO20} the concept of a steady state:

\begin{defn}
	A feasible flow $f$ with forever lasting constant inflow rate reaches a \emph{steady state} if there exists a time $\tilde{\theta}$ such that after this time all queue lengths stay the same forever i.e.
		\[q_e(\theta) = q_e(\tilde{\theta}) \text{ f.a. } e \in E, \theta \geq \tilde{\theta}.\]
\end{defn}

For dynamic equilibrium flows Cominetti et al. then show that the obvious necessary condition that the inflow rate is at most the minimal total capacity of any $s$-$t$ cut is also a sufficient condition for any dynamic equilibrium in such a network to eventually reach a steady state (\cite[Theorem 3]{CominettiCO20}). We will show that this is not true for IDE flows - even if we consider a weaker variant of steady states:

\begin{defn}
	A feasible flow $f$ reaches a \emph{periodic state} if there exists a time $\tilde{\theta}$ and a periodicity $p \in \IR_{\geq 0}$ such that after time $\tilde{\theta}$ all queue lengths change in a periodic manner, i.e.
	\[q_e(\theta+kp) = q_e(\theta) \text{ f.a. } e \in E, \theta \geq \tilde{\theta}, k \in \INs.\]
\end{defn}

Note that, in particular, every flow reaching a stable state also reaches a periodic state (with arbitrary periodicity).

\begin{theorem}\label{thm:NoSteadyState}
	There exists a series parallel network with a forever lasting constant inflow rate $u$ at a single node $s$, satisfying $u \leq \sum_{e \in \edgesLeaving{X}}\nu_e$ for all $s$-$t$ cuts $X$, where no IDE ever reaches a periodic state.
\end{theorem}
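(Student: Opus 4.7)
The strategy is to reuse the series-parallel network $\network$ from the proof of \Cref{thm:Complexity} (cf.\ \Cref{fig:CompComplexity}), but with the constant inflow rate $u \equiv 2$ at $s$ now maintained forever instead of switching off at time $U$. To verify the cut hypothesis, observe that the $s$-$t$ cut $X = \{s, v, w, x\}$ has $\edgesLeaving{X} = \{vt, xt\}$ with total outgoing capacity $\nu_{vt}+\nu_{xt}=2$, and every other $s$-$t$ cut in the network has total capacity at least $2$. Hence $u = 2 \leq \sum_{e \in \edgesLeaving{X}}\nu_e$ holds for every $s$-$t$ cut $X$.

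Next I would extend the analysis from \Cref{thm:Complexity} and argue that the unique IDE on $\IR_{\geq 0}$ follows the schedule described in \Cref{tab:CompComplexity} for every $k \in \INo$. The dynamics within each superphase depend only on the two current queue lengths $q_{vt}$ and $q_{wx}$ together with the constant inflow rate $2$, so a short self-consistency check verifies that at time $4(k+1) + 2^{-(k+1)} - 1$ one has $q_{vt}=2-2^{-(k+1)}$ and $q_{wx}=1-2^{-(k+1)}$, precisely matching what the table prescribes for the start of superphase $k+1$. An induction on $k$ then establishes the pattern for all times $\theta \geq 0$.

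Finally, I would derive a contradiction from the assumption of a periodic state. Suppose there exist $\tilde\theta \geq 0$ and $p > 0$ with $q_e(\theta + kp) = q_e(\theta)$ for all $e \in E$, $\theta \geq \tilde\theta$ and $k \in \INs$. Focusing on the continuous piecewise-linear function $q_{vt}$, its slope-change points in $[\tilde\theta, \infty)$ form a strictly increasing sequence $T_1 < T_2 < \cdots$ satisfying $T_{i+N} = T_i + p$ for some fixed $N \in \INs$, which forces the gap sequence $g_i \coloneqq T_{i+1}-T_i$ to be eventually $N$-periodic. However, reading off \Cref{tab:CompComplexity}, within each superphase $k$ (once $k$ is large enough that $4k \geq \tilde\theta$) one of the gaps equals $(4k+2^{-k}+2) - (4k+2) = 2^{-k}$, and another equals $(4k + 2^{-k}+1) - (4k+2^{-k}) = 1$, so the gap sequence contains the infinite subsequence $(2^{-k})_{k \geq K}$ of strictly decreasing pairwise distinct values, contradicting eventual $N$-periodicity.

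The main delicate point is the induction in the second step: one must verify that the set of active edges evolves exactly as the table implicitly prescribes, so that the split at $s$ alternates between $sv$ and $sw$ in the stated manner. This is routine case analysis comparing the instantaneous $s$-$t$ distances via $v$ and via $w$ at each claimed phase transition, and establishing that the edge that is not used in a given subphase is indeed strictly longer at the relevant instant.
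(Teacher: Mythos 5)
Your proposal is essentially the paper's own proof: the paper likewise takes the network of \Cref{thm:Complexity} with the inflow rate $2$ extended to all of $\IR_{\geq 0}$, notes that a minimal $s$-$t$ cut has capacity $2$, and concludes from \Cref{tab:CompComplexity} that the unique IDE never becomes periodic; your write-up merely makes the induction and the non-periodicity contradiction explicit. One small slip in your last step: the times $4k+2$ and $4k+2^{-k}+1$ are phase transitions but not slope-change points of $q_{vt}$ (its slope stays $+1$ there), so the gaps $2^{-k}$ and $1$ you cite are not breakpoint gaps of $q_{vt}$; the gap $2^{-k}$ between $4k+2$ and $4k+2^{-k}+2$ belongs to $q_{wx}$ (equivalently, its zero set consists of maximal intervals of pairwise distinct lengths $2^{-k}$), while $q_{vt}$'s own breakpoint gaps alternate between $2$ and $2-2^{-(k+1)}$ --- either function yields the desired contradiction once the correct breakpoints are used.
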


\begin{proof}
	Consider the network constructed in the proof of \Cref{thm:Complexity}, i.e. the one pictured in \Cref{fig:CompComplexity}, but with a constant inflow rate of $2$ at $s$ for all of $\R_{\geq 0}$. A minimal cut is $X = \set{s,v,w}$ with $\sum_{e \in \edgesLeaving{X}}\nu_e = 2$. The unique IDE flow is still the one described in \Cref{tab:CompComplexity} and, thus, never reaches a periodic state.
\end{proof}

\begin{remark}
	In contrast the (again unique) dynamic equilibrium for the network from \Cref{fig:CompComplexity} is displayed in \Cref{fig:SteadyStateNash} and does indeed reach a steady state at time $\theta=4$.
	
	\begin{figure}[ht]\centering
		\begin{adjustbox}{max width=\textwidth}
			\begin{tikzpicture}[scale=1.2]
			
	\newcommand{\exampleGraph}[2]{		
		\node[labeledNodeS] (s) at (0, 4) {$s$};
		\draw (s) ++(4, 0) node[labeledNodeS] (v) {$v$};
		\draw (s) ++(0, -4) node[labeledNodeS] (w) {$w$};
		\draw (w) ++(2, 0) node[labeledNodeS, fill=white] (x) {$x$};
		\draw (v) ++(0, -4) node[labeledNodeS] (t) {$t$};
		
		\draw[normalEdge,->] (s) -- node[below] {\ifthenelse{\equal{#2}{labels}}{\Large$(\tau_{sv},\nu_{sv})=(1,2)$}{}} (v);
		\draw[normalEdge,->] (s) -- node[right] {\ifthenelse{\equal{#2}{labels}}{\Large$(1,2)$}{}} (w);
		\draw[normalEdge,->] (v)  -- node[left] {\ifthenelse{\equal{#2}{labels}}{\Large$(1,1)$}{}} (t);
		\draw[normalEdge,->] (w) -- node[below] {\ifthenelse{\equal{#2}{labels}}{\Large$(1,1)$}{}} (x);
		\draw[normalEdge,->] (x) -- node[below] {\ifthenelse{\equal{#2}{labels}}{\Large$(1,1)$}{}} (t);
		
		\node at (2,5) [rectangle,draw] (theta0) {\Large$\theta=#1$:};

		\draw (s) +(0,1) node[rectangle, draw, minimum width=1.5cm, minimum height=1cm,fill=\colComRed] (su) {\Large $u_1\equiv 2$}; 
		\draw [->, line width=2pt] (su) -- (s);
	}
	
	\newcommand{\colComRed}{red!30}
	\newcommand{\colComBlue}{blue!60}
	
	% theta = 0 #########################################################################
	\begin{scope}
	
		\exampleGraph{0}{labels}
		
		\end{scope}

	% theta = 1 #########################################################################
	\begin{scope}[xshift=6cm]
		
		\node[rectangle, color=\colComRed,draw, minimum width=4cm,minimum height=.4cm, fill=\colComRed] at (2,4) {};
		
		\exampleGraph{1}{}		
	\end{scope}	

	% theta = 2 #########################################################################
	\begin{scope}[xshift=12cm]
		
		\node[rectangle, color=\colComRed,draw, minimum width=4cm,minimum height=.2cm, fill=\colComRed] at (2,4) {};
		\node[rectangle, color=\colComRed,draw, minimum width=4cm,minimum height=.2cm, fill=\colComRed,rotate around={90:(0,0)}] at (4,2) {};
		\node[rectangle, color=\colComRed, draw, minimum width=.8cm,minimum height=1cm, fill=\colComRed,rotate around={90:(0,0)}] at (4.5,3.2) {};
		\node at (5,2.5) [] () {\Large $q_{vt}(2)=1$};
		
		\node[rectangle, color=\colComRed,draw, minimum width=4cm,minimum height=.2cm, fill=\colComRed,rotate around={90:(0,0)}] at (0,2) {};		
		
		\exampleGraph{2}{}		
	\end{scope}	
	
	% theta = 3 #########################################################################
	\begin{scope}[xshift=18cm]
				
		\node[rectangle, color=\colComRed,draw, minimum width=4cm,minimum height=.2cm, fill=\colComRed] at (2,4) {};
		\node[rectangle, color=\colComRed,draw, minimum width=4cm,minimum height=.2cm, fill=\colComRed,rotate around={90:(0,0)}] at (4,2) {};
		\node[rectangle, color=\colComRed, draw, minimum width=.8cm,minimum height=1cm, fill=\colComRed,rotate around={90:(0,0)}] at (4.5,3.2) {};
		\node at (5,2.5) [] () {\Large $q_{vt}(3)=1$};
		
		\node[rectangle, color=\colComRed,draw, minimum width=4cm,minimum height=.2cm, fill=\colComRed,rotate around={90:(0,0)}] at (0,2) {};		
		\node[rectangle, color=\colComRed,draw, minimum width=2cm,minimum height=.2cm, fill=\colComRed] at (1.2,0) {};
				
		\exampleGraph{3}{}			
	\end{scope}		
	
	% theta = 4 #########################################################################
	\begin{scope}[xshift=24cm]
		
		\node[rectangle, color=\colComRed,draw, minimum width=4cm,minimum height=.2cm, fill=\colComRed] at (2,4) {};
		\node[rectangle, color=\colComRed,draw, minimum width=4cm,minimum height=.2cm, fill=\colComRed,rotate around={90:(0,0)}] at (4,2) {};
		\node[rectangle, color=\colComRed, draw, minimum width=.8cm,minimum height=1cm, fill=\colComRed,rotate around={90:(0,0)}] at (4.5,3.2) {};
		\node at (5,2.5) [] () {\Large $q_{vt}(4)=1$};
		
		\node[rectangle, color=\colComRed,draw, minimum width=4cm,minimum height=.2cm, fill=\colComRed,rotate around={90:(0,0)}] at (0,2) {};		
		\node[rectangle, color=\colComRed,draw, minimum width=4cm,minimum height=.2cm, fill=\colComRed] at (2,0) {};
		
		\exampleGraph{4}{}		
	\end{scope}	
\end{tikzpicture}
		\end{adjustbox}
		\caption{The dynamic equilibrium flow for the network constructed in the proof of \Cref{thm:NoSteadyState}.}\label{fig:SteadyStateNash}
	\end{figure}	
\end{remark}

\begin{remark}\label{rem:SmallExtensionPhases}
	The network considered in the proof of \Cref{thm:NoSteadyState} also shows that we can in fact achieve arbitrarily short extension phases even within quite simple networks. Namely, the gross node inflow rate at node $x$ is of the following form
		\[b_x^-(\theta) = \begin{cases}0, &\text{ if } \theta \in [4k+3,4k+3+2^{-k}] \text{ for some } k \in \INo \\ 1, &\text{ else.} \end{cases}\]
	Thus, the flow distribution at node $x$ requires phases of lengths $2^{-k}$ for any $k \in \INo$. Note however, that these ever smaller getting phases are far enough apart so as to still allow us to reach any finite time horizon within a finite number of extension phases (as it is guaranteed by \Cref{thm:TerminationOfExtensionAlg}).
\end{remark}

%!TEX root = ../article-Algo.tex

\subsection{NP-Hardness}\label{sec:Hardness}

We will now show that the decision problem whether in a given network there exists an IDE with certain properties is often NP-hard -- even if we restrict ourselves to only single-source single-sink networks on acyclic graphs. Note, however, that due to the non-uniqueness of IDE flows this does not automatically imply that computing \emph{any} IDE must be hard. 

We first observe that the restriction to a single source can be made without loss of generality. 

\begin{lemma}\label{lemma:wlogSingleSink}
	For any multi-source single-sink network $\network$ with piecewise constant inflow rates with finitely many jump points there exists a (larger) single-source single-sink network $\network'$ with constant inflow rate such that
	\begin{enumerate}[label=\alph*)]
		\item the encoding size of $\network'$ is linearly bounded in that of $\network$,
		\item if $\network$ is acyclic, so is $\network'$,
		\item $\network$ is a subnetwork of $\network'$ (except for the sources),
		\item the restriction map composed with some constant translation is a one-to-one correspondence between the IDE-flows in $\network'$ and those in $\network$:
			\[\set{\text{ IDE in } \network'} \to \set{\text{ IDE in } \network}, f \mapsto f|_{\network}(\_ - c). \]
	\end{enumerate}
\end{lemma}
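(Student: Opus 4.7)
The plan is to construct $\network'$ by adjoining a new super-source $s^*$ together with a collection of ``source-gadgets'' that convert the single constant inflow at $s^*$ into the prescribed piecewise constant inflows at the original sources of $\network$. The gadgets form the only genuinely new part of the network, sitting between $s^*$ and the former sources, while $\network$ itself is left intact.

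Concretely, let $\hat u_v := \max_\theta u_v(\theta)$, which is finite since each $u_v$ is a step function with bounded support, and set $u^* := \sum_{v} \hat u_v$; the super-source $s^*$ receives constant inflow $u^*$ on a sufficiently long interval $[0, T^*]$. Choose a time shift $c > 0$ strictly larger than every jump point in any $u_v$ and than every transit time introduced below; $c$ will play the role of the shift in property (d). For each source $v$ of $\network$, with $u_v$ having jump points $0=\theta_{v,0} < \theta_{v,1} < \cdots < \theta_{v,k_v}$ and rates $r_{v,i}$ on $[\theta_{v,i-1},\theta_{v,i})$, introduce a gadget $G_v$ containing: an edge $s^* \to s_v$ of capacity $\hat u_v$; a small subnetwork from $s_v$ to $v$ that uses bottleneck capacities $r_{v,i}$ and suitable transit times to deliver exactly rate $u_v(\theta-c)$ into $v$; and a bypass route $s_v \to \cdots \to t$ absorbing the excess $\hat u_v - u_v(\theta-c)$. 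The edge lengths and capacities inside $G_v$ are chosen so that, under the queueing dynamics produced by the constant input, the current shortest path from $s_v$ to $t$ switches between the ``$v$-route'' and the bypass at precisely the times $\theta_{v,i}+c$.

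Given this construction, the four claims are verified directly. For (a), each gadget $G_v$ contributes $\bigO(k_v)$ nodes and edges, so the encoding size of $\network'$ is linear in that of $\network$. For (b), every new edge is directed either from $s^*$ toward $v$, or from inside $G_v$ toward $t$, so no new directed cycles are created and acyclicity is preserved. For (c), $\network$ appears in $\network'$ unchanged except that each former source $v$ now receives its inflow from $G_v$ instead of as an exogenous rate. For (d), once one establishes that in every IDE on $\network'$ the flow inside each $G_v$ is (up to a null set) uniquely determined and delivers exactly $u_v(\theta-c)$ to $v$, the map $f \mapsto f|_{\network}(\cdot-c)$ is immediately seen to be a bijection between IDE flows of $\network'$ and IDE flows of $\network$: the IDE conditions \eqref{eq:DefIDE-OnlyUseSP} on $\network'$ restricted to edges of $\network$ translate, after shifting time by $-c$, to the IDE conditions on $\network$, and conversely any IDE of $\network$ lifts uniquely by prepending the gadget flows.

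The main obstacle is the verification that the gadget $G_v$ really implements the required conversion inside an IDE, i.e.\ that the current shortest path from $s_v$ to $t$ switches between the $v$-route and the bypass precisely at the times $\theta_{v,i}+c$. This amounts to solving an ``inverse'' scheduling problem in the queueing dynamics: given the piecewise-constant target profile $u_v$ with finitely many pieces, design a small acyclic subnetwork whose IDE behavior under constant input reproduces that profile. The finiteness of the number of pieces of $u_v$ is essential here, since it allows $G_v$ to be assembled as a finite composition of elementary ``switching'' units, each synchronized to one jump point $\theta_{v,i}$ via appropriate transit times and bottleneck capacities. Once such a single-jump switching unit is constructed and verified, the full gadget and all four properties follow.
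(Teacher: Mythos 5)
There is a genuine gap. The entire content of this lemma \emph{is} the gadget you defer: the paper's own proof consists of invoking the ready-made construction from the proof of \cite[Theorem 6.3]{GHS18}, so the burden of proof is precisely to exhibit a concrete subnetwork that converts a constant single-source inflow into the prescribed piecewise constant inflows and to verify its behaviour under \emph{every} IDE. Your proposal ends by acknowledging that the ``switching unit'' still has to be constructed and verified; without it you have a proof plan, not a proof. In particular, property (d) needs more than existence of one nice IDE in $\network'$: you must show that in every IDE of $\network'$ the flow inside each gadget is forced (up to null sets), since otherwise the restriction map is neither well defined as a map onto IDE flows of $\network$ with a fixed shift $c$, nor injective. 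You assert this uniqueness but give no argument for it.

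Moreover, the specific design you sketch runs into a concrete obstacle. At the branching node $s_v$ the IDE condition compares $\ell$-labels of the two routes, and the label of the ``$v$-route'' contains $\ell_v(\theta)$, which depends on the queues that the (unknown) equilibrium builds downstream inside $\network$. Hence transit times and bottleneck capacities inside $G_v$ alone cannot pre-program the switching times to be exactly $\theta_{v,i}+c$ uniformly over all IDE flows; the switching pattern could differ from equilibrium to equilibrium, or deviate from the target profile altogether, breaking both the delivery of $u_v(\theta-c)$ into $v$ and the bijection in (d). Any correct construction has to be arranged so that the rate delivered into $v$ is insensitive to downstream congestion (e.g.\ by avoiding shortest-path decisions that compare a gadget route against a route through $\network$, or by forcing the split through capacity saturation), which is exactly the care taken in the cited construction of \cite{GHS18}. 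A smaller point: for (a) you must also ensure that the numerical data (capacities, travel times, the shift $c$) of the gadgets has encoding size linear in that of $\network$, not merely that the number of nodes and edges is $\bigO(\sum_v k_v)$.
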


\begin{proof}
	This can be accomplished by using the construction from the proof of \cite[Theorem 6.3]{GHS18}, which clearly satisfies all four properties.
\end{proof}

\begin{theorem}\label{thm:NPHardness}
	The following decision problems are NP-hard: 
	\begin{enumerate}[label=(\roman*)]
		\item Given a network and a specific edge: Is there an IDE not using this edge?\label{thm:NPHardness:UnsusedEdge}
		\item Given a network and a specific edge: Is there an IDE using this edge?\label{thm:NPHardness:UsedEdge}
		\item Given a network and a time horizon $T$: Is there an IDE that terminates before $T$?\label{thm:NPHardness:EarlyTermination}
		\item Given a network and some $k \in \IN$: Is there an IDE consisting of at most $k$ phases?\label{thm:NPHardness:FewPhases}
	\end{enumerate}
	All these decision problems remain NP-hard even if we restrict them to single-source instances with constant inflow rate on acyclic graphs. Problem \ref{thm:NPHardness:FewPhases} becomes NP-complete if we restrict $k$ by some polynomial in the encoding size of the whole instance.
\end{theorem}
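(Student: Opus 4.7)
The plan is to reduce \ThreeSAT{} to each of the four decision problems via a shared base construction. Given a \ThreeSAT{} formula $\varphi$ with variables $x_1,\ldots,x_n$ and clauses $C_1,\ldots,C_m$, I would build a single-source single-sink acyclic network $\network_\varphi$ with constant source inflow, together with a distinguished subnetwork $H\subseteq\network_\varphi$ (an ``overflow region''), satisfying the key property that there exists an IDE in $\network_\varphi$ routing no flow into $H$ if and only if $\varphi$ is satisfiable. Given such a base network, the four hardness claims then follow by attaching different simple marker gadgets to $H$.

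To realize this base construction I would follow the template sketched in the introduction. For each variable $x_i$ there is a \emph{variable gadget} consisting of two parallel capacity-tight routes representing the literals $x_i$ and $\overline{x_i}$; for each clause $C_j$ there is a \emph{clause gadget} merging the three literal routes that make up $C_j$. Capacities at each clause merger are set tightly enough that the merger absorbs its inflow without queueing iff at least one of the three incoming literals carries the satisfying flow rate; in the opposite case a queue builds up and, once it is large enough, makes a dedicated longer side edge become the currently shortest route, thereby diverting the overflow into $H$. The physical transit times are calibrated so that during an initial interval only the direct literal--merger--sink paths are active; after that interval either the IDE has stabilized on a Boolean assignment satisfying every clause (and $H$ is untouched) or every possible IDE evolution has triggered at least one clause overflow (and $H$ receives positive flow).

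The principal technical obstacle is that IDE is highly non-unique and reacts only to current travel times. One must guarantee that the admissible flow splits at the variable gadgets are in essential bijection with Boolean assignments, and that no IDE can ``route around'' the overflow whenever $\varphi$ is unsatisfiable; this is enforced by making the literal edges exactly capacity-tight so that any violation is immediately visible as a queue at the associated clause merger, and by choosing the side-edge transit times to trigger the diversion into $H$ precisely when that queue exceeds the calibrated threshold. Carrying out this calibration rigorously is the main work of the proof, and the remaining details are of a standard gadget-reduction type.

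Given the base network with the stated property, each of the four problems follows by a suitable marker. For \ref{thm:NPHardness:UnsusedEdge}, any edge $e^\star$ strictly inside $H$ works: an IDE avoiding $e^\star$ exists iff $\varphi$ is satisfiable. For \ref{thm:NPHardness:UsedEdge}, a mirrored construction makes a designated edge $e^\star$ active precisely in the satisfying IDE configurations, for example by routing the satisfying flow pattern through an auxiliary bypass whose unique exit is $e^\star$. For \ref{thm:NPHardness:EarlyTermination}, replace $H$ by a long dead-end path whose transit time exceeds the chosen horizon $T$, so the IDE terminates before $T$ iff $H$ is avoided. For \ref{thm:NPHardness:FewPhases}, insert into $H$ a scaled copy of the oscillating network from the proof of \Cref{thm:Complexity}, dimensioned so that any positive flow into $H$ produces at least $k+1$ phases. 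Finally, \Cref{lemma:wlogSingleSink} yields the single-source, constant-inflow, acyclic restriction, and the NP-membership of \ref{thm:NPHardness:FewPhases} for polynomially bounded $k$ follows because a list of at most $k$ phase boundaries together with the constant edge inflow rates on each phase forms a polynomial-size certificate (the water-filling procedure produces rational numbers of polynomial bit-length on rational input), verifiable phase by phase via \eqref{eq:rDerivConditionForExtension}.
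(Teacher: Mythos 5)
Your high-level template (reduce from \ThreeSAT{}, build a shared base network with an ``overflow region'' $H$ that receives positive flow exactly when the formula is unsatisfiable, then vary the marker gadget placed inside $H$) matches the paper's strategy, and your four marker choices together with the NP-membership argument for the phase-count problem are essentially the ones the paper uses. The gap is in the core gadget construction, and it is substantive.

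You orient the reduction with flow passing from variable gadgets into clause gadgets, with the clause merger acting as the congestion detector: ``the merger absorbs its inflow without queueing iff at least one of the three incoming literals carries the satisfying flow rate.'' This does not line up with either the logic of \ThreeSAT{} or with IDE dynamics. A clause is unsatisfied when \emph{all three} of its literals are false; under the natural encoding ``route carries flow $=$ literal is true,'' an unsatisfied clause is fed by routes carrying \emph{no} flow, which cannot create a queue. Flipping the encoding so that a clause receives flow from the negations of its literals creates a fan-out/flow-balance problem, because a literal occurs in several clauses and the variable's route would have to split its flow among them at rates you do not control under IDE. More fundamentally, the IDE condition only lets flow choose among \emph{currently shortest} edges; it gives no mechanism for forcing a binary assignment at the variable gadget, so your ``admissible flow splits at the variable gadgets are in essential bijection with Boolean assignments'' claim is exactly the part that needs a construction, and you defer it as ``standard.'' It is not standard here. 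The paper avoids the whole issue by reversing the orientation: each \emph{clause} is a source with a fixed constant inflow over $[0,1]$, and its three outgoing edges have identical capacity and lead to nodes at equal distance from $t$, so in any IDE the split is arbitrary; by pigeonhole at least one literal edge carries volume $\geq 4$, which is read off as ``that literal is set to true.'' The \emph{variable} gadget is then the detector: it is engineered (via a small-capacity edge $zz'$, a small-capacity edge $xy$ with a length-$3$ bypass, and a carefully placed side path to $s_2$) so that if \emph{both} $x$ and $\neg x$ receive volume $\geq 4$ over a unit interval, queues build up in a controlled way and a flow volume $>1$ is diverted out via $s_2$ on a specific later unit interval; if only one side receives flow, everything passes through $zz'$ and no diversion occurs. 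This particular calibration (and the pigeonhole at the clause source replacing any need for binary assignments) is the crux; without it the equivalence ``$\varphi$ unsatisfiable $\iff$ every IDE floods $H$'' does not follow.

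Two smaller points. For problem \ref{thm:NPHardness:UsedEdge} you propose a ``mirrored construction''; the paper needs no mirroring, since in the same base network the edge $s_1s_2$ is used precisely when the indicator gadget's subnetwork is bypassed, i.e.\ precisely when $\varphi$ is satisfiable. And the paper inserts the indicator gadget $I$ between the diversion edge and the marker subnetwork specifically so that the marker subnetwork, when activated, receives flow at a clean constant rate shifted in time; simply ``replacing $H$'' by a long path or an oscillator without this interface would let the diverted flow enter at an uncontrolled rate and timing, invalidating the phase-count and termination-time arguments.
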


\begin{proof}
	We will show this theorem by reducing the NP-complete problem \ThreeSAT{} to the above problems. The main idea of the reduction is as follows: For any given instance of \ThreeSAT{} we construct a network which contains a source node for each clause with three outgoing edges corresponding to the three literals of the clause. Any satisfying interpretation of the \ThreeSAT-formula translates to a distribution of the network inflow to the literal edges, which leads to an IDE flow that passes through the whole network in a straightforward manner. If, on the other hand, the formula is  unsatisfiable every IDE flow will cause a specific type of congestion which will divert a certain amount of flow into a different part of the graph. This part of the graph may contain an otherwise unused edge or a gadget which produces many phases (e.g. the graph constructed for the proof of \Cref{thm:Complexity}) or a long travel time (e.g. an edge with very small capacity).

	\begin{figure}\centering
		\begin{minipage}[c]{0.35\textwidth}
			\begin{adjustbox}{max width=\textwidth}
				\begin{tikzpicture}
	\newcommand{\colComRed}{red!30}
	
	\draw (0,1) -- (3.4,-2.7) -- (-3.4,-2.7) -- cycle;
	\node () at (1.5,-1) {\Large $C$};

	\node[namedVertex] (s) at (0,0) {$c$};
	\draw (s) +(0,.7) node[rectangle, draw, minimum width=1.5cm, minimum height=1cm,fill=\colComRed,anchor=south] (su) {$u_{s}\equiv 12\cdot\CharF[{[0,1]}]$}; 
	\draw [->] (su) -- (s);
	
	\node[namedVertex] (x) at (-2,-2) {$\ell_1$};
	\node[namedVertex] (y) at (0,-2) {$\ell_2$};
	\node[namedVertex] (z) at (2,-2) {$\ell_3$};
		
	\node[namedVertex,dashed] (t) at (0,-6) {$t$};
	
	\draw[edge,ultra thick] (s) -- (x);
	\draw[edge,ultra thick] (s) -- (y);
	\draw[edge,ultra thick] (s) -- (z);
	\draw[edge,dashed] (x) -- (t);
	\draw[edge,dashed] (y) -- (t);
	\draw[edge,dashed] (z) -- (t);
\end{tikzpicture}
			\end{adjustbox}
		\end{minipage}\hfill
		\begin{minipage}[c]{0.63\textwidth}
			\caption{The clause gadget $C$ consists of a source node and three edges leaving it, each with capacity $12$ and travel time $1$. If embedded in a larger network in such a way that the shortest paths from $\ell_1$,$\ell_2$ and $\ell_3$ to $t$ all have the same length (and no queues during the interval $[0,1]$), the inflow at node $s$ can be distributed in any way among the three edges. In particular, it is possible to send all flow over only one of the three edges. In any distribution there has to be at least one edge which carries a flow volume of at least $4$.}\label{fig:Clause-Gadget}
		\end{minipage}
	\end{figure}
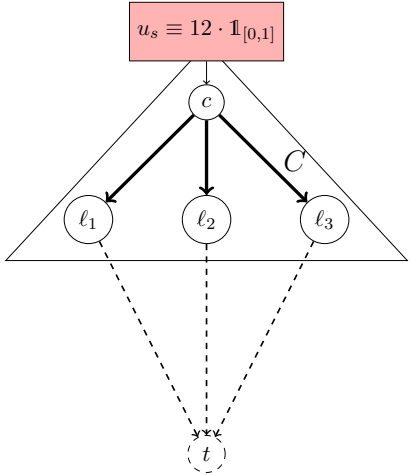
	
	We start by providing two types of gadgets: One for the clauses and one for the variables of a \ThreeSAT-formula. The clause gadget $C$ (see \Cref{fig:Clause-Gadget}) consists of a source node $c$ with a constant network inflow rate of $12$ over some interval of length $1$ and three edges with capacity $12$ and travel time $1$ connecting $c$ to the nodes $\ell_1$, $\ell_2$ and $\ell_3$, respectively. This gadget will later be embedded into a larger network in such a way that the shortest paths from the nodes $\ell_1$, $\ell_2$ and $\ell_3$ to the sink $t$ all have the same length. Thus, the flow entering the gadget at the source node $c$ can be distributed in any way over the three outgoing edges. We will have a copy of this gadget for any clause of the given \ThreeSAT-formula with the three nodes $\ell_1$, $\ell_2$ and $\ell_3$ corresponding to the three literals of the respective clause. Setting a literal to true will than correspond to sending a flow volume of at least $4$ towards the respective node.
	
	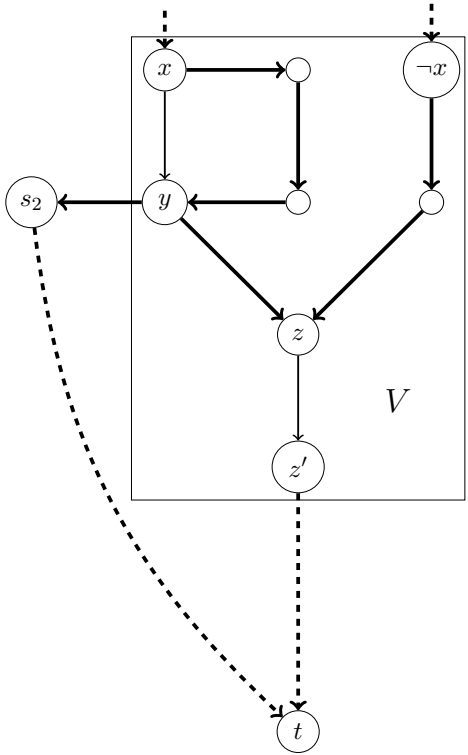
\begin{figure}\centering
		\begin{minipage}[c]{0.4\textwidth}
			\begin{adjustbox}{max width=\textwidth}
				\begin{tikzpicture}
	\node[namedVertex] (x) at (0,0) {$x$};
	\node[namedVertex] (xa) at (2,0) {};
	\node[namedVertex] (xb) at (2,-2) {};
	\node[namedVertex] (y) at (0,-2) {$y$};
	\node[namedVertex] (nx) at (4,0) {$\neg x$};
	\node[namedVertex] (ny) at (4,-2) {};
	\node[namedVertex] (z) at (2,-4) {$z$};
	\node[namedVertex] (zs) at (2,-6) {$z'$};
	\node[namedVertex] (u) at (-2,-2) {$s_2$};
	\node[namedVertex] (t) at (2,-10) {$t$};
	
	\draw (-.5,.5) rectangle (4.5,-6.5);
	\node () at (3.5,-5) {\Large $V$};
	
	\draw[edge,<-,dashed,ultra thick] (x) -- +(0,1);
	\draw[edge] (x) -- (y);
	\draw[edge,ultra thick] (x) -- (xa);
	\draw[edge,ultra thick] (xa) -- (xb);
	\draw[edge,ultra thick] (xb) -- (y);
	\draw[edge,ultra thick] (y) -- (z);
	\draw[edge,<-,dashed,ultra thick] (nx) -- +(0,1);
	\draw[edge,ultra thick] (nx) -- (ny);
	\draw[edge,ultra thick] (ny) -- (z);
	\draw[edge] (z) -- (zs);
	\draw[edge,ultra thick] (y) -- (u);
	\draw[edge,dashed,ultra thick] (zs) -- (t);
	\draw[edge,dashed,ultra thick] (u) to[bend right=20] (t);
\end{tikzpicture}
			\end{adjustbox}
		\end{minipage}\hfill
		\begin{minipage}[c]{0.58\textwidth}
			\caption{The variable gadget $V$. The edges $xy$ and $zz'$ have capacity $1$, all other edges have infinite capacity. The travel times on all (solid) edges are $1$ while the dashed lines represent paths with a length such that the travel time from $s_2$ to $t$ is the same as from $y$ over $z$ and $z'$ to $t$. If flow enters this gadget at any rate over a time interval of length one at either $x$ or $\neg x$ all flow will travel over the edge $zz'$ to the sink $t$. If, on the other hand, at both $x$ and $\neg x$ a flow of volume at least $4$ enters the gadget over an interval of length $1$ a flow volume of more than $1$ will be diverted towards $s_2$.}\label{fig:Variable-Gadget}
		\end{minipage}
	\end{figure}	
	
	The variable gadget $V$ (see \Cref{fig:Variable-Gadget}) has two nodes $x$ and $\neg x$ over which flow can enter the gadget. From both of these nodes there is a path consisting of two edges of length $1$ leading towards a common node $z$, from where another edge of length and capacity $1$ leads to node $z'$. From there the gadget will be connected to the sink node $t$ somewhere outside the gadget. The path from $\neg x$ to $z$ has infinite capacity\footnote{Throughout this construction whenever we say that an edge has ``infinite capacity'' by that we mean some arbitrary capacity high enough such that no queues will ever form on this edge. Since the network we construct will be acyclic such capacities can be constructed inductively similarly to the constant $L_e$ in the proof of \Cref{claim:BoundOnLabelSlopes}}, while the path from $x$ to $ z$ consists of one edge with capacity $1$ followed by one edge of infinite capacity with a node $y$ between the two edges. The first edge can be bypassed by a path of length $3$ and infinite capacity. From the middle node $y$ there is also a path leaving the gadget towards $t$ via some node $s_2$ outside the gadget. This path has a total length of one more than the path via $z$ and $z'$ to $t$. 
	
	We will have a copy of this gadget for every variable of the given \ThreeSAT-formula. Similarly to the clause gadget we will interpret the variable $x$ to be set to true if a flow of volume at least $4$ traverses node $x$ and the variable to be set to false if a flow volume of at least $4$ passes through node $\neg x$. If both happens at once, i.e. both $x$ and $\neg x$ each are traversed by a flow of volume at least $4$ over the span of a time interval of length $1$, we interpret this as an inconsistent setting of the variables. In this case a flow of volume more than $1$ will leave the gadget via the edge $ys_2$ during the unit length time interval three time steps later. 	
	To verify this, assume that the flow enters at nodes $x$ and $\neg x$ during $[0,1]$. Then the flow entering through $\neg x$ will start to form a queue on edge $zz'$ two time steps later. This queue will have reached a length of at least $2$ at time $3$ and, thus, still has a length of at least $1$ at time $4$. The flow entering through $x$ at first only uses edge $xy$ until a queue of length $2$ has build up there. After that, flow will only enter this edge at a rate of $1$ to keep the queue length constant, while the rest of the flow travels through the longer path towards $y$. This flow (of volume at least $1$) as well as some non-zero amount of flow from the queue on edge $xy$ will arrive at node $y$ during the interval $[3,4]$. Because of the queue on edge $zz'$ all of this flow (of volume more than $1$) will be diverted towards $s_2$.
	If, on the other hand, flow travels through only one of these two nodes over the course of an interval of length $1$  than all this flow will be forced to travel to $t$ via $z$. The third option, i.e. flow entering the gadget through both nodes but with a volume of less than $4$ at at least one of them, will not be relevant for the further proof.
	
	We can now transform a \ThreeSAT-formula into a network as follows: Take one copy of the clause gadget $C$ for every clause of the formula (each with an inflow rate of $12$ during the interval $[0,1]$ at its respective node $c$), one copy of the variable gadget $V$ for every variable and connect them in the obvious way with edges of infinite capacity and unit travel time (e.g. if the first literal of some clause is $\neg x_1$ connect the node $\ell_1$ of this clause's copy of $C$ with the node $\neg x$ of the variable $x_1$'s copy of $V$ and so on). Then add a sink node $t$ and connect the nodes $z'$ of all variable gadgets to $t$ via edges of travel time $1$ and infinite capacity. Finally, connect the node $s_2$ (which is the same for all variable gadgets) to $t$ by first an edge $s_2v$ of travel time $1$ and then another edge $vt$ of travel time $2$ and infinite capacity. The resulting network (see \Cref{fig:Reduction-Network}) has an IDE flow not using edge $s_2v$ if and only if the \ThreeSAT-formula is satisfiable: Namely, if the formula is satisfiable, take one satisfying interpretation and define a flow as follows: In every clause gadget choose one literal satisfied by the chosen interpretation and send all flow from this gadget over this literal's corresponding edge. This ensures that in the variable gadgets all flow will enter through only one of the two possible entry nodes $x$ and $\neg x$ and, as noted before, will then leave the gadget exclusively over node $z'$. If, on the other hand, the \ThreeSAT-formula is unsatisfiable every IDE flow will sent a flow volume of more than $1$ over edge $s_2v$ during the interval $[4,5]$ since in this case any flow has to have at least one variable gadget where flow volumes of at least four enter at node $x$ as well as node $\neg x$ (otherwise such a flow would correspond to a satisfying interpretation of the given \ThreeSAT-formula). This shows that the first problem stated in \Cref{thm:NPHardness} is NP-hard.
	
	\begin{figure}[ht!]\centering
		\begin{adjustbox}{width=.8\textwidth}
			\begin{tikzpicture}
	\node[namedVertex] (s2) at (0,0) {$s_2$};
	\node[namedVertex] (v) at (0,-2) {$v$};
	
	\node[namedVertex] (t) at (6,-6) {$t$};
	
	\draw[edge] (s2) -- (v);
	\draw[edge,dashed] (v) to[bend right=35] (t);
	
	\node[vertex] (V3y) at (9.4,0) {};
	\draw[edge] (V3y) to[bend left=10] (s2);
	\draw[fill=white] (5,1) rectangle (7,-3);
	\node () at (6,-1) {$V_2$};	
	
	\node[vertex] (V2y) at (5.4,0) {};
	\draw[edge] (V2y) to[bend left=10] (s2);
	\draw[fill=white] (2,1) rectangle (4,-3);
	\node () at (3,-1) {$V_1$};
	
	\node () at (8,-1) {\Large$\dots$};
	
	\draw (9,1) rectangle (11,-3);
	\node () at (10,-1) {$V_n$};
	
	\draw (3,2) -- (5,2) -- (4,3.5) -- cycle;
	\node () at (4,2.7) {$C_1$};

	\node () at (6.5,2.5) {\Large$\dots$};

	\draw (8,2) -- (10,2) -- (9,3.5) -- cycle;
	\node () at (9,2.7) {$C_k$};
	
	\node[vertex] (Cl1) at (3.5,2.3) {};
	\node[vertex] (Cl2) at (4,2.3) {};
	\node[vertex] (Cl3) at (4.5,2.3) {};
	\node[vertex] (V1x) at (3.6,0.7) {};
	\node[vertex] (V2x) at (5.4,0.7) {};
	\node[vertex] (V3nx) at (9.4,0.7) {};
	\draw[edge] (Cl1)  -- (V1x);
	\draw[edge] (Cl2)  -- (V2x){};
	\draw[edge] (Cl3)  -- (V3nx){};
	
	\node[vertex] (V1y) at (2.4,0) {};
	\draw[edge] (V1y)  -- (s2);
	
	\node[vertex] (V1z) at (3,-2.6) {};
	\node[vertex] (V2z) at (6,-2.6) {};
	\node[vertex] (V3z) at (10,-2.6) {};
	\draw[edge] (V1z) -- (t);
	\draw[edge] (V2z) -- (t);
	\draw[edge] (V3z) -- (t);
	
\end{tikzpicture}
		\end{adjustbox}
		\captionof{figure}{Schematic representation of the whole network corresponding to a \ThreeSAT-formula with clauses $C_1, \dots, C_k$ in variables $x_1, \dots, x_n$. The triangles are clause gadgets (cf. \cref{fig:Clause-Gadget}), the rectangles are variable gadgets (cf. \cref{fig:Variable-Gadget}).}\label{fig:Reduction-Network}
	\end{figure}
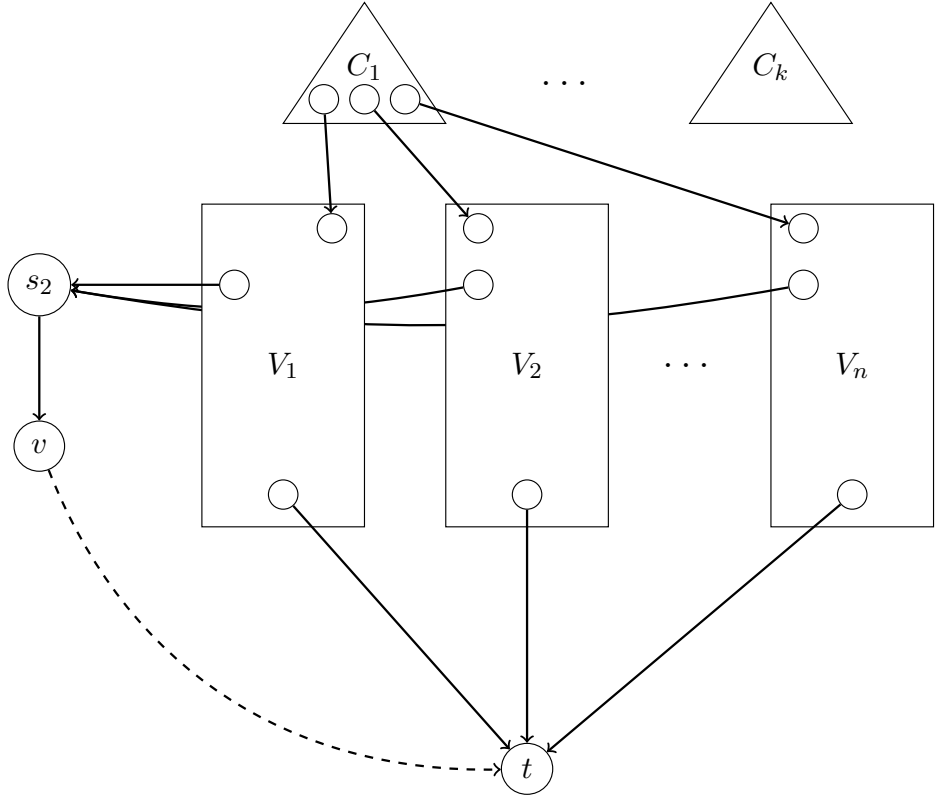
	
	In order to show that the other problems are NP-hard as well, we will introduce a third type of gadget: The indicator gadget $I$ (see \Cref{fig:Indicator-Gadget}). We can construct such a gadget for any given single-source single-sink network $\network$ with constant inflow rate over the interval $[0,\theta_0]$ at its source node. It consists of a new source node $s_1$ with the same inflow rate as $\network$'s source node shifted by $5$ time steps. The node $s_1$ is connected to the sink node $t$ (outside the gadget) by two paths: One through the network $\network$ (entering it at its original source node $s_{\network}$ and leaving it from its sink node $t_{\network}$) and one through two additional nodes $s_2$ and $v$ and an edge of capacity and travel time $1$ between them. All other edges outside $\network$ have infinite capacity. The two outgoing edge from $s_1$ both have a length of $\theta_0$. The path through the gadget has length one more than the path via $s_2$ and $v$. The node $s_2$ has a constant network inflow rate of $1$ starting at time $4$ and ending at time $5+\theta_0$. When embedding this gadget into a larger network (with sink $t$) the gadget is connected to the larger network by one or more incoming edges into $s_2$.
	
	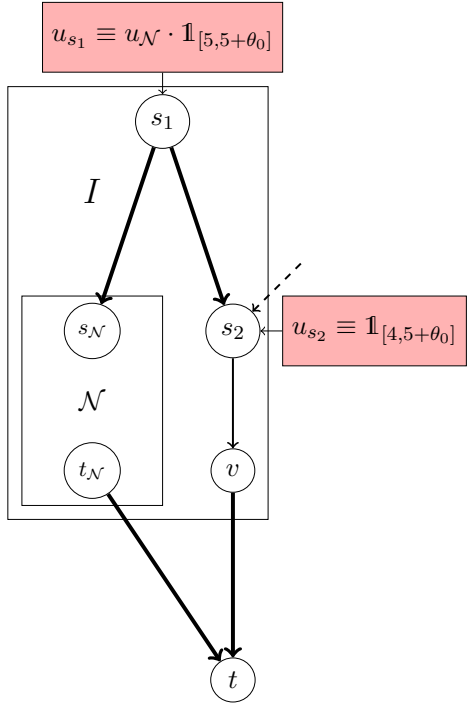
\begin{figure}\centering
		\begin{minipage}[c]{0.4\textwidth}
			\begin{adjustbox}{max width=\textwidth}
				\begin{tikzpicture}
	\newcommand{\colComRed}{red!30}

	\node[namedVertex] (s1) at (2,0) {$s_1$};
	\draw (s1) +(0,.7) node[rectangle, draw, minimum width=1.5cm, minimum height=1cm,fill=\colComRed,anchor=south] (s1u) {$u_{s_1}\equiv u_{\network}\cdot\CharF[{[5,5+\theta_0]}]$}; 
	\draw [->] (s1u) -- (s1);
	
	\node[namedVertex] (s2) at (3,-3) {$s_2$};
	\draw (s2) +(0.7,0) node[rectangle, draw, minimum width=1.5cm, minimum height=1cm,fill=\colComRed,anchor=west] (s2u) {$u_{s_2}\equiv \CharF[{[4,5+\theta_0]}]$}; 
	\draw [->] (s2u) -- (s2);
	
	\node[namedVertex] (v) at (3,-5) {$v$};
	
	\node[namedVertex] (t) at (3,-8) {$t$};
	
	\node[namedVertex] (sN) at (1,-3) {\footnotesize $s_{\network}$};
	\node[namedVertex] (tN) at (1,-5) {\footnotesize $t_{\network}$};
	\draw (0,-2.5) rectangle (2,-5.5);
	\node () at (1,-4) {$\network$};

	\draw (-0.2,0.5) rectangle (3.5,-5.7);
	\node () at (1,-1) {\Large $I$};
	
	\draw[edge,ultra thick] (s1) -- (sN);
	\draw[edge,ultra thick] (s1) -- (s2);
	
	\draw[edge] (s2) -- (v);
	\draw[edge, <-, dashed] (s2) -- +(1,1);	
	
	\draw[edge,ultra thick] (v) -- (t);
	\draw[edge,ultra thick] (tN) -- (t);

\end{tikzpicture}
			\end{adjustbox}
		\end{minipage}\hfill
		\begin{minipage}[c]{0.59\textwidth}
			\caption{The indicator gadget $I$ for a single-source single-sink network $\network$ with network inflow rate $u_{\network} \CharF[{[0,\theta_0]}]$. All bold edges have infinite capacity, the edge $s_2v$ has capacity $1$. The edges $s_1s_{\network}$ and $s_1s_2$ both have travel time $\theta_0$, edge $s_2v$ has a travel time of $1$ and the edges $t_{\network}t$ and $vt$ can have any travel time such that the shortest $s_1$-$t$ path through $\network$ has a length of exactly one more than the $s_1$-$t$ path using edge $s_2v$. If within the interval $[4,5]$ a flow of volume more than $1$ arrives at $s_2$ over the dashed edge, all flow entering the network at $s_1$ will travel trough $\network$ (it will arrive at that sub-networks source node $s_{\network}$ at a rate of $u_{\network}$ during the interval $[5+\theta_0,5+2\theta_0]$). If, on the other hand, a flow volume of at most $1$ reaches $s_2$ via the dashed edge up to time $5+\theta_0$ all flow originating at $s_2$ will bypass $\network$ using edge $s_2v$ and $\network$ will forever remain empty.}\label{fig:Indicator-Gadget}
		\end{minipage}
	\end{figure}
	
	If no flow ever enters the gadget via this edge, all flow generated at $s_1$ will travel through the path containing $s_2v$. If, on the other hand, a flow of volume more than $1$ comes through this edge before the inflow at node $s_1$ starts, all the flow generated there will travel through the subnetwork $\network$. Adding this gadget to the network constructed from the \ThreeSAT-formula as described above results in a network with the following properties (see \Cref{fig:3SAT-Network-Example} for an example):
	\begin{itemize}
		\item If the \ThreeSAT-formula is satisfiable there exists an IDE flow where the subnetwork $\network$ inside gadget $I$ is never used but edge $s_1s_2$ is used.
		\item If the \ThreeSAT-formula is unsatisfiable every IDE flow will be such that its restriction to the subnetwork $\network$ inside $I$ is a (time shifted) IDE flow in the original stand alone network $\network$ and the edge $s_1s_2$ is never used.
	\end{itemize}
	Accordingly, if for example we use the network from \Cref{fig:CompComplexity} as sub-network we have a reduction from \ThreeSAT{} to the fourth problem from \Cref{thm:NPHardness}. Any network $\network$ gives us a reduction to the second problem (with edge $s_1s_2$ as the special edge). And just an edge with a very small capacity allows a reduction to the third problem. Alternatively, one could also use a network wherein flow gets caught in cycles for a long time before it reaches the sink as, for example, the network constructed to prove the lower bound on the termination time of IDE in \cite{GH20PoA}.
\end{proof}

\begin{figure}[ht!]\centering
	\begin{adjustbox}{max width=\textwidth}
		\begin{tikzpicture}[scale=2]
	% Clauses:
	\node[namedVertex] (c1) at (1,0) {$c_1$};
	\node[namedVertex] (l11) at (0,-1) {$\ell_{11}$};
	\node[namedVertex] (l12) at (1,-1) {$\ell_{12}$};
	\node[namedVertex] (l13) at (2,-1) {$\ell_{13}$};
	\draw[edge,ultra thick] (c1) -- (l11);
	\draw[edge,ultra thick] (c1) -- (l12);
	\draw[edge,ultra thick] (c1) -- (l13);
	
	\node[namedVertex] (c2) at (5,0) {$c_2$};
	\node[namedVertex] (l21) at (4,-1) {$\ell_{21}$};
	\node[namedVertex] (l22) at (5,-1) {$\ell_{22}$};
	\node[namedVertex] (l23) at (6,-1) {$\ell_{23}$};
	\draw[edge,ultra thick] (c2) -- (l21);
	\draw[edge,ultra thick] (c2) -- (l22);
	\draw[edge,ultra thick] (c2) -- (l23);
	
	\node[namedVertex] (c3) at (9,0) {$c_3$};
	\node[namedVertex] (l31) at (8,-1) {$\ell_{31}$};
	\node[namedVertex] (l32) at (9,-1) {$\ell_{32}$};
	\node[namedVertex] (l33) at (10,-1) {$\ell_{33}$};
	\draw[edge,ultra thick] (c3) -- (l31);
	\draw[edge,ultra thick] (c3) -- (l32);
	\draw[edge,ultra thick] (c3) -- (l33);
	
	% Variables:
	\node[namedVertex] (x1) at (0,-3) {$x_1$};
	\node[namedVertex] (x1a) at (1,-3) {};
	\node[namedVertex] (x1b) at (1,-4) {};
	\node[namedVertex] (y1) at (0,-4) {$y_1$};
	\node[namedVertex] (nx1) at (2,-3) {$\neg x_1$};
	\node[namedVertex] (ny1) at (2,-4) {};
	\node[namedVertex] (z1) at (1,-5) {};
	\node[namedVertex] (zs1) at (1,-6) {};
	\draw[edge,ultra thick] (x1) -- (x1a);
	\draw[edge] (x1) -- (y1);
	\draw[edge,ultra thick] (x1a) -- (x1b);
	\draw[edge,ultra thick] (x1b) -- (y1);
	\draw[edge,ultra thick] (y1) -- (z1);
	\draw[edge,ultra thick] (nx1) -- (ny1);
	\draw[edge,ultra thick] (ny1) -- (z1);
	\draw[edge] (z1) -- (zs1);

	\node[namedVertex] (x2) at (3,-3) {$x_2$};
	\node[namedVertex] (x2a) at (4,-3) {};
	\node[namedVertex] (x2b) at (4,-4) {};
	\node[namedVertex] (y2) at (3,-4) {$y_2$};
	\node[namedVertex] (nx2) at (5,-3) {$\neg x_2$};
	\node[namedVertex] (ny2) at (5,-4) {};
	\node[namedVertex] (z2) at (4,-5) {};
	\node[namedVertex] (zs2) at (4,-6) {};
	\draw[edge,ultra thick] (x2) -- (x2a);
	\draw[edge] (x2) -- (y2);
	\draw[edge,ultra thick] (x2a) -- (x2b);
	\draw[edge,ultra thick] (x2b) -- (y2);
	\draw[edge,ultra thick] (y2) -- (z2);
	\draw[edge,ultra thick] (nx2) -- (ny2);
	\draw[edge,ultra thick] (ny2) -- (z2);
	\draw[edge] (z2) -- (zs2);

	\node[namedVertex] (x3) at (6,-3) {$x_3$};
	\node[namedVertex] (x3a) at (7,-3) {};
	\node[namedVertex] (x3b) at (7,-4) {};
	\node[namedVertex] (y3) at (6,-4) {$y_3$};
	\node[namedVertex] (nx3) at (8,-3) {$\neg x_3$};
	\node[namedVertex] (ny3) at (8,-4) {};
	\node[namedVertex] (z3) at (7,-5) {};
	\node[namedVertex] (zs3) at (7,-6) {};
	\draw[edge,ultra thick] (x3) -- (x3a);
	\draw[edge] (x3) -- (y3);
	\draw[edge,ultra thick] (x3a) -- (x3b);
	\draw[edge,ultra thick] (x3b) -- (y3);
	\draw[edge,ultra thick] (y3) -- (z3);
	\draw[edge,ultra thick] (nx3) -- (ny3);
	\draw[edge,ultra thick] (ny3) -- (z3);
	\draw[edge] (z3) -- (zs3);
	
	\node[namedVertex] (x4) at (9,-3) {$x_4$};
	\node[namedVertex] (x4a) at (10,-3) {};
	\node[namedVertex] (x4b) at (10,-4) {};
	\node[namedVertex] (y4) at (9,-4) {$y_4$};
	\node[namedVertex] (nx4) at (11,-3) {$\neg x_4$};
	\node[namedVertex] (ny4) at (11,-4) {};
	\node[namedVertex] (z4) at (10,-5) {};
	\node[namedVertex] (zs4) at (10,-6) {};
	\draw[edge,ultra thick] (x4) -- (x4a);
	\draw[edge] (x4) -- (y4);
	\draw[edge,ultra thick] (x4a) -- (x4b);
	\draw[edge,ultra thick] (x4b) -- (y4);
	\draw[edge,ultra thick] (y4) -- (z4);
	\draw[edge,ultra thick] (nx4) -- (ny4);
	\draw[edge,ultra thick] (ny4) -- (z4);
	\draw[edge] (z4) -- (zs4);
	
	% sink
	\node[namedVertex] (t) at (5,-8) {$t$};
	\draw[edge,ultra thick,dashdotted] (zs1) -- (t);
	\draw[edge,ultra thick,dashdotted] (zs2) -- (t);
	\draw[edge,ultra thick,dashdotted] (zs3) -- (t);
	\draw[edge,ultra thick,dashdotted] (zs4) -- (t);
			
	% Diverting:
	\node[namedVertex] (s1) at (-2,0) {$s_1$};
	\node[namedVertex] (s2) at (-1,-4) {$s_2$};
	\node[namedVertex] (v) at (-1,-5) {};
		
	\node[namedVertex] (sN) at (-3,-4) {\footnotesize $s_{\network}$};
	\node[namedVertex] (tN) at (-3,-5) {\footnotesize $t_{\network}$};
	\draw (-3.5,-3.7) rectangle (-2.5,-5.3);
	\node () at (-3,-4.5) {$\network$};
	
	\draw[edge,ultra thick,dashdotted] (s1) -- (sN);
	\draw[edge,ultra thick,dashdotted] (s1) -- (s2);
	\draw[edge] (s2) -- (v);
	\draw[edge,ultra thick,dashdotted] (v) to[out=-90,in=-195] (t);
	\draw[edge,ultra thick,dashdotted] (tN) to[out=-90,in=-180] (t);	
	
	% Crossconnecting
	\draw[edge,ultra thick] (y1) to (s2);
	\draw[edge,ultra thick] (y2) to[bend left=15] (s2);
	\draw[edge,ultra thick] (y3) to[bend left=15] (s2);
	\draw[edge,ultra thick] (y4) to[bend left=15] (s2);
	
	\draw[edge,ultra thick] (l11) -- (x1);
	\draw[edge,ultra thick] (l12) -- (x2);
	\draw[edge,ultra thick] (l13) -- (nx3);
	
	\draw[edge,ultra thick] (l21) -- (x1);
	\draw[edge,ultra thick] (l22) -- (nx2);
	\draw[edge,ultra thick] (l23) -- (x4);
	
	\draw[edge,ultra thick] (l31) -- (nx1);
	\draw[edge,ultra thick] (l32) -- (x3);
	\draw[edge,ultra thick] (l33) -- (x4);
\end{tikzpicture}
	\end{adjustbox}
	\caption{The whole network for the \ThreeSAT-formula $(x_1 \vee x_2 \vee \neg x_3) \wedge (x_1 \vee \neg x_2 \vee x_4) \wedge (\neg x_1 \vee x_3 \vee x_4)$. The bold edges have infinite capacity, while all other edges have capacity $1$. The solid edges have a travel time of $1$, the dashdotted edges may have variable travel time (depending on the subnetwork $\network$).}\label{fig:3SAT-Network-Example}
\end{figure}
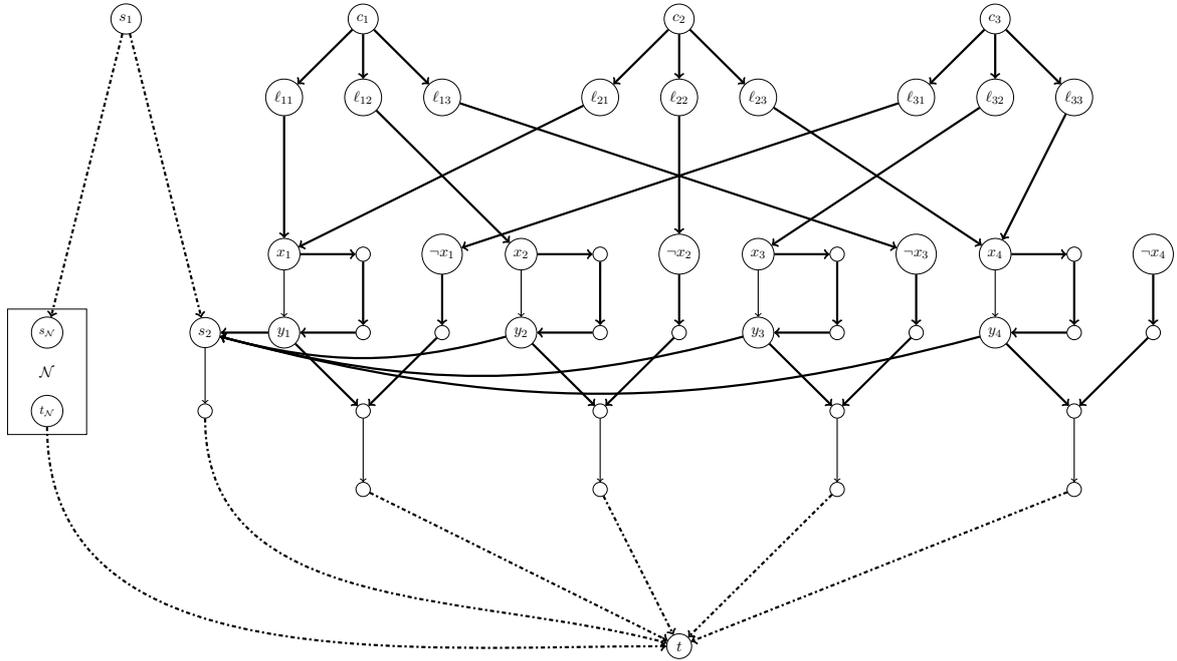

\begin{remark}
	Combining a construction similar to the one above with the single-source multi-sink network constructed in the proof of \cite[Theorem 6.3]{GHS18} to show that multi-commodity IDE flows may cycle forever, shows that the problem to decide whether a given multi-sink network has an IDE terminating in finite time is NP-hard as well.
\end{remark}

\begin{remark}
	The above construction also shows the following aspect of IDE flows: While a network may trivially contain edges that are never used in any IDE, edges that are only used in some IDE flows and edges that are used in every IDE, there can also be edges that are either not used at all or used for some flow volume of at least $c$, but never with any flow volume strictly between $0$ and $c$.
\end{remark}

%!TEX root = ../article-Algo.tex

\section{Conclusions and Open Questions}

We showed that Instantaneous Dynamic Equilibria can be computed in finite time for single-sink networks by applying the natural $\alpha$-extension algorithm. The obtained explicit bounds on the required number of extension steps are quite large and we do not think that they are tight. Thus, further analysis is needed here.

We then turned to the computational complexity of IDE flows. We gave an example of a small instance which only allows for IDE flows with rather complex structure, thus, implying that the worst case output complexity of any algorithm computing IDE flows has to be exponential in the encoding size of the input instances. Furthermore, we showed that several natural decision problems involving IDE flows are NP-hard by describing a reduction from \ThreeSAT. 

One common observation that can be drawn from many proofs involving IDE flows (in this paper as well as in \cite{GHS18} and \cite{GH20PoA}) is that they often allow for some kind of \emph{local analysis} of their structure -- something which seems out of reach for Dynamic Equilibrium flows. This local argumentation allowed us to analyse the behavior of IDE flows in the rather complex instance from \Cref{sec:Hardness} by looking at the local behavior inside the much simpler gadgets from which the larger instance is constructed. At the same time, this was also a key aspect of the positive result in \Cref{sec:AlgTermination} where it allowed us to use inductive reasoning over the single nodes of the given network. We think that this local approach to the analysis of IDE flows might also help to answer further open questions about IDE flows in the future.
One such topic might be a further investigation of the computational complexity of IDE flows. While both our upper bound on the number of extension steps as well as our lower bound for the worst case computational complexity are superpolynomial bounds, the latter is at least still polynomial in the termination time of the constructed flow, which is not the case for the former. Thus, there might still be room for improvement on either bound. 

Cominetti et al.~\cite{CominettiCO17} showed (using convergence to a steady-state) that the queue length of dynamic equilibria with infinitely lasting inflow rates are bounded (provided that the inflow rate is at most the capacity of a minimal cut). As shown in \Cref{thm:NoSteadyState}, convergence to a steady state is not guaranteed for IDE flow. Therefore, we can not deduce much about the long term behavior of queue lengths in IDE flows. On the one hand, it seems intuitive that they should remain bounded as well, since, whenever queue lengths grow to high, flow particles will take a different path until the queue length has decreased again. On the other hand, the instance in \Cref{fig:CompComplexity} shows that there may exist infinitely many phases, where the inflow into the sink is smaller than the network inflow rate while at the same time there are no phases, where the inflow rate into the sink is larger than the network inflow rate. Accordingly, the queue lengths are growing higher every cycle -- although in this specific example they are still bounded by $3$.

\paragraph*{Acknowledgments:}\ 
We are grateful to the anonymous reviewers for their valuable feedback on this paper. 
Additionally, we thank the Deutsche Forschungsgemeinschaft (DFG) for their financial support.
Finally, we want to thank the organizers and participants of the 2020 Dagstuhl seminar on ``Mathematical Foundations of Dynamic Nash Flows'', where we had many helpful and inspiring discussions on the topic of this paper.

\clearpage
\bibliographystyle{plainnat}
\bibliography{literature}
\clearpage

\appendix
\section{The Waterfilling Algorithm}\label{appendix}

In order to find a possible extension of a given partial IDE at a single node we have to determine a solution to \eqref{OPT:bvThetaK}, i.e. find a distribution of the flow coming into this node to outgoing active edges in such a way that all used edges remain active for some proper time interval. As shown in \cite{GHS18} this can be done by a simple water filling procedure (\cite[Algorithm 1 (electronic supplementary material)]{GHS18}), which we will restate here for the convenience of the reader. The basic idea of this procedure is to first determine for every outgoing active edge $vw$ and all possible future constant edge inflow rates $z$ the right side derivative of the resulting shortest instantaneous travel time towards the sink for particles starting with this edge, i.e. 
\[\frac{g_e(z)}{\nu_e}+\rDeriv{\ell_{w}}(\theta),\]
where $g_e(z) \coloneqq  z-\nu_e,$ if $q_e(\theta) > 0$ and $g_e(z) \coloneqq  \max\set{z-\nu_e,0}$, otherwise. Seen as functions in $z$ these are continuous monotonic increasing functions starting with a constant part (if the edge has no queue to begin with) followed by an affine linear part. Thus, they can always be written in the form
\[k(z) = \begin{cases}
	\beta, & z \leq \gamma \\
	\beta + \frac{1}{\alpha}(z-\gamma), & z \geq \gamma
\end{cases}\]
for appropriately chosen constants $\alpha, \beta, \gamma$. The goal is now to distribute the current gross node inflow rate to the outgoing edges such that for all edges getting a non-zero part of this flow rate their respective functions $k$ evaluated at these rates coincide, while the $k$ functions of all other edges are at least as high when evaluated at a flow rate of $0$. This can be accomplished by ordering the edges with increasing value of $k$ for inflow rate $0$ and then simultaneously filling the available node inflow into the edges with currently lowest value of $k$ until all flow is distributed. This is exactly what is accomplished by \Cref{Alg:FindExtension}.

\begin{algorithm}[H] 
	\SetKwInOut{Input}{Input}\SetKwInOut{Output}{Output}
	\caption{Water filling procedure for flow distribution}\label{Alg:FindExtension}
	
	\Input{A number $b_v^-(\theta) \geq 0$ and functions $k_i: \IR_{\geq 0} \to \IR_{\geq 0}$ with $\alpha_i > 0$ for $i = 1, \dots, p \coloneqq \abs{\edgesLeaving{v} \cap E_{\theta}}$ and $\beta_1 \leq \beta_2 \leq \dots \leq \beta_{p}$.}
	\Output{Values $z_i \geq 0$ such that $\sum_{i=1}^{p} z_i = b_v^-(\theta)$ and for some $r' \leq p$ satisfying $k_0(z_0)= \dots = k_{r'}(z_{r'}) \leq \beta_{r'+1}$, $z_i>0$ for $i \leq r'$ and $z_i=0$ for $i > r'$.}
	
	\vspace{.3em}\hrule\vspace{.3em}
	
	Find the maximal $r \in \set{0,1, \dots, p}$ with $\sum_{i=1}^{r}\max\set{z|k_i(z) \leq \beta_r} \leq b_v^-(\theta)$
	
	\eIf{$r<p$ and $\sum_{i=1}^{r}\max\set{z|k_i(z) \leq \beta_{r+1}} \leq b_v^-(\theta)$}{
		
		Set $z_i \leftarrow \begin{cases}
			\max\set{z|k_i(z) \leq \beta_{r+1}}, 	&i \leq r\\
			b^-_v(\theta) - \sum_{i<r}z_i,					&i = r+1\\
			0									&i > r+1
		\end{cases}$ \label{Alg:FindExtension:DefZi1}
	}{
		Set $z_i \leftarrow \begin{cases}
			\max\set{z|k_i(z) \leq \beta_{r}}, 	&i \leq r\\
			0									&i > r
		\end{cases}$ and $b' \leftarrow b^-_v(\theta) - \sum_{i=1}^{p}z_i$ \label{Alg:FindExtension:DefZi2a}
		
		Set $z_i \leftarrow z_i + \frac{\alpha_i}{\sum_{j=1}^{r-1}\alpha_j}b'$ for all $i \leq r$. \label{Alg:FindExtension:DefZi2b}
	}
	
	\KwRet{$z_1, \dots, z_{p}$}
\end{algorithm}

The correctness of this approach has been proven in \cite[electronic supplementary material, Lemma 1]{GHS18}.

\begin{obs}\label{appobs:FlowDistributionDependence}
	The flow distribution obtained by using \Cref{Alg:FindExtension} at a given node only depends on the set of active edges, which subset of those currently has a non-zero queue, the gross node inflow rate and the label functions $\ell_{w_i}$.
\end{obs}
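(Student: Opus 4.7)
The plan is to verify the observation by direct inspection of the inputs fed into \Cref{Alg:FindExtension}. The algorithm is a deterministic procedure whose output $(z_1,\ldots,z_p)$ depends only on the scalar $b_v^-(\theta)$ and on the affine--piecewise--linear functions $k_1,\ldots,k_p$ (equivalently, their defining parameters $\alpha_i,\beta_i,\gamma_i$). It therefore suffices to show that all of these inputs are themselves functions of the four quantities listed in the observation.

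First, $b_v^-(\theta)$ is precisely the gross node inflow rate, which is among the listed inputs. The index set $\{1,\ldots,p\}=\{1,\ldots,|\edgesLeaving{v}\cap E_\theta|\}$ enumerates the outgoing active edges at $v$, so its cardinality and the corresponding edges $vw_i$ are determined by the set of active edges. For each such edge $e=vw_i$, the function $k_i$ is, by construction,
\[k_i(z) \;=\; \frac{g_e(z)}{\nu_e} + \rDeriv{\ell_{w_i}}(\theta),\]
with $g_e(z)=z-\nu_e$ when $q_e(\theta)>0$ and $g_e(z)=\max\{z-\nu_e,0\}$ otherwise. Writing this in the $(\alpha,\beta,\gamma)$-form used by the algorithm, one obtains $\alpha_i=\nu_e$ in both cases, while $(\beta_i,\gamma_i)=\bigl(-1+\rDeriv{\ell_{w_i}}(\theta),\,0\bigr)$ if $q_e(\theta)>0$ and $(\beta_i,\gamma_i)=\bigl(\rDeriv{\ell_{w_i}}(\theta),\,\nu_e\bigr)$ otherwise. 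Hence each parameter depends solely on: the identity of $e$ (which is fixed by the set of active edges, and carries its immutable data $\nu_e$); whether $e$ lies in the subset of active edges with non-zero queue; and the right derivative $\rDeriv{\ell_{w_i}}(\theta)$, which is read off the label function $\ell_{w_i}$.

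The only remaining subtlety is the algorithm's requirement that $\beta_1\leq \beta_2\leq \cdots \leq \beta_p$. Ties are broken by any fixed rule (for instance a canonical ordering of edges), and one verifies that within a tied block the procedure distributes the residual flow in proportion to the $\alpha_j$'s of the tied indices (see \cref{Alg:FindExtension:DefZi2b}); so the resulting per-edge values $z_i$ are independent of the chosen tie-break. Consequently, the map assigning each outgoing active edge $e=vw_i$ its inflow rate $z_i$ is determined entirely by the four quantities listed in the observation. No genuine obstacle arises; the statement is essentially unpacking which data the procedure actually reads, and the proof amounts to the parameter computations above plus the tie-break check.
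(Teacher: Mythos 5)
Your approach is the intended one: the paper states this observation without proof precisely because it amounts to reading off which data \Cref{Alg:FindExtension} actually consumes, and your parameter computations are correct ($\alpha_i=\nu_e$ in both cases, $(\beta_i,\gamma_i)=\bigl(-1+\rDeriv{\ell_{w_i}}(\theta),0\bigr)$ with a queue and $\bigl(\rDeriv{\ell_{w_i}}(\theta),\nu_e\bigr)$ without), so each $k_i$ and the scalar $b_v^-(\theta)$ are indeed functions of the four listed quantities. One side claim is false, however: the output of \Cref{Alg:FindExtension} is \emph{not} independent of how ties among equal $\beta_i$ are broken. The proportional split you cite in \cref{Alg:FindExtension:DefZi2b} happens only in the else-branch; in the if-branch (\cref{Alg:FindExtension:DefZi1}) the whole residual is assigned to the single edge indexed $r+1$. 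Concretely, take two queueless active edges whose heads have equal label derivative (so $\beta_1=\beta_2$) and $b_v^-(\theta)<\min\set{\nu_{e_1},\nu_{e_2}}$: then $r=0$, the first branch is taken, and all of $b_v^-(\theta)$ goes to whichever tied edge is listed first, so swapping the tied indices changes the per-edge values. This does not endanger the observation, because your earlier, weaker remark already suffices: if ties are resolved by a fixed canonical ordering of the edges (static data of the network), the procedure is deterministic and its output is a function of the four listed quantities, which is all that is needed downstream (e.g.\ for \Cref{claim:SmallestExtension:OnlyFinitelyManySubphases}). So simply drop the ``independent of the chosen tie-break'' assertion, or restrict it to the else-branch, and your proof stands.
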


\begin{obs}\label{appobs:FlowDistributionRational}
	If all input data for \Cref{Alg:FindExtension} (i.e. $b_v^-(\theta)$ as well as all $\alpha_i, \beta_i$ and $\gamma_i$) is rational, so is the output (i.e. the $z_i$).
\end{obs}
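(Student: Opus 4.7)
The plan is to verify the observation by direct inspection of Algorithm~\ref{Alg:FindExtension}, checking that every arithmetic operation it performs on the input data lies in $\IQ$ whenever the input does. The claim will follow because $\IQ$ is closed under addition, subtraction, multiplication, and division by nonzero elements, and the algorithm involves only finitely many such operations plus comparisons (which do not affect rationality of the values being compared, only which branch is taken).

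First I would reduce the auxiliary quantities $\max\set{z\geq 0\mid k_i(z)\leq\beta}$ appearing in the algorithm to closed-form rational expressions in the inputs. Using the piecewise-linear description of $k_i$ (constant value $\beta_i$ on $[0,\gamma_i]$, then affine with slope $1/\alpha_i$), one checks that for any $\beta\geq \beta_i$ this maximum equals $\gamma_i+\alpha_i(\beta-\beta_i)$, which is rational as soon as $\alpha_i,\beta_i,\gamma_i,\beta$ are rational. Only values $\beta\in\set{\beta_r,\beta_{r+1}}$ ever appear in that role inside the algorithm, and these are inputs.

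Next I would argue that the determination of the index $r$ in the first line does not leave $\IQ$: it is based solely on comparisons of sums of the rational quantities from the previous step against the rational input $b_v^-(\theta)$. Selecting $r$ therefore just picks out a branch without producing any irrational number. Then I would treat the two conditional branches in turn. In the first branch (\cref{Alg:FindExtension:DefZi1}), the values $z_i$ for $i\leq r$ are the rational maxima above, $z_{r+1}=b_v^-(\theta)-\sum_{i<r}z_i$ is a rational linear combination of rationals, and $z_i=0$ for $i>r+1$. In the second branch (\cref{Alg:FindExtension:DefZi2a,Alg:FindExtension:DefZi2b}), the first assignment yields rational $z_i$ and hence rational $b'$; the final update adds $\frac{\alpha_i}{\sum_{j=1}^{r-1}\alpha_j}b'$, a ratio of a positive rational to a positive sum of positive rationals (hence rational and well-defined whenever the sum is nonempty), to the already-rational $z_i$.

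There is no real difficulty; the only thing to watch out for is the degenerate cases where the index $r$ lies at the boundary of its range, so that some summation over $i\leq r$ or $j\leq r-1$ is empty and the update formula would formally involve a $0/0$. In those cases the algorithm's behaviour reduces either to assigning $0$ to all edges or to routing all of $b_v^-(\theta)$ to a single edge, both of which are trivially rational. Once those boundary cases are addressed, the proof is essentially a one-line "closure of $\IQ$ under rational arithmetic" argument.
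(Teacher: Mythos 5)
The paper provides no explicit proof for this observation; it is stated as immediate by inspection of Algorithm~\ref{Alg:FindExtension}. Your proposal supplies exactly the argument the paper leaves implicit, and it is correct: you derive the closed form $\max\set{z\geq 0\mid k_i(z)\leq\beta}=\gamma_i+\alpha_i(\beta-\beta_i)$ for $\beta\geq\beta_i$, note that every assignment in the algorithm is then a finite composition of additions, subtractions, multiplications and divisions by nonzero rationals, and correctly observe that the branching and the choice of $r$ only involve comparisons, which cannot introduce irrationality. Your caution about boundary cases (empty sums when $r\in\set{0,1}$) is well placed and handled appropriately. This is the same elementary closure-of-$\IQ$ argument the paper is relying on; you have simply written it out in full.
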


\end{document}